\newcommand{\eqnref}[1]{Eq.\ (\ref{#1})}
\newcommand{\Eqnref}[1]{Equation (\ref{#1})}
\newcommand{\eqnrefs}[2]{Eqs.\ (\ref{#1}) and (\ref{#2})}
\newcommand{\step}[1]{step ($#1$)}
\newcommand{\stepd}[2]{steps ($#1$)--($#2$)}
\newcommand{\secref}[1]{Sec.\ \ref{#1}}
\newcommand{\Secref}[1]{Section \ref{#1}}
\newcommand{\secrefs}[2]{Secs.\ \ref{#1} and \ref{#2}}
\newcommand{\Appref}[1]{Appendix \ref{#1}}
\newcommand{\Apprefs}[2]{Appendices \ref{#1} and \ref{#2}}
\newcommand{\figref}[1]{Fig.\ \ref{#1}}
\newcommand{\subfigref}[2]{Fig.\ \ref{#1}(#2)} 
\newcommand{\subFigref}[2]{Figure \ref{#1}(#2)} 
\newcommand{\Figref}[1]{Figure \ref{#1}}
\newcommand{\ErdosRenyi}[0]{Erd\H os-R\'enyi}
\newcommand{\bigO}[1]{$O(#1)$}
\newcommand{\kpowavg}[0]{\langle k\rangle}
\newcommand{\widesubfigsize}[0]{0.68\columnwidth}
\newcommand{\INN}[0]{U}
\newcommand{\INAB}[0]{U(A,B)}
\newcommand{\INAA}[0]{U(A,A)}
\newcommand{\vab}[0]{v(a,b)}
\newcommand{\vsupab}[1]{v^{#1}(a,b)}
\newcommand{\vba}[0]{v(b,a)}
\newcommand{\vaa}[0]{v(a,a)}
\newcommand{\voab}[0]{\overline{v}(a,b)}
\newcommand{\Inab}[0]{u(a,b)}
\newcommand{\Inaa}[0]{u(a,a)}
\newcommand{\Ioab}[0]{\overline{u}(a,b)}
\newcommand{\Ham}{\mathcal{H}}
\newcommand{\etal}{\emph{et al}.}
\newcommand{\ie}{\emph{i.e.}}
\theoremstyle{plain}  \newtheorem{theorem}{Theorem}
\begin{document} 


\title{Local multiresolution order in community detection}

\author{Peter Ronhovde}
\affiliation{Department of Physical Sciences, The University of Findlay, 
1000 N. Main St., Findlay, Ohio 45840, USA}%
\author{Zohar Nussinov}
\affiliation{Department of Physics, Washington University in St. Louis, 
Campus Box 1105, 1 Brookings Drive, St. Louis, Missouri 63130, USA}%

\date{\today}

\begin{abstract}
Community detection algorithms attempt to find the best clusters of nodes 
in an arbitrary complex network.
Multi-scale (``multiresolution'') community detection extends the problem
to identify the best network scale(s) for these clusters.
The latter task is generally accomplished by analyzing community stability
simultaneously for all clusters in the network.
In the current work, we extend this general approach to define local 
multiresolution methods, 
which enable the extraction of well-defined local communities even if the 
global community structure is vaguely defined in an average sense.
Toward this end, we propose measures analogous to variation of information 
and normalized mutual information that are used to quantitatively identify 
the best resolution(s) at the community level based on correlations between 
clusters in independently-solved systems.
We demonstrate our method on two constructed networks as well as a real network 
and draw inferences about local community strength.
Our approach is independent of the applied community detection algorithm save 
for the inherent requirement that the method be able to identify communities 
across different network scales, with appropriate changes to account for how 
different resolutions are evaluated or defined in a particular community detection 
method.
It should, in principle, easily adapt to alternative community comparison measures.
\end{abstract}

\pacs{89.75.Fb, 64.60.aq, 89.65.--s}

\maketitle{}

\section{Introduction} \label{sec:introduction}

Applications of complex network analysis span a wide range 
of seemingly unrelated fields.
In these networks, elements of the model system are abstracted as nodes
(\ie{}, people, atoms, etc.), and edges represent known relationships 
between them (\ie{}, friendships, energies, etc.).
As depicted in \figref{fig:communities}, community detection (CD)
\cite{ref:motternetworks,ref:fortunatophysrep} seeks 
to identify natural groups of related nodes in a network. 
This structure can take the form of social groups \cite{ref:lanccharacter}, 
clusters of atoms \cite{ref:RCHNstructure}, proteins \cite{ref:palla}, 
and much more.
Several categories of common real-world networks are characterized 
in Ref.\ \cite{ref:lanccharacter}.

Conceptually speaking, communities in a network are groups of nodes that are 
strongly connected inside a community but weakly connected between communities.
This basic idea is well established in the literature;
it seems to be easily quantifiable and perhaps even sufficient to rigorously 
define a community if a few small clarifications are specified.  
However, the amazing variety of CD algorithms as well as a limited consensus 
in the field contradicts this na{\" i}ve assessment.

Multiresolution community detection extends the CD concepts to find 
the most natural resolution(s) for a network partition. 
It endeavors to identify the network scales that best represent 
the community structure of a network, effectively distinguishing between 
densely or sparsely connected community members.
Similar to single resolution CD, multiresolution methods must quantitatively 
assess this description in order to obtain an objective measure of the best 
candidate partition and resolution.
A common approach, which is implemented in various ways, is to search for regions 
with stable partitions 
\cite{ref:arenasmultires,ref:kumpulamultires,ref:fenndynamic,ref:rosvallmultires,ref:chengshen},
where the community structure does not change significantly in terms of 
various applied measures of the candidate partitions (e.g., number of communities $q$, 
dynamic flow across the network, information, etc.).
Our global multiresolution algorithm \cite{ref:rzmultires} 
(MRA) asserts that the most natural resolution for a network may be identified 
based on how well independently solved replicas agree on the partition 
as evaluated by information measures \cite{ref:danon,ref:vi}.

Our local multiresolution algorithm (LMRA) 
quantitatively identifies the most natural resolution(s) for individual 
communities regardless of the weak or strong community correlations present 
in the rest of the network.
That is, the LMRA method is able to select optimal CD resolution parameter(s) 
independently for each cluster in a graph.
Our use of the term local implies that the communities are determined 
with respect to parameters defined ``near'' the individual communities 
or nodes (\ie{}, community size, relations among neighboring nodes, etc.).
Here, we solve the full network partition for every resolution, but the 
algorithm trivially adapts to CD algorithms which can identify local 
communities without partitioning the entire network, which is important 
for immense networks such as the World Wide Web.

\section{Background} \label{sec:background}

One of the most popular CD methods defines a cost function that attempts 
to quantitatively encapsulate the essential features for a ``good'' division 
of nodes, thus evaluating the best community structure in an 
objective fashion.
Regardless of the specific form, the task is to optimize the function
for a particular graph to determine the optimal node division(s).
Newman and Girvan \cite{ref:gn} introduced the most common approach by far 
with ``modularity.''
CD methods based on Potts model cost functions, or methods that may be cast 
as such \cite{ref:LPA,ref:barberLPA}, are also common.
Reichardt and Bornholdt (RB) wrote a Potts model \cite{ref:smcd} which they 
specialized into two main cases utilizing null models.
Null models are auxiliary graphs which are selected to evaluate the quality 
of a candidate partition, thus implicitly selecting the ``correct'' scale for a graph.

The choice of a null model inherently, often implicitly, selects a pre-determined scale 
for a network.  The most common null models by far are:
the ``configuration null model'' which sets edge connection probabilities based 
on the current graph, encompassing modularity as a special case,
and the \ErdosRenyi{} null model \cite{ref:reichardt} which defines the connection
probability of all edges to be equally likely based on the graph's average edge density.
Optimization of CD quality functions using these null models was shown to suffer 
from an inherent resolution limit \cite{ref:gn,ref:fortunato,ref:smcd,ref:kumpulaResLim,ref:zhangmodweak}, 
which cannot be resolved by varying the network scale \cite{ref:lancfortunatomod,ref:xiangmultireslimit}. 
This feature hinders the proper identification of some communities in large graphs.

An important general network model is the stochastic block model (SBM) \cite{ref:holland1983stochastic},
which provides a descriptive and generative model of network structure.
Such models can then be used by various CD techniques to identify community structure 
\cite{ref:snijders1997estimation,ref:newman2007mixture,ref:latouche2012variational}.
Decelle et. al. \cite{ref:decelleKMZPT,ref:decelle2011asymptotic} 
and Hu et. al. \cite{ref:hu2012phase,ref:huCDPTlong}
studied phase transitions for SBM-type networks, and Darst et. al. examined related bounds 
on well-defined communities \cite{ref:darstCDSBMdef}.
Extensions have moderated the internally homogenous nature of SBM graphs
to improve its performance when modeling realistic networks with more varied
degree distributions \cite{ref:karrer2011stochastic,ref:zhu2014oriented,ref:yan2014model}. 
Often, network models imply or impose an expected structure, but an adaptive method 
based on mixture models \cite{ref:newman2007mixture} allows for detection of unspecified 
types of structure in a variety of network classes.

Potts model and related CD approaches include \cite{ref:blatt,ref:ispolatov,
ref:hastings,ref:barberLPA,ref:traagPRE,ref:traaglocalscope,ref:rzmultires,
ref:rzlocal}, and Refs.\ \cite{ref:traagPRE,ref:traaglocalscope} generalized 
the RB Potts models in \cite{ref:smcd,ref:reichardt}, respectively.
Our previous work \cite{ref:rzmultires,ref:rzlocal} advanced a local Potts 
model, and local models were studied in more detail in \cite{ref:traaglocalscope}.
Other local methods include \cite{ref:bagrowboltlocal,ref:traaglocalscope,
ref:LPA,ref:palla,ref:lanc,ref:rzlocal,ref:havemannlocalmultires,ref:zhao2011community},
including variants of modularity \cite{ref:clausetlocal,ref:muff}.
Potts systems in CD can experience disorder from thermal effects 
\cite{ref:huCDPTsgd,ref:huCDPTlong}, extraneous edges (noise) 
\cite{ref:rzlocal,ref:huCDPTsgd,ref:goodMC,ref:nadakuditiSBM,ref:huCDPTlong}, 
and system size \cite{ref:huCDPTlong,ref:rhzglobaldisorder}. The selected 
model can also exacerbate disorder effects \cite{ref:danonhetero,ref:goodMC}.

Some CD methods, such as modularity, implicitly select a single ``objective'' 
scale for a candidate community division (e.g., Refs.\ \cite{ref:gn,ref:LPA}),
but certain networks such as hierarchical systems inherently possess multiple 
natural scales.
Hierarchical clustering is an early multiscale method \cite{ref:everittHC},
but it \emph{forces} hierarchical structure on every system without evaluating 
the relevance of the solved partitions.
That is, it assigns but does not quantititatively evaluate whether the hierarchical 
structure is a good multi-scale partitioning scheme for the graph.
More recent hierarchical approaches include 
\cite{ref:clausetmissinglinks,ref:rosvallmultires,ref:salespardo,ref:blondel,
ref:shenmodhier,ref:ahnlinkmultiscale},
and Ref.\ \cite{ref:ravaszbarabasi} relates the presence of hierarchical 
features to a scale-free-network property.

\begin{figure}
\centering
\includegraphics[width=0.6\columnwidth]{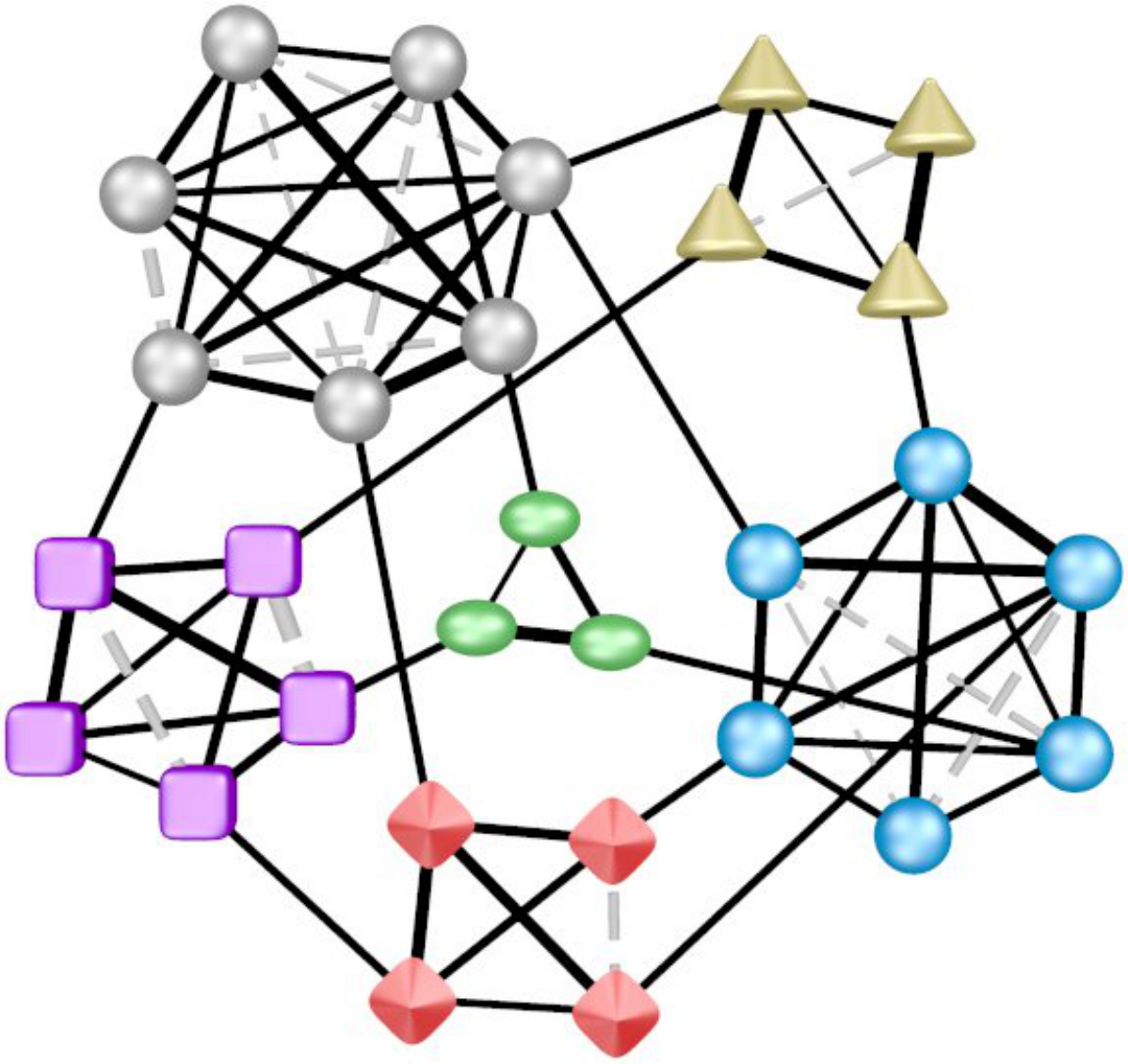}
\caption{(Color online) The figure illustrates a network partition where 
communities are represented by distinct node shapes and colors.
``Friendly'' or ``cooperative'' relations are depicted as solid, black lines
which are modeled as ferromagnetic interactions with $w_{ij}>0$ in \eqnref{eq:ourmodel}. 
``Adversarial'' or ``neutral'' relations (some are omitted for clarity) are depicted 
as gray, dashed lines.
These are modeled as antiferromagnetic interactions with $u_{ij}>0$.
In both cases, the line thickness indicates a relative interaction strength.
With \eqnref{eq:ourmodel}, neutral interactions (unconnected or unspecified 
relations) are repulsive in nature since they work like adversarial relations 
that break up well-defined communities.}
\label{fig:communities}
\end{figure}

Ideally, a CD algorithm should be able to determine all relevant scales 
of a network. 
This problem is the impetus for developing quantitative multiresolution 
network analysis.
Multiscale capable methods that utilize cost functions include 
\cite{ref:arenasmultires,ref:lanc,ref:fenndynamic,ref:smcd,ref:reichardt,
ref:rzmultires,ref:muchaSCI}.
The RB Potts model weighs the contribution of the null model \cite{ref:smcd}, 
allowing the cost function to span different network scales.
Other methods encompass varied forms of analysis \cite{ref:zhangsmall,ref:chengshen,
ref:lancstatCD,ref:shenccmmultiscale} to attack the problem.

Even with tunable CD cost function parameters, the question of which resolutions 
are the most \emph{natural} scales for a network is not necessarily answered.
Thus, multiresolution methods sought to identify the best scale(s)
\cite{ref:arenasmultires,ref:kumpulamultires,ref:rzmultires} for a network 
without imposing, or arbitrarily selecting, a preferred network scale.
The most common method detects stable resolutions in terms of network 
and model resolution parameters \cite{ref:lanc,ref:arenasmultires,ref:rzmultires}.
Our multiresolution replica algorithm calculated information-based 
correlations \cite{ref:rzmultires} among independent copies of the same system 
to quantitatively compare the partition strength across all relevant network 
scales.

To our knowledge, all current multiresolution approaches analyze the network 
robustness in an ``average'' sense across all communities in a network 
(see \secrefs{app:CDalgorithm}{app:MRA}), but \emph{the best local communities 
will not necessarily coincide at the same resolution in general}.
For example, communities in large networks may experience a ``lost-in-a-crowd'' 
effect which can obscure locally well-defined communities and limit the 
ability of global multiresolution methods (see \Appref{app:localglobal}) 
to accurately isolate their structure.
In some models, the effect can be exacerbated by heterogeneously-sized 
community structure \cite{ref:danonhetero,ref:shenlaplacianPRE} depending 
on the network scale.
Conversely, a global partition may be strongly defined for most communities, 
but a given cluster may still be weakly defined.

The LMRA method combines the benefits of multiresolution analysis with the local 
identification of community structure.
While each community exists and is defined in the context of its own network
neighborhood, we ideally prefer to identify strong communities independent 
of the global system.
That is, we allow each community to stand on its own in terms of the evaluation 
of its community structure.
Somewhat related efforts to the current work include detecting ``unbalanced'' communities 
in a network partition \cite{ref:zhangunbalanced} and an efficient seed-expansion 
method by Havemann \etal{} \cite{ref:havemannlocalmultires} which could, in principle, 
be modified for other local cost functions.

Community detection methods utilizing quality functions generally include, 
or have been extended by subsequent work to include, weight parameters that serve 
vary the target resolution, so these methods will naturally adapt to the algorithm 
described in this work.
Other disparate approaches include flow analysis \cite{ref:chengshen,ref:rosvallmultires}, 
spectral partitioning \cite{ref:shenchengspectral}, 
Bayesian analysis \cite{ref:Hofman2008}, 
dynamics \cite{ref:gudkov}, network synchronization \cite{ref:boccaletti}, 
$k$-means \cite{ref:kanungo2002}, and others.

Some of the above methods do not easily or naturally incorporate an explicit 
resolution parameter---relying rather on input community parameters, dynamics, 
or other measures of stability to identify the best resolution(s).
Several of these require the number of communities $q$ as an input parameter, 
or equivalently for our analysis, they may detect $q$ based on the network 
(e.g., eigenvalue gaps). 
Then, $q$ is passed 
to a clustering algorithm such as $k$-means.
In either of these cases, $q$ can serve as the resolution parameter for our 
LMRA algorithm, particularly for larger networks where small changes in $q$ will 
represent correspondingly small changes in the overall community structure.
For other cases, the precise implementation will be more model dependent, but 
the current LMRA algorithm only needs to receive the community partitions over 
a range of network scales, regardless of how these scales are detected or defined 
in a particular CD algorithm.

The remainder of the work is organized as follows:
we introduce our community detection Potts model in \secref{sec:hamiltonian}.
\Secref{sec:communitydefs} elaborates on concepts of community definitions, and 
\secref{sec:resolution} describes the notion of a partition resolution.
We suggest a local, community-based analogy to the 
variation of information (VI) and normalized mutual information (NMI) measures 
in \secref{sec:clusterVI} which we apply in \secref{sec:LMRA} for our local 
multiresolution algorithm.
\Secref{sec:examples} illustrates the approach with three examples, and we 
conclude in \secref{sec:conclusion}.
\Appref{app:localglobal} explains the context of local and global terminology 
used in this paper.
Finally, \Apprefs{app:semimetricproof}{app:altCVI} comment on the semi-metric 
property of our cluster measure as well as a couple alternative approaches 
to local cluster comparisons in an information-theoretic analogy.

\section{Potts Model Hamiltonian} \label{sec:hamiltonian}

Regardless of the underlying solution method, the ultimate goal of any 
community detection partitioning algorithm is a Potts-type assignment 
$i \to \sigma_{i}$  for each node $i$ into one of $q$ different 
clusters where $\sigma_i$ may be regarded as a Potts-type variable.
Toward this end, we focus directly on Potts variables.
Some methods extend this notion to include overlapping memberships
(e.g., Refs.\ \cite{ref:palla,ref:lanc,ref:ballMLlinks,ref:havemannlocalmultires}) 
where nodes may be shared between, or fractionally assigned to,  different 
communities.
In these cases, the community assignment becomes a vector quantity
for each node as opposed to a single integer value.

We identify community partitions by minimizing (see \secref{app:CDalgorithm}) 
a general CD Potts model
\begin{equation}
	\Ham(\{ \sigma \} ) = -\frac{1}{2} \sum_{i\neq j}
	    \big[ w_{ij} A_{ij} - \gamma u_{ij} \left( 1 - A_{ij} \right) \big] 
	    \delta ( \sigma_i,\sigma_j )
	\label{eq:ourmodel}
\end{equation}
which we refer to as an ``absolute'' Potts model (APM) since it is not 
defined relative to a random null model.
Assuming $N$ nodes, $\{A_{ij}\}$ is the adjacency matrix where $A_{ij}=1$ 
if nodes $i$ and $j$ are connected and is $0$ if they are not connected.
The spin variable $\sigma_{i}$ identifies the community 
membership of node $i$ in the range $1 \le \sigma_{i} \le q$ 
where node $i$ is a member of community $k$ if $\sigma_{i}=k$.
The Kronecker delta $\delta (\sigma_i,\sigma_j) = 1$ 
if $\sigma_i = \sigma_j$ and $0$ when $\sigma_i \neq \sigma_j$.
By virtue of the Kronecker delta, interactions are 
limited to spins in the same community, and they are ferromagnetic 
in nature if nodes $i$ and $j$ are connected and antiferromagnetic if they 
are not connected.  
The global resolution parameter $\gamma$ 
scales the relative effects of the ferromagnetic $\{w_{ij}\}$ and antiferromagnetic 
$\{u_{ij}\}$ interactions, effectively allowing the model to vary the network 
resolution.
A network resolution roughly corresponds to the typical community 
size, but a better characterization may be the typical community edge density 
(see \secref{sec:resolution}).

In \eqnref{eq:ourmodel}, $\{w_{ij}\}$ and $\{u_{ij}\}$ are the edge weights 
for ``cooperative'' and ``neutral'' or ``adversarial'' relations, respectively,
that are defined by the graph under consideration.  
These weights are based on known or estimated 
relations between the elements as recorded or defined by the person constructing 
the network.
We refer to edges defined by $\{w_{ij}\}$ as cooperative since these lower
the community energy (\ie{}, they reinforce the community).
Relations described by $\{u_{ij}\}$ raise the energy (\ie{}, they work 
to break up the community). In unweighted graphs, $u_{ij}=w_{ij}=1$.

Both adversarial and neutral relations serve to break up community structure, 
so the APM \cite{ref:rzmultires,ref:rzlocal} penalizes neutral relations 
much like one would expect for adversarial relations
(as opposed to zero energy contributions in a purely ferromagnetic Potts 
model \cite{ref:LPA,ref:blatt}).
This property avoids a trivial ground state solution (\ie{}, a completely 
collapsed system for every graph) present in the purely ferromagnetic Potts model.
In essence, the energy penalty for adversarial relations provides a
``penalty function'' as an alternative to how modularity resolved the trivial-ground-state
problem \cite{ref:gn} (\ie{}, by comparing a community to an average, random distribution 
of edges in the graph).
Ref.\ \cite{ref:traagPRE} generalized a common Potts model variant 
\cite{ref:smcd} to include negative link weights.

Despite the global energy sum in \eqnref{eq:ourmodel}, the model is a local
measure of community structure (see \Appref{app:localglobal}) because all node 
assignments are made strictly by evaluating local network parameters 
\cite{ref:rzlocal,ref:traaglocalscope}.
For simplicity, our current analysis will focus on undirected, static networks; 
but both \eqnref{eq:ourmodel} and the LMRA method 
in this work are suitable for general weighted, directed, 
and dynamic (time-dependent) networks.

\section{Community detection concepts} \label{sec:CDconcepts}

A precise definition of community structure in networks is still not agreed upon 
in the literature.  Generally speaking, communities consist of nodes
which are strongly connected within communities, in terms of the number or weight of edges, 
but nodes in different communities are more sparsely connected.
When constructing the quantitative community evaluation, there is also a question as to whether 
the ``inside'' versus ``outside'' degree comparison is summed over \emph{all} external 
communities \cite{ref:radicchi,ref:cafieristrong} or is evaluated only between 
\emph{individual pairs} of communities \cite{ref:rzlocal,ref:rzmultires,ref:darstCDSBMdef}.
Our model applies the latter case.

\subsection{Community definitions} \label{sec:communitydefs}

Communities in social networks are the prototypical CD model.
People often have many more ``external'' relationships of varying strengths
than they do within a local group in which they are a member.
For example, an individual may associate with a chess club, but his network 
of friendships may extend to dozens or even hundreds of people beyond this 
local group.

In many network approximations (e.g., the ubiquitous Zachary karate club 
network \cite{ref:zachary}), these ``extra'' edges are omitted as extraneous 
for the reduced-size network (\ie{}, no need to solve a large graph if we are 
only interested in the local club).
If we were to create a more comprehensive, expanded network and re-partition the system, 
the additional noise induced by including these previously external relations 
should not intuitively disturb the original communities, provided they are still 
strongly defined relative to any new structure(s) in the expanded system.
This intuitive concept is overlooked by some CD methods
because the quantitative evaluation of community structure in the expanded system 
directly changes by virtue of the size (nodes, edges) increase alone, regardless 
of whether the local relations in and around a given community are affected.

Ref.\ \cite{ref:radicchi} proposed definitions for ``strong'' and ``weak'' 
communities: in a strong community, \emph{all} nodes have more internal 
than external edges, and a weak community is one where the \emph{sum} over 
all internal edges exceeds the sum of the external edges.
A large social network may not have ``strong'' or even ``weak'' communities 
in the sense of the proposed definitions, but the communities are still 
well-defined empirically.
Thus, these community definitions \cite{ref:radicchi} neglect certain important 
(high noise) and intuitive \cite{ref:zhangunbalanced,ref:zhangmodweak} cases.

In fact, several CD methods were compared by Lancichinetti and Fortunato 
\cite{ref:lancLFRcompare} where most, if not all, communities were weakly-defined
in the sense proposed by Ref. \cite{ref:radicchi}, but many of the algorithms 
were nevertheless able to easily identify these purported weak communities.
That is, the best methods easily solved the benchmark graphs \cite{ref:lancbenchmark} 
well into regions where \emph{all} nodes (on average) have more external than internal 
edges.
This definition is apparently not characteristic enough to describe weakly-defined 
communities based on the capabilities of some CD algorithms.
Otherwise, we would intuitively think that the detection boundary would lie somewhere
near this threshold.
The crux is that the proposed definition considers the sum of external community 
edges, but identifiable communities in many CD algorithms seem to be more aligned 
toward a less restrictive definition of what a weak community is.

With these examples in mind, it seems appropriate, at least in social and 
related networks, to favor cost functions or analysis methods that utilize 
\emph{pairwise} community comparisons when evaluating node membership 
robustness.
This assumption inherently affects the notion of well-defined partitions,
communities, and individual node memberships \cite{ref:rzlocal,ref:darstCDSBMdef}.
Another approach that may be fruitful is to pursue a community definition
based on \emph{edge density} as opposed to inner and outer community edge 
\emph{counts}, but further quantitative analysis is beyond the scope of the current 
work.

\subsection{Resolution} \label{sec:resolution}

Intuitively, the resolution of a community partition is the typical 
strength of intracommunity connections. 
This concept can be quantified by the typical edge density $p$ of the 
communities in the partition. 
Communities with significantly different edge densities are qualitatively 
different.
For example, social networks may naturally display communities of ``close friends'' 
or ``acquaintances.'' 
Close friends are generally very likely to know most or all members of the 
same group ($p$ is high) where acquaintances are much less likely to know 
each other ($p$ is lower).

As a specific example, a community where each person has five friendships 
in a group of six is a clique. 
That is, every node is connected to all others in the group.
However, if we consider the same five friendships in a group of $100$, 
it may not even qualify as a community of social acquaintances.
These two clusters have an identical edge count, but they represent drastically
different \emph{types} of communities (\ie{}, different network scales).
As mentioned above, the inner and outer edge count is not sufficient to quantitatively 
describe a cluster.
This distinction highlights the importance of a penalty term in various CD 
quality functions.

In practice, a partition will contain communities with a range of edge densities, 
but intuitively, the differences should not be drastic at a given resolution since 
the partition should manifest communities with similar levels of association.
Continuing with the social network example, mixing communities of close friends 
and acquaintances in the same partition makes less sense than a partition
that indicates close friendships in most communities.
Given this argument, it is reasonable that a given $\gamma$ in \eqnref{eq:ourmodel}
could be applied to the whole graph and provide meaningful partition information 
in general, but this manuscript illustrates a method to enhance the analysis 
of complex networks by finding locally optimal resolutions at the community level.

We specialize the edge density analysis below to unweighted graphs for clarity, 
but Ref.\ \cite{ref:rzlocal} discusses weighted graphs in the same context.
The edge density of community $a$ is $p_a = \ell_a/\ell_a^\mathrm{max}$ 
where $\ell_a$ is the number of edges in the community;
$\ell_a^\mathrm{max}=n_a(n_a-1)/2$ is the maximum number of possible
edges in community $a$ with $n_a$ nodes.
The global resolution parameter $\gamma$ in \eqnref{eq:ourmodel} requires 
a \emph{minimum} edge density for each community in the partition,
\begin{equation} 
  p_\mathrm{min}\ge \frac{\gamma}{\gamma+1},
  \label{eq:gammadensity}
\end{equation}
which we calculated by determining the minimum density configuration that 
yields an energy of zero or less.
Without $\gamma$, the model can only solve a particular implicit resolution 
for all systems, $p_\mathrm{min}^{\gamma=1} \ge 1/2$. 
Other models implement similar weight parameters 
\cite{ref:reichardt,ref:smcd,ref:arenasmultires,ref:lanc,ref:barberLPA,
ref:traagPRE,ref:traaglocalscope}
which allow them to solve distinct network scales.

While \eqnref{eq:gammadensity} provides a convenient lower bound on the 
minimum community edge density, optimizing \eqnref{eq:ourmodel} implements 
the constraint implied in \eqnref{eq:gammadensity} by enforcing a stronger 
requirement.
That is, it merges network elements (a node to a community or two 
communities) if the edge density between them exceeds $p_\mathrm{min}$.
Thus, one is assured that \emph{all sub-elements of a community are connected 
by at least} $p_\mathrm{min}$.
This effectively avoids resolution-limit-type effects by acting locally 
\cite{ref:rzlocal}.

\subsection{Heuristic multiresolution method}  \label{sec:globaldiscussion}

The local and global multiresolution methods are discussed in detail later, 
but it is relevant to take a moment consider the basic function of our multiresolution 
approach and its connection to the underlying CD solver.
Briefly, the global method independently solves for the community structure 
in a set of $r$ replicas of the network.  
It then uses information-based measures (see \secref{sec:clusterVI}) to evaluate 
the partition similarities, asserting that the best resolutions have strongly
correlated partitions among the replicas.

The exact partitions determined among the replicas depend on the efficacy 
of the particular CD algorithm used to determine the individual partition solutions.
While \eqnref{eq:ourmodel} 
has a well-defined ground state of partitioned nodes that depends on the weight
parameter $\gamma$, detecting transitions with the MRA global multiresolution 
method still depends in some sense on imperfect CD solutions provided by our robust, 
but nevertheless greedy, CD algorithm.

More specifically, the MRA algorithm takes advantage of the fact it is significantly
more difficult for a heuristic CD algorithm to navigate the energy landscape 
when competing partitions have comparable energies.
This condition could occur when two or more good partitions exist at a single 
resolution, but it appears to occur more often in transition regions between different 
levels or types of structure.
In terms of CD model parameters, these different levels of multi-scale community
structure have a minimum energy at different resolutions; but in the 
transition region, the minimum for each distinct division is comparable, 
causing different replicas of the CD algorithm to be more easily trapped in unrelated 
areas of the energy landscape.
The MRA algorithm attempts to detect this variation in partitions 
and uses it to identify the best resolutions.

If we consider a perfect CD algorithm which always finds the ground state 
of the cost function regardless of any model parameters or the complexity 
of the problem, most of the variation among partition solutions in the replicas 
mentioned above will disappear (see comment on degeneracies below).
That is, the more perfect the solver, the sharper the detected transition would become 
in terms of any variation in the CD model's resolution parameter(s).
With a perfect solver or an easy CD problem, the transition is essentially discrete.

The primary partition similarity measures of our MRA algorithm, variation 
of information $V$ and the normalized mutual information $U$, would only observe 
strong agreement among the replicas across this transition
(\ie{}, no observable extrema indicating structural shifts or preferred community 
structure), but changes in the secondary measures 
(e.g., Shannon entropy $H$, mutual information $I$, and the number of clusters $q$) 
would still mark the transition, albeit without quantitatively indicating the quality 
of the candidate partitions (the length of the stable region can still be a qualitative
indicator of partition stability).
Thus, a perfect CD algorithm illustrates that the MRA algorithm, as currently employed, 
relies to some extent on the imperfect solutions provided by the underlying CD solver used.
A similar argument would apply to the local multiresolution method discussed in the current work.

When using a perfect CD solver, our global multiresolution method would measure the 
degeneracy of the ground state energy at a given resolution.
One could extend our global multiresolution method by comparing partitions at nearby 
resolutions to better evaluate the stability of the partitions over a range of resolutions.
Under this construction, the MRA algorithm would still be able to evaluate partition stability 
even with a perfect CD solver at the cost of increased computational effort based on time
spent comparing replicas across a range of resolution values.

\section{Information Measures} \label{sec:clusterVI}

Information measures have received broad acceptance for comparing 
candidate CD partitions.
Commonly used measures include the variation of information  
\cite{ref:vi} and normalized mutual information \cite{ref:danon}.
We leveraged the measures in \secref{sec:NMVI} to identify the best 
global network scales via a multiresolution replica method \cite{ref:rzmultires} 
(see \Appref{app:localglobal} and \secref{app:MRA}).

\subsection{Partition correlations} \label{sec:NMVI}

To define VI and NMI, we select a random node from partition $A$ and note 
that it has a probability $n_k/N$ of being in community $k$ where 
$n_k$ is the number of nodes in the community and $N$ is the total 
number of nodes in the system.  The Shannon entropy is
\begin{equation}
  H(A) = -\sum_{a=1}^{q_A} \frac{n_k}{N}\log\frac{n_k}{N}
  \label{eq:shannonentropy}
\end{equation}
where $q_A$ is the number of communities in partition $A$.
The mutual information $I(A,B)$ between two partitions $A$ and $B$ evaluates 
how much we learn about $A$ if we know $B$.
For our application, contending partitions ($A$, $B$, $\ldots$, $Z$)
are independently solved copies of the system.

We define a ``confusion matrix'' for partitions $A$ and $B$ which specifies 
how many nodes $n_{ab}$ in community $a$ of partition $A$ are also in community 
$b$ of partition $B$. Mutual information is 
\begin{equation}
  I(A,B) = \sum_{a=1}^{q_A}\sum_{b=1}^{q_B} \frac{n_{ab}}{N} 
  \log\left(\frac{n_{ab} N}{n_a n_b}\right)  
  \label{eq:mi}
\end{equation}
where $n_a$ ($n_b$) is the number of nodes in community $a$ ($b$) of partition 
$A$ ($B$).  Also, $I(A,B)=0$ when $n_{ab}=0$.
The variation of information $V(A,B)$ metric is then
\begin{equation} 
  V(A,B) = H(A) + H(B) - 2I(A,B)
  \label{eq:vi} 
\end{equation}
which measures the information ``distance'' between partitions $A$ 
and $B$ with a range of $0\le V(A,B)\le\log N$.
We use base $2$ logarithms.

A normalized information measure \cite{ref:danon} 
of partition similarity is
\begin{equation}
  \INAB = \frac{2I(A,B)}{H(A)+H(B)}.
  \label{eq:NMI}
\end{equation}
NMI and VI are closely related,
$\INAB = 1 - V(A,B)/\left[ H(A)+H(B) \right]$.
While NMI is a valuable measure of partition similarity, it is not a formal 
metric (see \Appref{app:semimetricproof}) on partitions $A$ and $B$ in part 
because $\INAA = 1$ not $0$.

Some researchers prefer NMI as a normalized measure where it maps uncorrelated 
partitions to $0$ and perfectly correlated partitions to $1$.
Others prefer the stricter metric properties of VI, but VI can also be normalized,
if desired.
In most cases, the two measures provide the same information, 
but occasionally distinctions between them can be observed.
For example, we previously showed \cite{ref:rzlocal} that maxima in VI and 
minima in NMI mark structural transitions in partitions, that provides information 
about the average intercommunity edge density, but only VI clearly shows an extremum 
before the system collapses into disjoint communities as $\gamma$ is lowered
in \eqnref{eq:ourmodel}.

\subsection{Local information analogies} \label{sec:CNMVI}

When defining a cluster comparison measure, we wish to maintain consistency 
with the trend in CD towards information-theoretic partition evaluations.
Toward this end, we consider the cluster embedded in the full system of $N$ 
nodes (see also later comments), where $N$ is the number of nodes in the network.
This comparison gives our measure a context for the resulting cluster-level 
entropy or information calculations based on associated partition-of-unity 
probabilities.

\begin{figure}
\centering
\includegraphics[width=0.975\columnwidth]{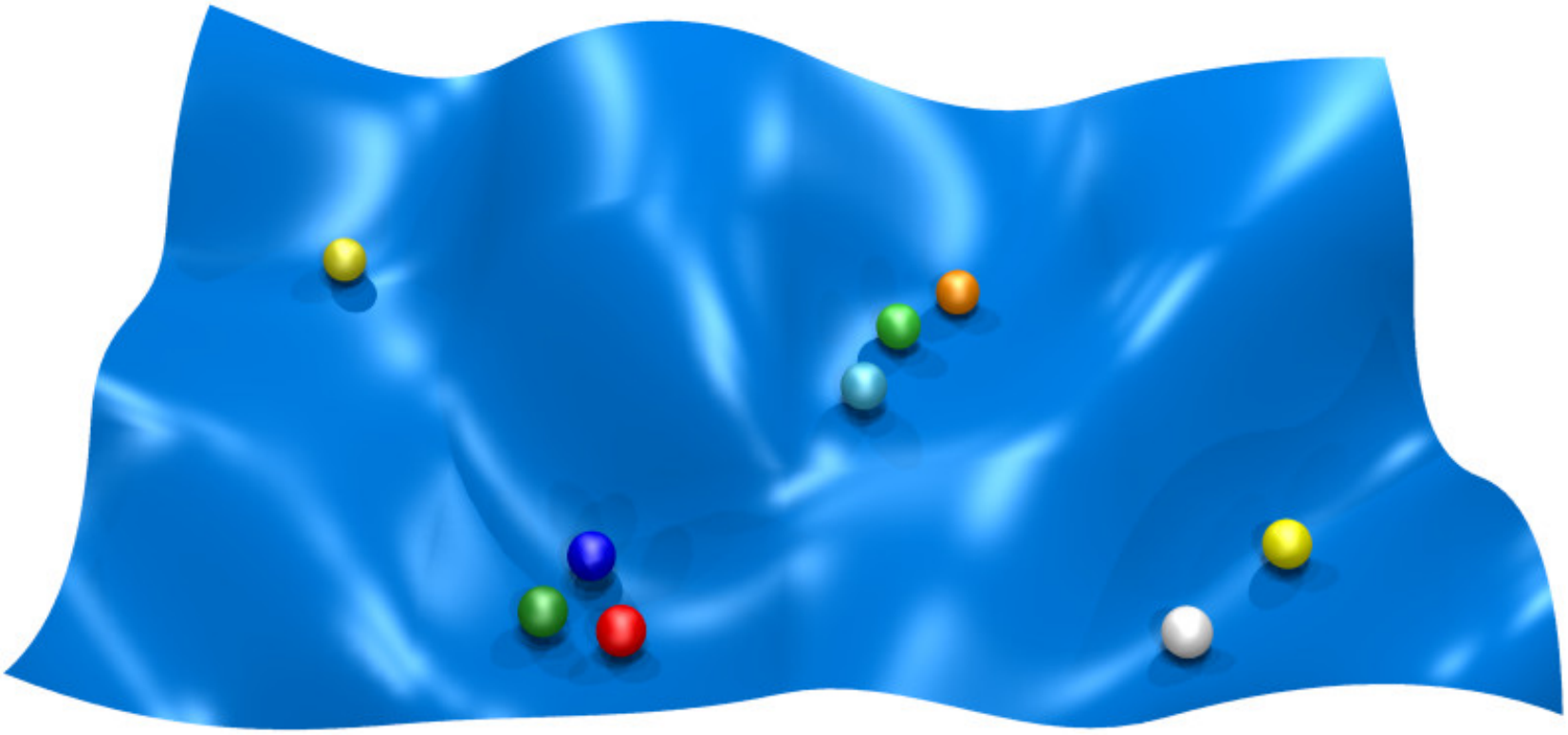}
\caption{(Color online) The figure schematically depicts $r$ independent
solvers (``replicas'') as spheres navigating the energy landscape of
\eqnref{eq:ourmodel}.
Stronger agreement among the replicas, as measured by information correlations
in \secref{sec:NMVI},
indicates a more stable, well-defined global partition (see text).
In this manuscript, we demonstrate that local communities may be strongly defined 
even if all the communities in the global system are weakly correlated 
(see \figref{fig:LMRA}).}
\label{fig:MRAlandscape}
\end{figure}

From \eqnref{eq:shannonentropy}, the entropy contribution of community 
$a$ in partition $A$ is
\begin{equation}
  H_a(A)\equiv -\frac{n_a}{N}\log\left(\frac{n_a}{N}\right)
  \label{eq:Ha}
\end{equation}
where $n_a$ is the number of nodes in community $a$.
Similarly, \eqnref{eq:mi} indicates the mutual information contribution 
when comparing cluster $a$ in partition $A$, $(a,A)$, to cluster $b$ in partition 
$B$, $(b,B)$,
\begin{equation}
  I_{ab}(A,B)\equiv \frac{n_{ab}}{N} \log\left(\frac{n_{ab} N}{n_a n_b}\right).
  \label{eq:IabAB}
\end{equation}
In analogy with \eqnref{eq:vi}, we introduce the \emph{cluster variation 
of information} (CVI) $\vab$
\begin{equation} 
  \vab\equiv H_a(A) + H_b(B) - 2\Inab.
  \label{eq:CVI}
\end{equation}
CVI exhibits appealing ``distance-like'' properties of a semi-metric 
for comparing clusters $(a,A)$ and $(b,B)$ (see \Appref{app:semimetricproof}
for a trivial proof).
Summing over all pairs of clusters $a$ and $b$, VI is related to CVI by 
\begin{equation}
  V(A,B) = \sum_{a}^{q_A}\sum_{b}^{q_B} \vab - (q_B-1)H(A) - (q_A-1)H(B).
  \label{eq:vitoviab}
\end{equation}
\Appref{app:altCVI} provides additional remarks.

From \eqnref{eq:NMI}, we introduce the natural \emph{cluster normalized 
mutual information} (CNMI) analogy
\begin{equation}
  \Inab\equiv \frac{ 2 n_{ab}\log\left(\frac{n_{ab} N}{n_a n_b}\right)}
                            { n_a\log\left(\frac{N}{n_a}\right) 
                             +n_b\log\left(\frac{N}{n_b}\right) }.
  \label{eq:CNMI}
\end{equation}
While CNMI is not a metric
[in part because $\Inaa=1$ not $0$], it has the same intuitive property 
of cluster similarity that makes NMI attractive for partition comparisons.
\Eqnref{eq:CNMI} is essentially a normalized variant of CVI, 
$\Inab = 1 - \vab /\left[ H_a + H_b \right]$.
On smaller networks, CVI provides a clearer picture of transitions
with its distance-like semi-metric properties, but CNMI is more easily 
evaluated for larger networks because variations in CVI become small as 
$N$ becomes large.

If we were to compare larger (multi-cluster) sub-graphs, a natural 
approach is to cut the subgraph from the whole network and compare the 
reduced-size partitions. 
This breaks down at the cluster level because there is no partition-of-unity
associated with an individual cluster as used to define NMI \cite{ref:danon}
or VI \cite{ref:vi} for community detection.
Implementing an arbitrary measure for clusters is difficult, 
so we chose to consider the cluster comparisons in the context of a larger
network of nodes.
Strictly speaking we do not need to use the true size of the network for our 
cluster comparisons.
Rather, we could use some other $N'\neq N$, but it is conceptually appealing
to evaluate a cluster in the context of the full network.

\begin{figure*}[t]
\centering
\includegraphics[width=1.4\columnwidth]{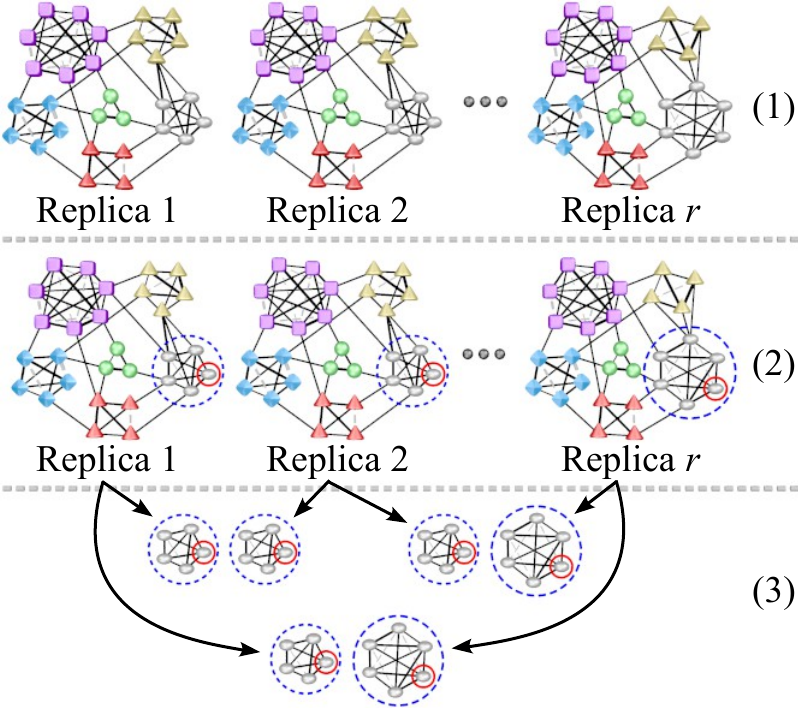}
\caption{(Color online) The figure illustrates our local multiresolution
algorithm discussed in detail in \secref{sec:LMRA}.
The graphs include ferromagnetic [``cooperative'' with $w_{ij}>0$ 
in \eqnref{eq:ourmodel}] relations depicted by solid, black lines and 
antiferromagnetic (``neutral'' or ``adversarial'' with $u_{ij}>0$) 
interactions depicted by gray, dashed lines.
The line thickness indicates the relative interaction strength, and we 
omit intercommunity adversarial and neutral relations for clarity.
The following steps are completed for each $\gamma$ used in \eqnref{eq:ourmodel}:
In step ($1$), we independently solve a series of $r$ ``replicas'' of the 
community detection problem (although we could, in general, improve the 
efficiency by solving only the local communities embedded in the network).
Step ($2$) identifies the target node(s) of interest (solid red circles) 
and their corresponding parent clusters (blue dashed circles).
Step ($3$) uses \eqnrefs{eq:CVI}{eq:CNMI} to calculate correlations among 
all pairs of parent clusters in order to determine the community robustness 
at the current resolution specified by $\gamma$ in \eqnref{eq:ourmodel}.}
\label{fig:LMRA}
\end{figure*}

\section{Local multiresolution algorithm} \label{sec:LMRA}

Our local multiresolution algorithm identifies relevant local 
multiresolution order, meaning well-defined local communities.
We invoke $\vab$ in \eqnref{eq:CVI} and $\Inab$ in \eqnref{eq:CNMI} 
to compare local clusters $a$ and $b$ across $r$ ``replicas'' 
(independent solutions).
\Figref{fig:MRAlandscape} depicts solutions with the global MRA \cite{ref:rzmultires} 
algorithm given in \secref{app:MRA}.
The LMRA method depicted in \figref{fig:LMRA} extends the MRA method 
by incorporating comparisons between specific clusters.

\subsection{Community detection algorithm} \label{app:CDalgorithm}

We begin by introducing our greedy CD algorithm which dynamically moves 
nodes into the community that best lowers the local energy according 
to \eqnref{eq:ourmodel} given the current state of the system $\{\sigma_i\}$.
The process iterates through the nodes until no further nodes are available.
Typically, $O(10)$ iteration cycles through all $N$ nodes are required 
except in rare instances that lie in or near the ``hard'' (or ``glassy'') 
phase \cite{ref:rzlocal,ref:huCDPTsgd,ref:huCDPTlong}.\\

The CD steps are:

(0) \emph{Initialize the system}. 
Initialize the connection matrix $A_{ij}$ and edge weights $w_{ij}$ and 
$u_{ij}$.  Determine the number of optimization trials $t$.

(1) \emph{Initialize the clusters}. 
The initial partition is usually a ``symmetric'' state wherein each node 
is the lone member of its own community (\ie{}, $q_0=N$).

(2) \emph{Optimize the node memberships}. 
Sequentially select each node, traverse its neighbor list,
and calculate the energy change that would result if it were moved 
into each connected cluster (or an empty cluster).
Immediately move it to the community which best lowers the energy 
(optionally allowing zero energy changes).

(3) \emph{Iterate until convergence}. 
Repeat \step{2} until a (perhaps local) energy minimum is reached 
where no nodes can move. 

(4) \emph{Test for a local energy minimum}. 
Merge any connected communities if the combination lowers the summed 
community energies. 
If any merges occur, return to \step{2} and attempt additional node-level 
refinements.
     
(5) \emph{Repeat for several trials}. 
Repeat \stepd{1}{4} for $t$ independent trials and select the 
lowest energy result as the best solution. 
By a trial, we refer to a copy of the network in which the 
initial system is randomized in a symmetric state with a different
node order.\\

The optimal $q$ is usually dynamically determined by the lowest energy state
although the algorithm can also fix $q$ during the dynamics.
Empirically, the computational effort scales as \bigO{tL^{1.3}\log k} 
where $k$ is the average node degree and $\log k$ is from a binary search 
implemented on large sparse matrix systems.
This greedy variant can accurately scale to at least \bigO{10^9} edges 
\cite{ref:rzlocal}.
We can extend it with a stochastic heat bath \cite{ref:huCDPTsgd} solver 
or a simulated annealing algorithm \cite{ref:smcd} at the cost of significantly 
increased computational effort, but the greedy variant performs exceptionally
well on many systems.

\subsection{Global multiresolution algorithm} \label{app:MRA}


In order to set the stage for introducing our local multiresolution algorithm,
we first discuss the global algorithm and some of its features.
As depicted in \figref{fig:MRAlandscape}, our multiresolution algorithm 
iteratively applies the CD algorithm in \secref{app:CDalgorithm} 
to quantitatively evaluate the most stable community partitions over 
a range of network scales.
After convergence, these replicas sample the local energy minima of the 
energy landscape, giving an estimate of its associated complexity.

In its basic form, the global MRA algorithm iteratively solves the CD problem 
for a graph over a range of $\gamma$ in \eqnref{eq:ourmodel} and evaluates 
the average strength of the partition correlations to find more stable partitions.
This process quantitatively estimates the robustness of the identified 
partitions by sampling the complexity of the energy landscape.
Previous work by the authors \cite{ref:rzmultires} on the Lancichinetti-Fortunato-Kert{\'e}sz
\cite{ref:lancbenchmark} as well as other synthetic and real benchmarks \cite{ref:rzlocal} 
show that these strongly correlated regions regularly correspond to the known, 
accurate partitions.

Generally speaking, poorer correlations occur when there are contending 
partitions of comparable strength [\ie{}, the energy difference of the applied 
cost function is near zero], the resolution is inside a glassy phase 
(extraneous intercommunity edges obscure the dynamic process of locating the 
best solution), or the graph is more random in nature.  
In the case of contending partitions, local multiresolution methods, such as 
the one presented in the current work, may be able to reliably extract the 
well-defined communities.

We quantify the partition correlations using information theoretic (or other 
appropriate) measures (see \secref{sec:NMVI}).
If most or all solvers (replicas) agree on the best solution, then we rate 
the partition as strongly correlated, but if the partitions have large 
variations, we say the solution is weak.
In either case, we select the lowest energy replica solution to represent 
the best answer at a given resolution $\gamma_i$, but one could also construct 
a consensus partition \cite{ref:LPA,ref:fredjain,ref:topchyconsensus}, 
particularly in the latter case of weak solutions \cite{ref:topchyweak}.

As a function of the resolution parameter $\gamma$ in \eqnref{eq:ourmodel} 
(or any relevant CD scale parameter for another model \cite{ref:smcd,ref:arenasmultires}), 
the best resolutions may be identified by peaks or plateaus in NMI \cite{ref:rzmultires}, 
minima or plateuas in VI \cite{ref:rzmultires,ref:fenndynamic}, and/or plateaus 
in the number of clusters $q$ \cite{ref:arenasmultires} or other measures 
\cite{ref:rzmultires,ref:fenndynamic}. 
Plateaus in these measures (\ie{}, NMI, VI, $H$, $q$, etc.) as a function of $\gamma$
imply more stable features of the network,
although caution must be exercised when interpreting some measures \cite{ref:rzmultires}.
Sharper peaks in NMI or narrow troughs in VI indicate strongly defined but more 
transient features.
Significant peaks in VI or troughs in NMI generally indicate transitions between 
dominant structures.
More generally, we can further extract pertinent details of the network from 
other extrema in NMI and VI (e.g., Ref.\ \cite{ref:HRNimages} also analyzed 
peaks in VI to perform image segmentation using CD concepts).

Correlations among the replicas evaluate the level of agreement on stable, low-energy 
solutions. 
If a problem has two equally viable partitions (\ie{}, with the same energy) that 
are located by two replicas, the one with the highest overlap among all other replicas 
would be preferred.
This corresponds to the volume of a configuration space basin associated with this 
preferred partition where 
the other partition with the same energy has an associated smaller basin size 
(the number of states or exponentiated entropy).
This is like an ``order by disorder'' effect 
\cite{ref:villain1980JPF,ref:shender1982SP,ref:henley1989PRL,ref:nussinov2004EPL} 
present in various systems (e.g., entropic contributions
to the free energy in finite temperature systems) which lifts the degeneracy 
between equally viable partitions and favors one partition (or a subset of partitions) 
over the others.\\

The MRA algorithm is:

(0) \emph{Initialize the algorithm}.  
Select the number of independent replicas $r$.
Identify the set of resolutions $\{\gamma_i\}$ to analyze using \eqnref{eq:ourmodel}
along with a starting $\gamma_0$.
It is often convenient to begin at high gamma and step downward, stopping
if the system completely collapses.

(1) \emph{Initialize the system}. 
For the current $\gamma_i$, initialize each replica with a unique 
set of $N$ spin indices (\ie{}, $q_{0}^{(j)} = N$ for each replica $j$).

(2) \emph{Solve each replica}.
Independently solve each replica according to the CD algorithm 
in \secref{app:CDalgorithm}.

(3) \emph{Compare all replicas}. 
Calculate the Shannon entropy for every replica and compare all pairs 
of replicas using the mutual information $I(A,B)$, normalized mutual 
information $\INAB$, and variation 
of information $V(A,B)$ measures in \secref{sec:NMVI}.

(4) \emph{Iterate to the next resolution}. 
Increment to the next resolution $\gamma_{i+1}$. 
A geometric step size $\Delta\gamma = 10^{1/s}$ is often convenient where 
$s\approx O(10)$ is an integer number of $\gamma_i$'s per decade of $\gamma$.
Repeat \stepd{1}{3} until the system is fully collapsed (if stepping down 
in $\gamma_i$) or no $\gamma_i$'s remain.\\

The information correlations in steps (3) and (4) allow the determination 
of the best global network scale(s) \cite{ref:rzmultires} 
(see \Appref{app:localglobal}) based upon regions of $\gamma$ with high NMI 
or low VI.
Plateaus in $H$, $I$, and $q$ may also provide supplemental information
regarding partition stability.
The solution cost scales linearly in $r$ with the CD algorithm 
in secref{app:CDalgorithm}, \bigO{rtL^{1.3}\log k}.
We have solved systems with \bigO{10^7} edges on a single processor 
\cite{ref:rzmultires} in a few hours.

The algorithm may detect, but does not impose, a hierarchical 
community structure.
That is, as shown in \secref{sec:staggered}, the MRA algorithm will show strongly
correlated regions at the well-defined hierarchical levels, but it is also able 
to analyze non-hierarchical multiresolution structure. 
This approach is preferable to \emph{forcing} a hierarchical structure 
on every analyzed network \cite{ref:everittHC} since some networks may not naturally 
possess this type of organization.
Once the preferred resolutions are identified, the specific hierarchical nature 
can be analyzed and evaluated by other means 
\cite{ref:trusinahiermeasure,ref:moneshiermeasure}.

\subsection{LMRA replica method} \label{sec:LMRAalgorithm}

Clusters naturally change as the resolution is varied, so 
how do we identify the appropriate target clusters for comparison? 
Two immediate approaches include: ($i$) compare clusters for ``nearby'' resolutions 
as specified by a particular $\gamma$ in \eqnref{eq:ourmodel} 
or ($ii$) compare targeted parent clusters for specific node(s) of interest 
across the replicas. 
Another natural approach is to mix the above methods: ($iii$) select a node 
of interest but further compare the target clusters at neighboring resolutions.
This is an extension of case ($ii$), so we focus on former cases, 
leaving more complicated implementations to future work.

In case ($i$), if one deviates too far from $\gamma_i$, the cluster 
will change substantially and the evaluation will be less useful.
That is, at some point, the cluster changes enough that it is no longer the 
``same'' community. 
We could quantitatively define this comparison based on the relevant CVI values, 
but the practical question of identifying the appropriate community 
for comparison across all partitions becomes increasingly difficult.

In case ($ii$), the node may be selected \emph{a priori} based on a 
known identity in the real network, or it may be randomly selected.
(One may also first analyze the global system and use any communities 
with interesting features to identify important nodes.)
This option has two advantages: it is simpler to implement, but more importantly,
the studied clusters are always well-defined, enabling comparisons of 
community robustness across all relevant resolutions.
That is, at a given $\gamma_i$, we only need to know to which cluster 
node $i$ belongs, regardless of any structural changes in its network
neighborhood as $\gamma$ is varied.
Cluster correlations are quantitatively evaluated at a given $\gamma_i$, 
but the average $\voab$ or $\Ioab$ values over the replica pairs can still 
be compared across different $\gamma_i$'s to evaluate the relative strength 
of the parent communities.

Option ($ii$) is used in the current work.
We select a \emph{node} of interest (e.g., a person in a terrorist network, 
see \secref{sec:terrornetwork}), 
and trace the parent clusters among the replicas across a range of network 
scales [\ie{}, different $\gamma_i$'s in \eqnref{eq:ourmodel}].
As depicted in \figref{fig:LMRA}, the LMRA algorithm is:

(0) \emph{Initialize the algorithm}. 
Select the number of replicas $r$ and the number of independent optimization 
trials $t$ per replica (see \secrefs{app:CDalgorithm}{app:MRA}).
Select a set of nodes $\{\eta\}$ to track based on problem parameters (e.g., 
a person of interest). 
Identify the set of resolutions $\{\gamma_i\}$ to analyze (often selected 
to sample all relevant network scales, see step $4$ in \secref{app:MRA})
by minimizing \eqnref{eq:ourmodel}.
Select a starting $\gamma_0$.

(1) \emph{Solve $r$ independent replicas}.
For the current $\gamma_i$ in \eqnref{eq:ourmodel}, apply \stepd{1}{3} 
of the global MRA algorithm in \secref{app:MRA}.

(2) \emph{Identify parent clusters}. 
Identify the parent cluster $a_{ij}$ corresponding to each target node 
$\eta$ at the current $\gamma_i$ in each replica $j$.

(3) \emph{Compare clusters}. 
For each parent cluster $a_{ij}$, calculate CVI $v\left(a_{ij},a_{ik}\right)$ 
in \eqnref{eq:CVI} and CNMI $u\left(a_{ij},a_{ik}\right)$ 
in \eqnref{eq:CNMI} with the corresponding parent cluster $a_{ik}$ in replica $k$.
Calculate the average of measure $S_i$ [generically referring to measures 
$\vab$, $\Inab$, etc.{} in \secref{sec:CNMVI}] over all replica pairs 
at $\gamma_i$ by 
\begin{equation}
  \overline{S_i}(a,b)= \frac{2}{r(r-1)}\sum_{k>j} S_{ijk}(a,b),
  \label{eq:CNMVIaverage}
\end{equation}
where $i$ refers to a particular resolution parameter index for $\gamma_i$ 
in \eqnref{eq:ourmodel}, and $j$ and $k$ refer to replica summations.

(4) \emph{Identify the best resolutions}. 
For each parent cluster $a_{ij}$, find the lowest CVI values $\vab$ 
or the highest CNMI values $\Inab$ 
and their corresponding resolution(s) $\{\gamma^\mathrm{Best}_i\}\subset 
\{\gamma_i\}$.
These are the best resolutions for each cluster $a_{ij}$.\\

Step (4) is the final result for the algorithm, indicating which 
resolutions and candidate partitions are mostly likely to be useful.
As with the global MRA approach in \secref{app:MRA}, we are interested
in extrema or plateaus in the pertinent measures in \secref{sec:clusterVI}.
Empirically, $r\approx O(10)$ or less appears to be sufficient for most problems.
We estimate the cost to be \bigO{L r^2} which is comparable to the base
MRA algorithm cost in \secref{app:MRA}.

\begin{figure}
\centering
\includegraphics[width=0.9\columnwidth]{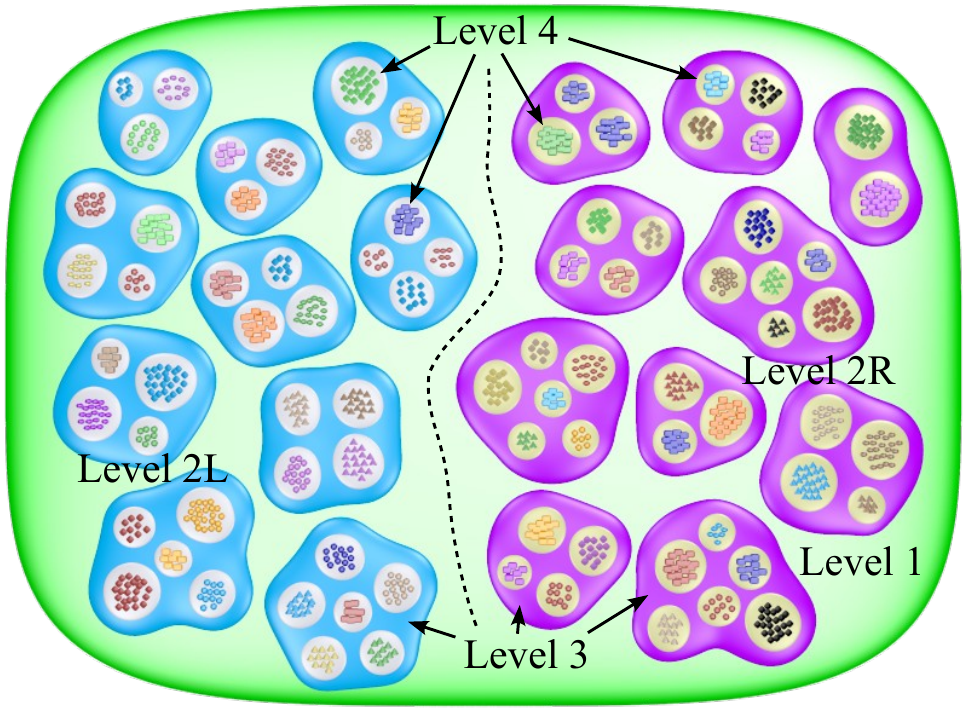}
\caption{(Color online) The figure depicts a constructed $N=1024$ node 
four-level hierarchy. 
Level $1$ is the complete network with two sides of supercommunities 
that are randomly connected at a low edge density between them.
Level $2$ consists of two roughly equal sized branches ($N_L = 502$ and 
$N_R = 522$) which we denote by left (L, blue or darker tone) and 
right (R, purple or medium tone) as the figure indicates.
Level $3$ is the set of supercommunities, and level $4$ is the set 
of smallest communities strictly contained within the supercommunities.
At levels $3$ and $4$, elements of the left branch are connected at higher 
internal and intercommunity edge densities than the corresponding right 
branch elements.
See the text for a more detailed description of the network.
This construction results in a more ``blurred'' global multiresolution 
signature in \subfigref{fig:staggeredplotglobal}{a} where level $4$L 
is lost in the global MRA plot at feature ($iv$).
The corresponding LMRA plot for node $951$ in \subfigref{fig:staggeredplotlocal}{c} 
is nevertheless able to clearly identify level $4$L as a strongly defined 
resolution.}
\label{fig:staggeredhierarchy}
\end{figure}

\begin{figure*}
\centering
\subfigure[\ Global MRA: Branch Hierarchy -- Combined]
  {\includegraphics[width=\widesubfigsize]{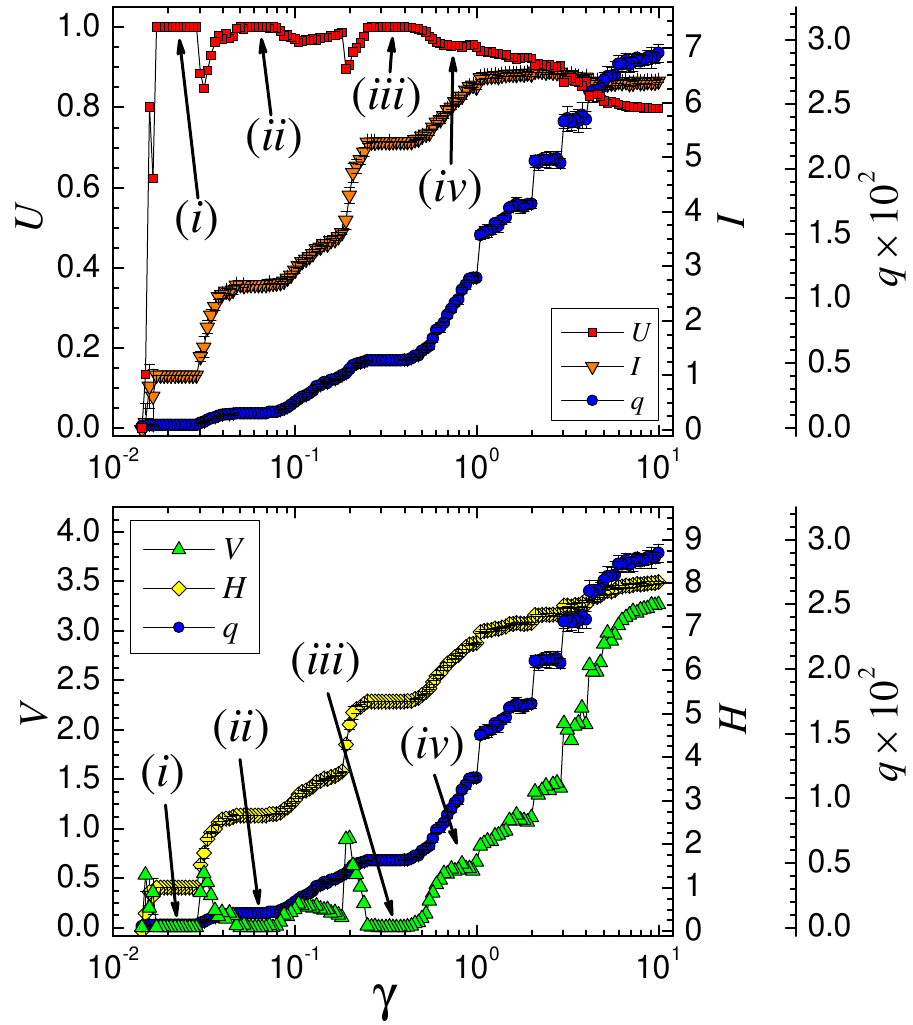}}
\subfigure[\ Global MRA: Branch Hierarchy -- Left Side]
  {\includegraphics[width=\widesubfigsize]{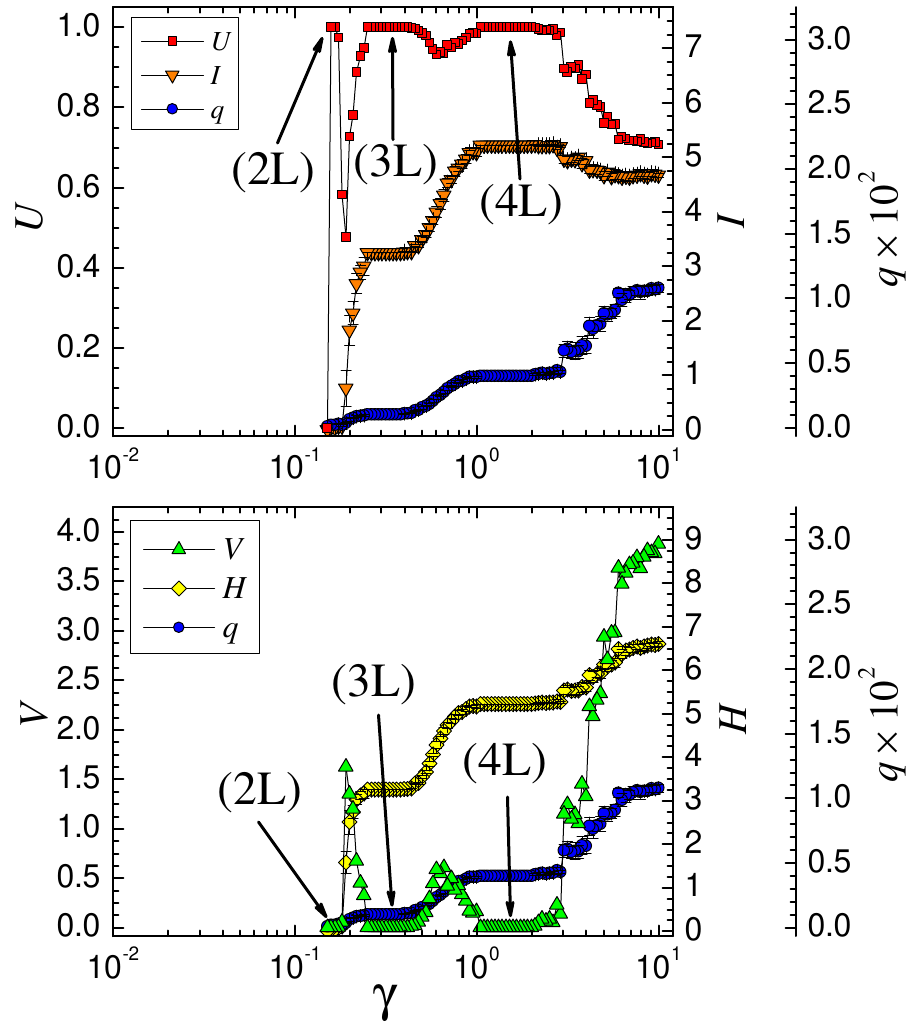}}
\subfigure[\ Global MRA: Branch Hierarchy -- Right Side]
  {\includegraphics[width=\widesubfigsize]{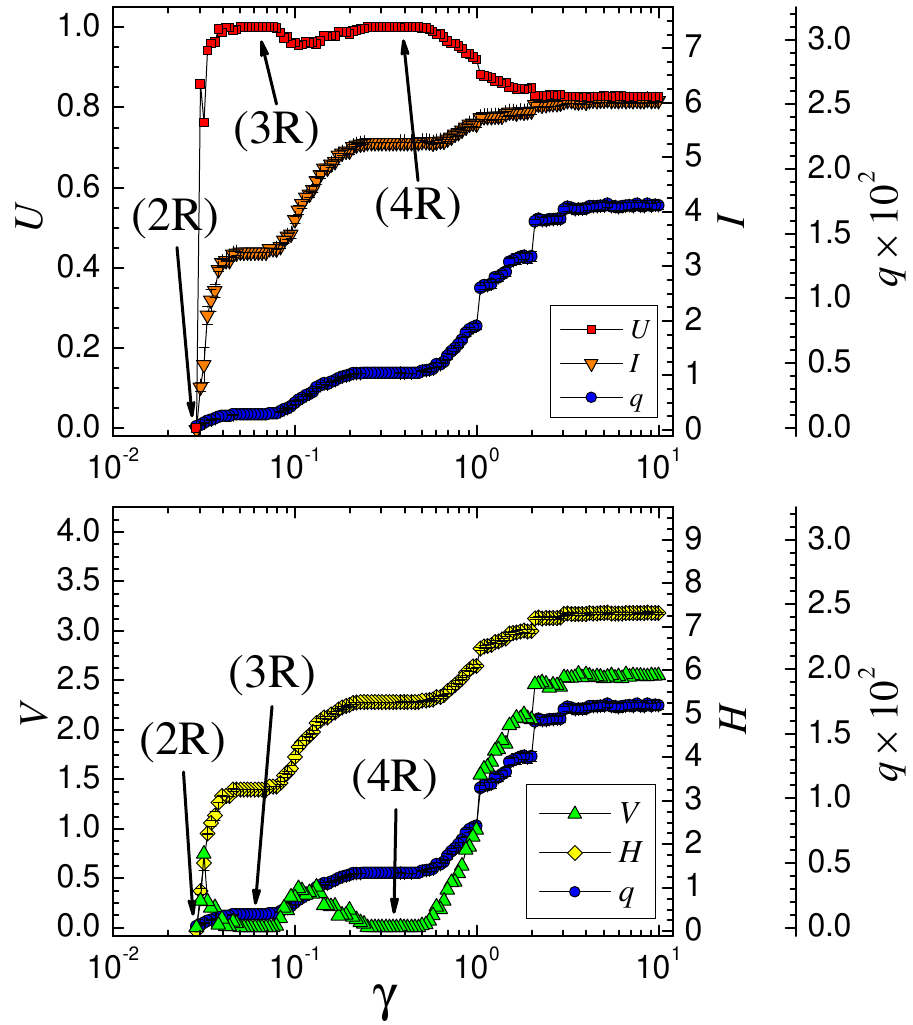}}
\caption{(Color online) In panels (a), we apply our global multiresolution 
algorithm (MRA, see \Appref{app:localglobal} and \secref{app:MRA}) to the $N=1024$ node, 
four-level, branched hierarchy depicted in \figref{fig:staggeredhierarchy}.
Panels (b) and (c) show the MRA method applied separately to the left and 
right level $2$ hierarchy branches, respectively.
In the top sub-panels (a--c), we compare replica \emph{partitions} using 
normalized mutual information $\INN$ (left axes, see \secref{sec:NMVI}) 
and  mutual information $I$ (right axes).
In the corresponding bottom sub-panels, we plot variation of information $V$ 
(left axes) and the Shannon entropy $H$ (right axes).
We also plot the average number of communities $q$ (offset right axes) in top 
and bottom sub-panels.
Features ($i$)--($iii$) demonstrate that the global MRA algorithm can detect 
network-wide stable partitions \cite{ref:rzmultires}.
Feature ($iv$) in panel (a) shows that the level $4$ community 
structure on the left side, known to be present at feature ($4$L) in panel (b), 
\emph{is almost completely obscured} because the right branch is significantly 
more random at the same network scale [\ie{}, value of $\gamma$ in \eqnref{eq:ourmodel}, 
see also \secref{sec:resolution}].
In \figref{fig:staggeredplotlocal}, we compare parent communities using 
the local multiresolution algorithm in \secref{sec:LMRA} where we demonstrate 
that the method can accurately extract level $4$L for the targeted nodes.}
\label{fig:staggeredplotglobal}
\end{figure*}

\begin{figure*}
\centering
\subfigure[\ Local MRA: Node 116 -- Member of right side]
  {\includegraphics[width=\widesubfigsize]{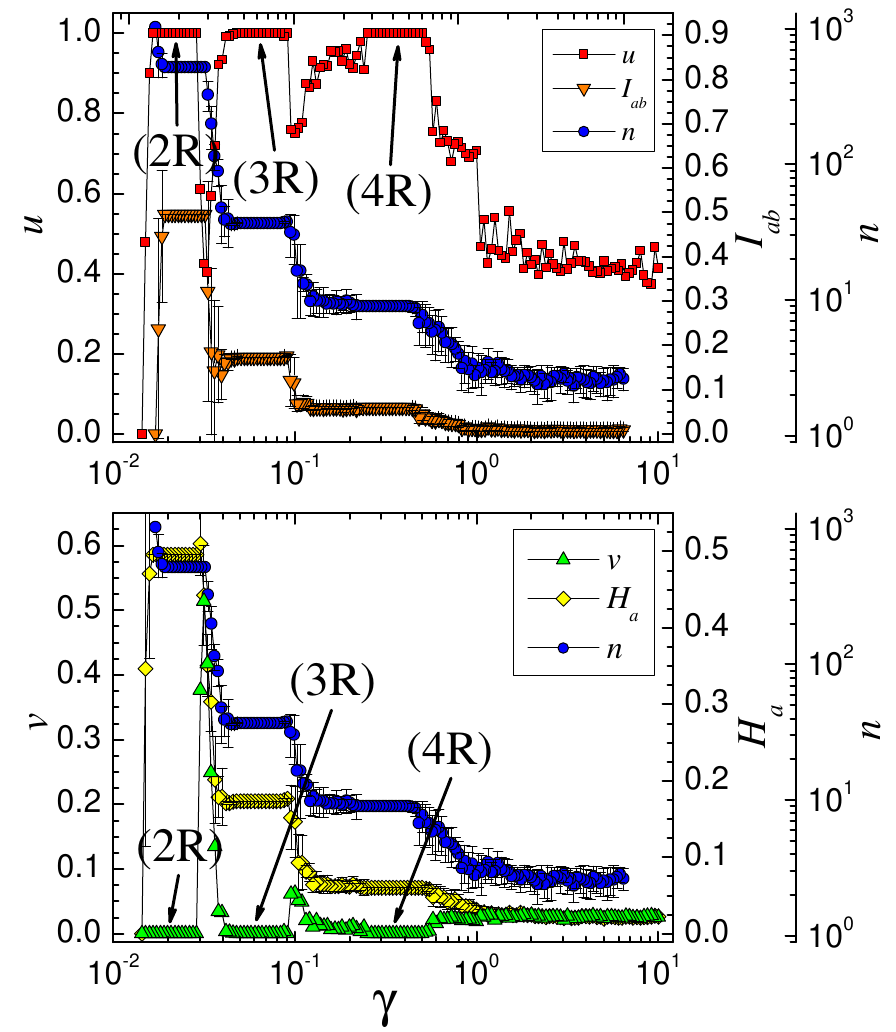}}
\subfigure[\ Local MRA: Node 661 -- Member of left side]
  {\includegraphics[width=\widesubfigsize]{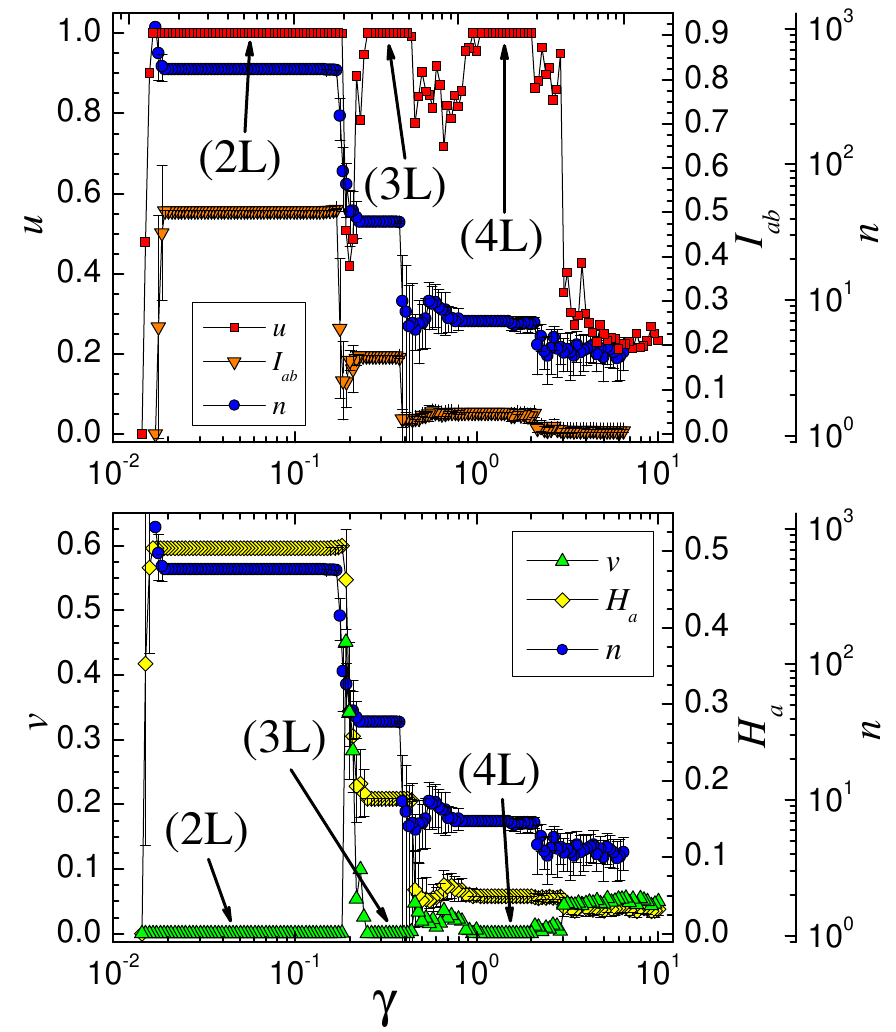}}
\subfigure[\ Local MRA: Node 951 -- Member of left side]
  {\includegraphics[width=\widesubfigsize]{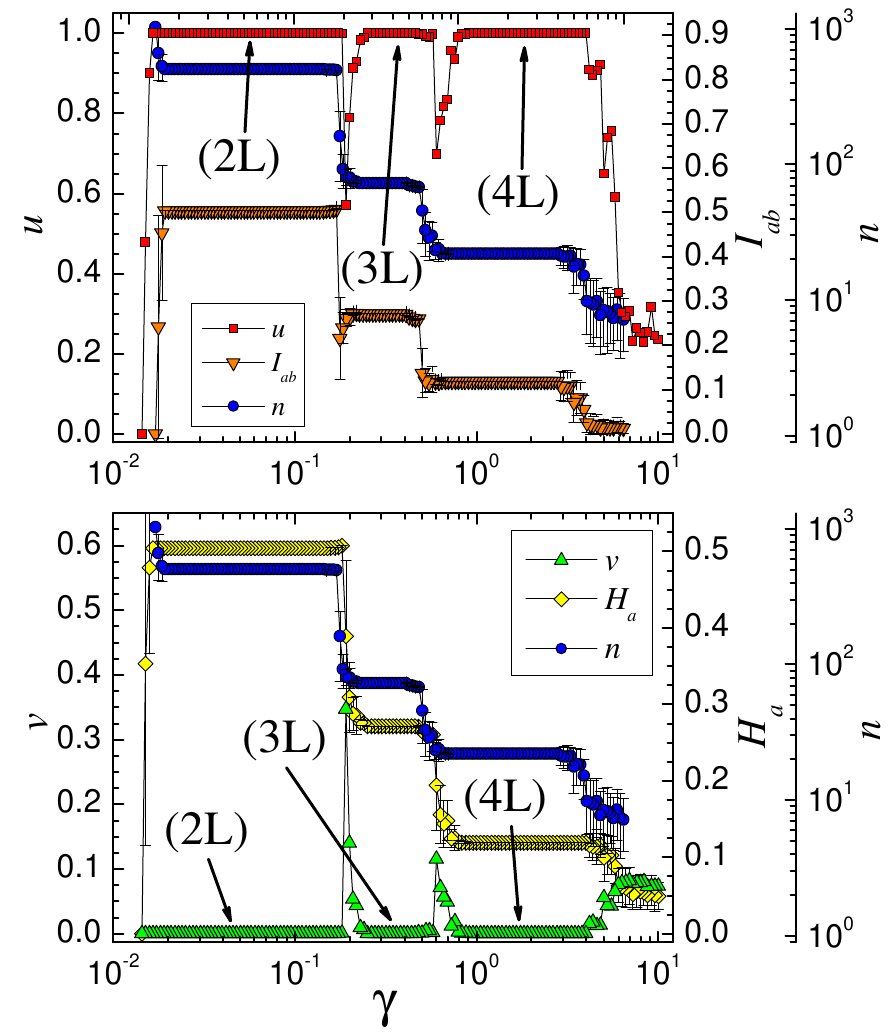}}
\caption{(Color online) In panels (a--c), we apply our local multiresolution 
algorithm (LMRA) in \secref{sec:LMRA} to randomly selected nodes of the branched 
hierarchy depicted in \figref{fig:staggeredhierarchy}.
The top sub-panels compare targeted \emph{communities} in the solved replicas 
(independent solutions) using the cluster normalized mutual information 
$\Inab$ (left axes, see \secref{sec:CNMVI}) and the mutual information 
contribution $I_{a b}$.
The corresponding bottom sub-panels plot the cluster variation of information 
$\vab$ (left axes) and the Shannon entropy contribution $H_a$ (right axes).
Both top and bottom sub-panels also plot the average number of nodes $n$ in the 
respective parent communities on the offset right axes.
The LRMA method is easily able to extract the relevant levels $3$ and $4$ for 
the target nodes as evidenced by regions of low CVI (or high CNMI) even though 
level $4$L of the hierarchy is almost completely obscured at feature ($iv$) 
in the combined global MRA plot in \subfigref{fig:staggeredplotglobal}{a}.}
\label{fig:staggeredplotlocal}
\end{figure*}

\subsection{Alternative implementations} \label{sec:alternateLMRAs}

In the current work, we contrast local, community-level analysis with 
global multiresolution correlations.
Thus, in this algorithm, we solve for all communities in the full system 
and then select the appropriate parent clusters for the community-level 
analysis.
Since the only global parameter that we need to evaluate CVI or CNMI is 
the system size $N$, a more efficient approach could take advantage of our 
local cost function in \eqnref{eq:ourmodel}
(see also Ref.\ \cite{ref:havemannlocalmultires} for a more efficient 
method applied a different fitness function \cite{ref:lanc}).
Specifically, we would solve for the target communities around a particular
node of interest $a_i$ by examining community membership opportunities 
strictly for the neighbors of nodes in or connected to $a_i$'s local 
neighborhood.
The remainder of the graph partition need not be specified in detail
to apply \eqnrefs{eq:CVI}{eq:CNMI}.

A more comprehensive alternative to \step{3} could be useful if there are 
no \emph{a priori} nodes of interest to study.
That is, we could compare all pairs of clusters and identify the best 
matching cluster $b_{ik}$ for $a_{ij}$ based on the minimum
$\vsupab{(jk)}$ at the current $\gamma_i$.
Then we would average CVI over all cluster matches for each best-cluster 
pair.
In this scenario, we could further pursue the relative cluster comparisons 
among the replicas by evaluating whether the best clusters match among
themselves. 
That is, we would determine if $b_{ik}$ of partition $A$ also matches the 
parent cluster $d_{il}$ in partition $B$, repeating the process to the 
desired depth.

With this alternative to \step{3}, individual community matches among the 
$r$ replicas (see \figref{fig:LMRA}) are not necessarily symmetric.
That is, while \eqnref{eq:CVI} is symmetric in $(a,A)$ and $(b,B)$, this 
does not require that the best matching clusters in the respective partitions 
necessarily agree.
Consequently, it would provide an additional measure of community robustness 
based on the level of mutual agreement (number of agreed matches compared 
to the total possible matches among all replicas).

\section{Examples} \label{sec:examples}

As discussed in \secref{app:MRA}, we calculate the global MRA algorithm
for the network and concurrently apply the LMRA algorithm in \secref{sec:LMRA} 
to targeted nodes by tracking the respective parent clusters across a full range 
of network scales.
Comparing explicit values of VI and CVI is difficult, so we evaluate relative 
values of VI or CVI for a given network.
We demonstrate the LMRA method with a constructed network example and a small, real
terror network.

\begin{figure*}
\centering
\subfigure[\ Terrorist network -- $\gamma=0.1$]
  {\includegraphics[width=1.0\columnwidth]{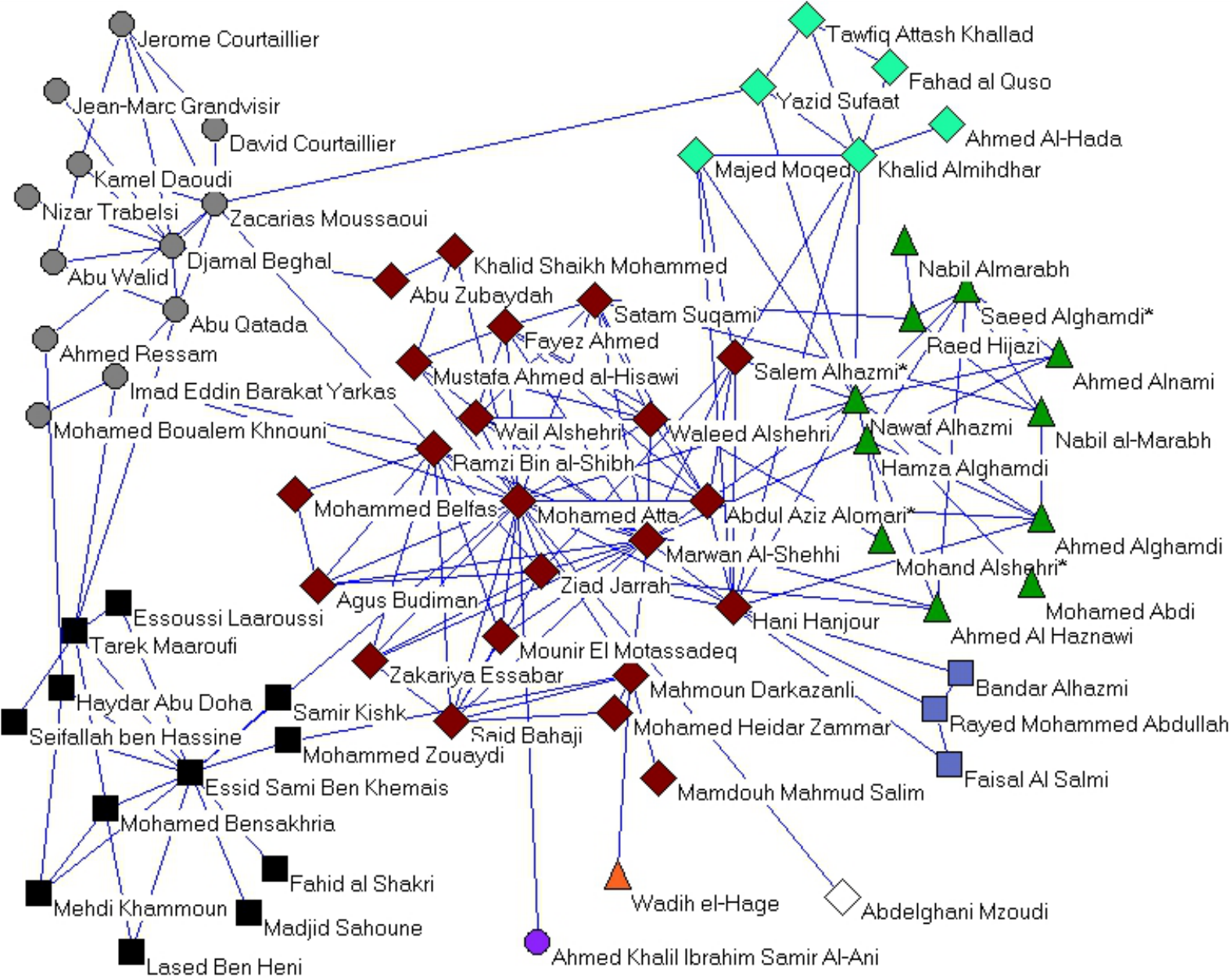}}
\subfigure[\ Expanding network around Mohamed Atta]
  {\includegraphics[width=1.0\columnwidth]{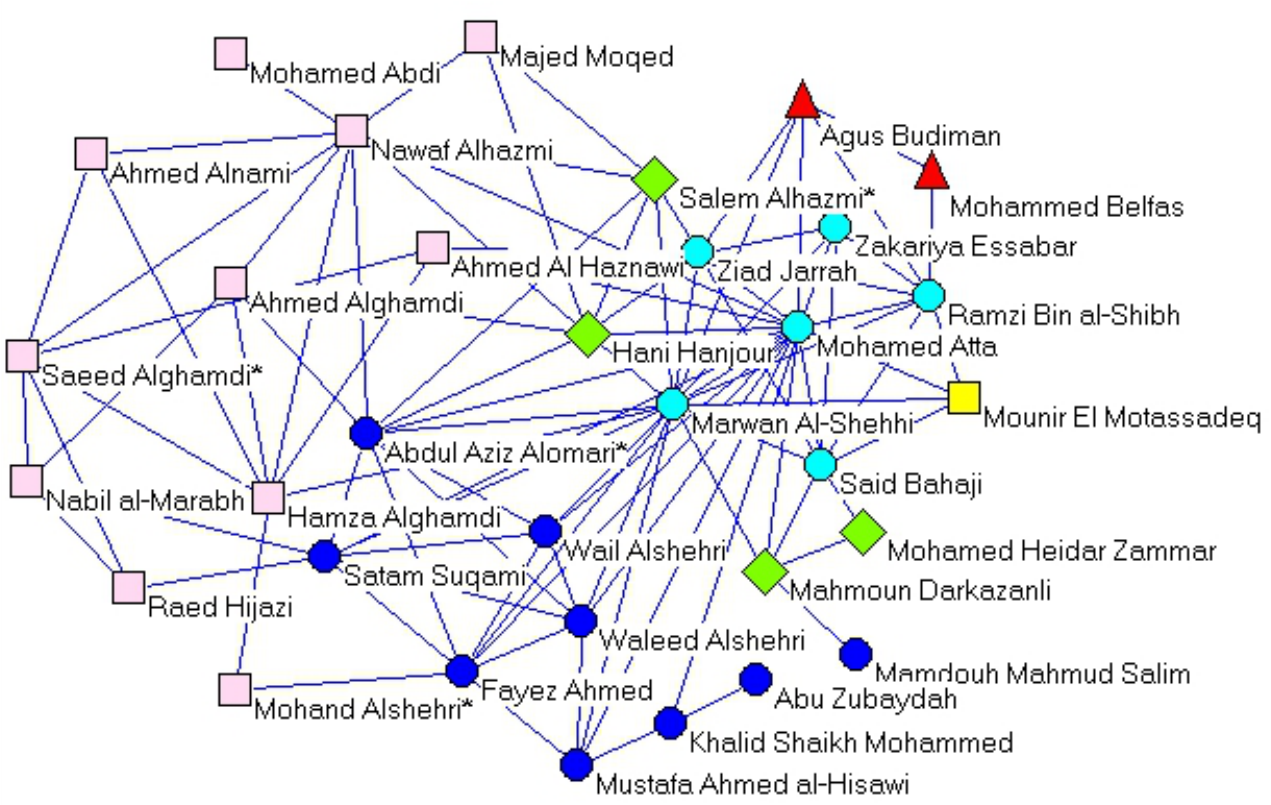}}
\caption{(Color online) The figure depicts a small terrorist network collected
from publicly available data \cite{ref:krebsconn}.
Panel (a) shows the overall network at $\gamma = 0.1$ in \eqnref{eq:ourmodel}
where distinct node shapes indicate separate communities.
Panel (b) shows an ``expanding'' community around Mohamed Atta where 
his ``local'' cluster grows roughly outward in the diagram.
Here, new node categories (shapes and colors) indicate nodes \emph{added} 
to the parent cluster (as opposed to new communities) as $\gamma$ is lowered 
to particular well-defined resolutions (see text). 
In this network, our local multiresolution algorithm indicates that these
communities are strongly defined on an individual basis with CVI $\vab =0$ 
in \subfigref{fig:terroristsplotlocal}{b} even at resolutions where the overall 
system structure is more vaguely defined in \figref{fig:terroristsplotglobal}.
This illustrates the main benefit of our local multiresolution approach.}
\label{fig:terrorists}
\end{figure*}

\begin{figure}
\centering
\subfigure[\ Global MRA: 9/11 Terrorists]{\includegraphics[width=0.9\columnwidth]{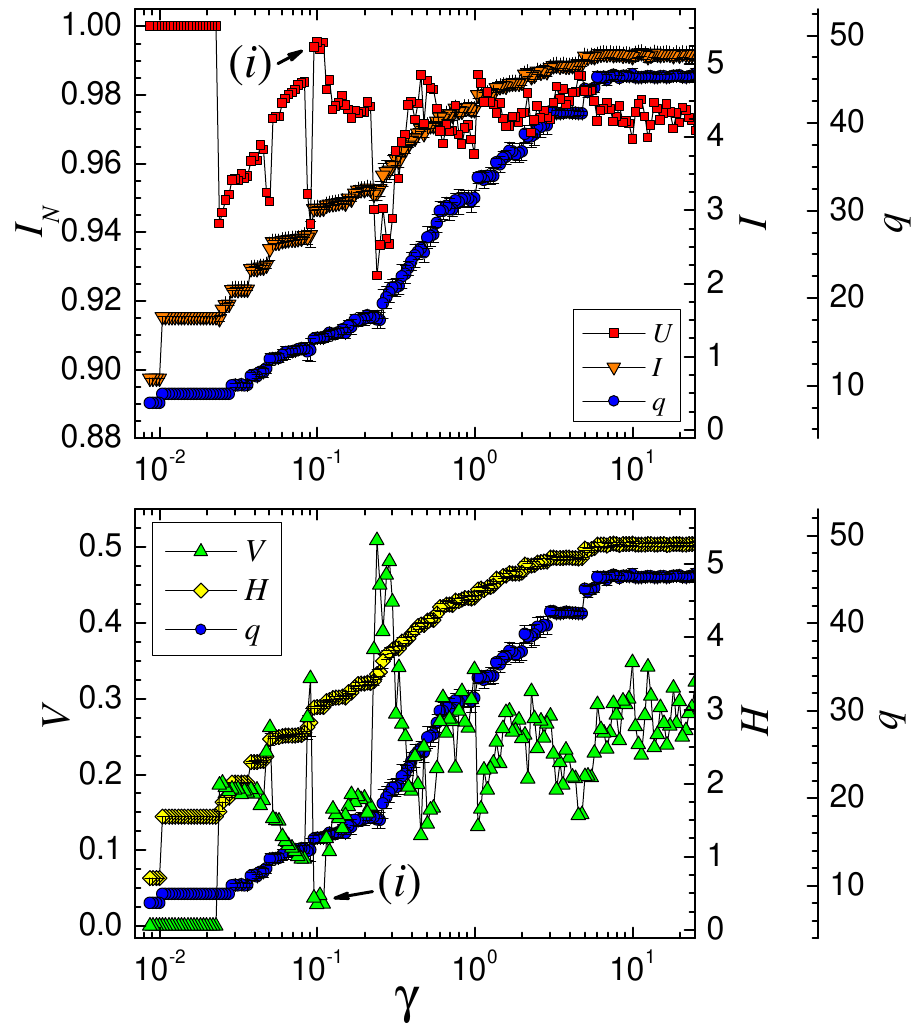}}
\caption{(Color online) We apply our multiresolution algorithm 
(see \Appref{app:localglobal} and \secref{app:MRA}) to a small terrorist network \cite{ref:krebsconn}.
Although the plot shows a best resolution at $\gamma\simeq 0.1$ (depicted 
in \figref{fig:terrorists}) as indicated by $V\simeq 0$, the remainder of the 
plot has a largely blurred multiresolution signature (high VI or low NMI).
The $V=0$ region on the far left is an essentially trivial partition into nearly
disjoint clusters.
In \figref{fig:terroristsplotlocal}, we show results from the local multiresolution 
algorithm in \secref{sec:LMRA} to three selected terrorists where we track the respective 
parent clusters over a range of resolutions [\ie{}, values of $\gamma$ in \eqnref{eq:ourmodel}]
and calculate the cluster correlations using the CVI and CNMI in \secref{sec:CNMVI}.}
\label{fig:terroristsplotglobal}
\end{figure}

\begin{figure*}
\centering
\subfigure[\ LMRA: Hani Hanjour]{\includegraphics[width=\widesubfigsize]{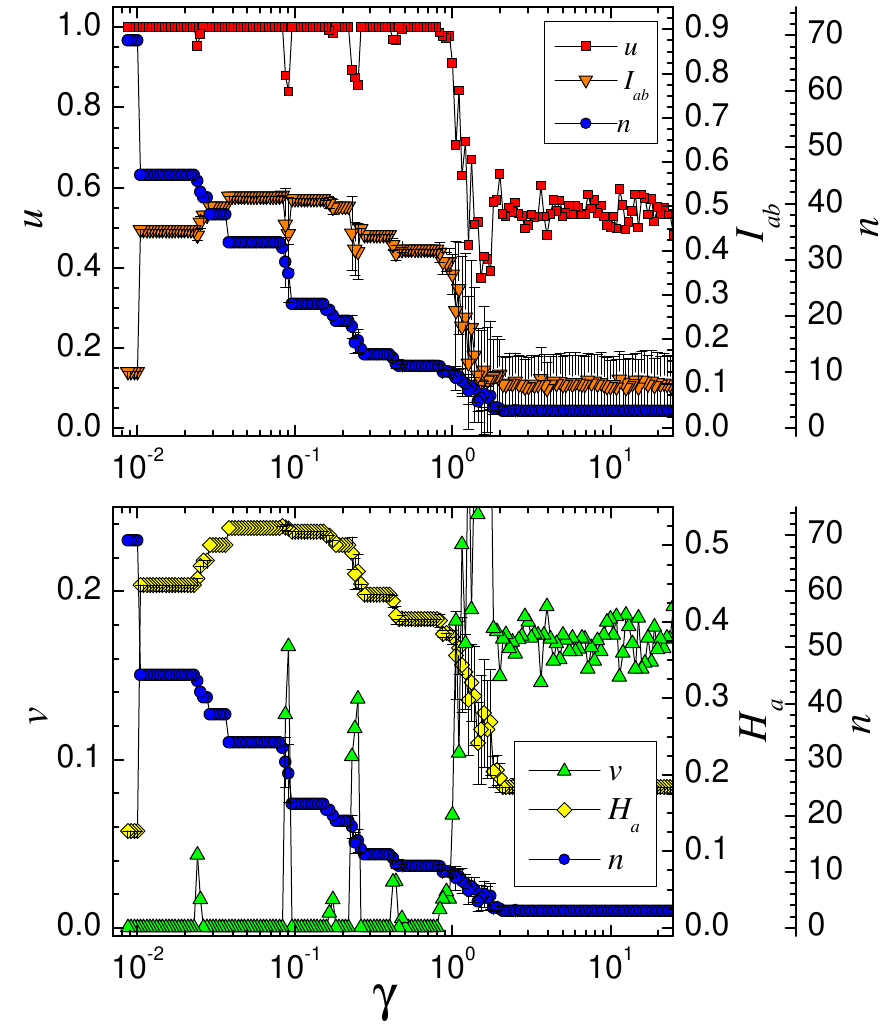}}
\subfigure[\ LMRA: Mohamed Atta]{\includegraphics[width=\widesubfigsize]{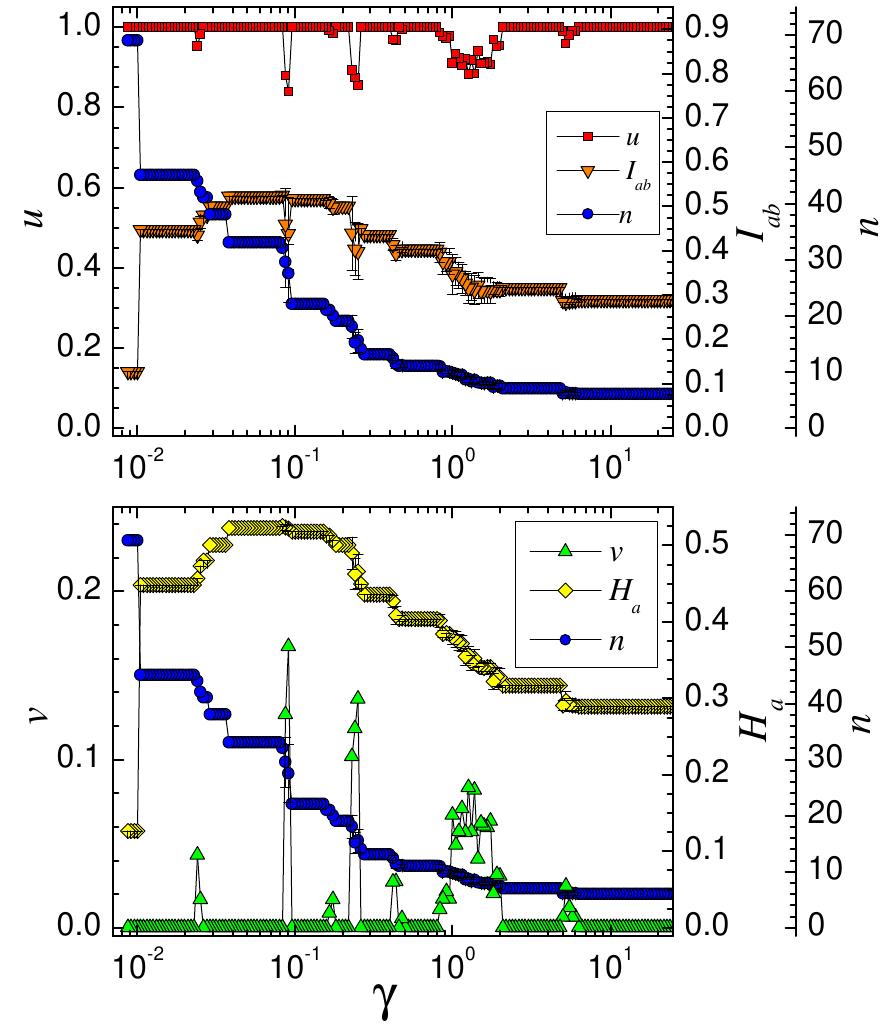}}
\subfigure[\ LMRA: Zacarias Moussaoui]
                                {\includegraphics[width=\widesubfigsize]{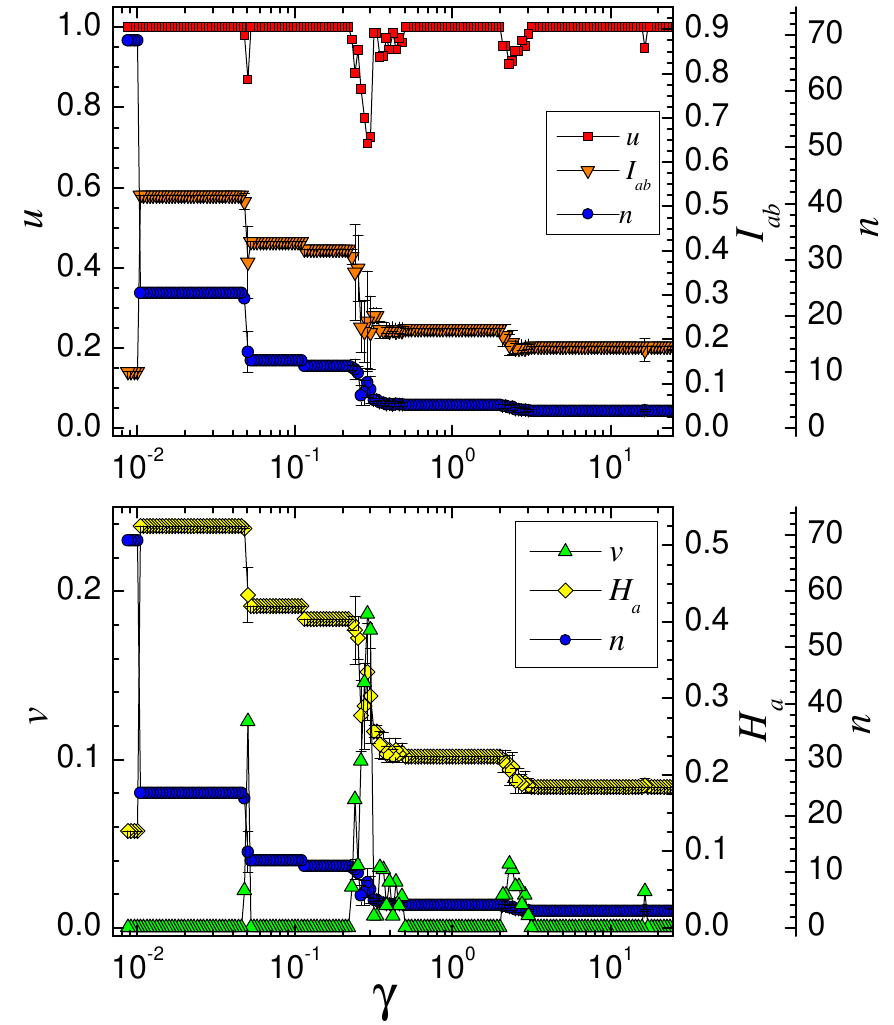}}
\caption{(Color online) In each panel, we apply our local multiresolution 
algorithm (LRMA, see \secref{sec:LMRA}) to a small terrorist network \cite{ref:krebsconn}.
We analyze three selected terrorists (see text) by tracking the respective parent 
clusters over a range of resolutions [\ie{}, values of $\gamma$ in \eqnref{eq:ourmodel}].
We then calculate the cluster correlations using the community comparison measures 
in \secref{sec:CNMVI}.
Note that the individual nodes possess certain strongly preferred resolutions 
with $\vab{}=0$ for their parent clusters whereas the global system in 
\figref{fig:terroristsplotglobal} is less well-defined for most values 
of $\gamma$.}
\label{fig:terroristsplotlocal}
\end{figure*}

\subsection{Branched hierarchy} \label{sec:staggered}

We construct a branched, strict hierarchy as depicted in 
\figref{fig:staggeredhierarchy} which we use to test the LMRA 
method of \secref{sec:LMRA}.
Level $1$ is the full system of $N=1024$ nodes;
level $2$ is the two-part branch split (groups of superclusters) 
with $N_L = 502$ and $N_R = 522$ nodes for the left (L) and right 
(R) sides, respectively; level $3$ is the set of superclusters;
level $4$ is the set of innermost clusters.

Level $1$ was defined by connecting nodes in the left and right branches 
(levels $2$L and $2$R) with an \emph{inter}community density $p_1 = 0.015$.
The approximate \emph{intra}community edge densities at level $4$ were 
$p_{4\mathrm{L}} = 0.9$ and $p_{4\mathrm{R}} = 0.6$ assigned randomly 
with a normal distribution of $\sigma_p = 0.02$.
We connected nodes \emph{between} the respective communities in the 
intermediate levels $2$ and $3$ with probabilities: $p_{3\mathrm{L}}=0.37$, 
$p_{3\mathrm{R}}=0.10$, $p_{2\mathrm{L}}=0.16$, and $p_{2\mathrm{R}}=0.03$.
These values were selected in order to demonstrate a somewhat ``blurred'' 
multiresolution signature in a controlled example where the underlying 
local structure is nevertheless strongly defined.

In \subfigref{fig:staggeredplotglobal}{a}, we show the global MRA 
algorithm from Ref.\ \cite{ref:rzmultires} (summarized in \secref{app:MRA})
applied to the full $N=1024$ node network using $r=20$ replicas and $t=10$ 
optimization trials per replica.
A more thorough discussion follows, but briefly, feature ($iv$) illustrates 
how poorly-correlated communities almost completely obscure the well-defined 
level $4$L structure.
Nevertheless, the local MRA algorithm in \secref{sec:LMRA} can 
\emph{fully extract this hidden section of the hierarchy}.

In \figref{fig:staggeredplotglobal}, the left axes plot NMI, $\INN$, and VI, $V$, 
from \secref{sec:NMVI} in the top and bottom sub-panels, respectively, averaged 
over all replica pairs.
On the right axes, we plot the average mutual information $I$ and the Shannon 
entropy $H$ for the top and bottom sub-panels, respectively.
The right offset axes in both sub-panels plot the average number of communities 
$q$.
Panels (b) and (c) show the MRA results applied to the separate left and right 
branches of the hierarchy, respectively, using the same $r$ and $t$ as in panel 
(a).

For completeness, features ($i$)--($iii$) in panel (a) illustrate how the global 
MRA signature can identify preferred or stable resolutions by low VI or high NMI 
correlations (or plateaus in $H$, $I$, and $q$ in this example) averaged over the 
independently-solved replica partitions.
Specifically, feature ($i$) corresponds to level the $2$ partition with $q_{i}=2$, 
and feature ($ii$) concurrently identifies levels $2$L and $3$R with $q_{ii}=11$
because the respective community edge densities are similar 
(see \secref{sec:resolution}).
Likewise, feature ($iii$) solves levels $3$L and $4$R with $q_{iii}=52$.
These specific partitions consist of combinations of well-resolved sub-graphs 
at different levels of the branched hierarchy, but it is the loss of level $4$L 
in the global MRA plot that is the main topic of this example.

At feature ($iv$) in panel (a), the poor correlations show that 
\emph{the global analysis of the full system misses level} $4$L.
This occurs because the well-defined local clusters conflict with more random 
partitions for the right-side subgraph in \figref{fig:staggeredhierarchy}.
In contrast, panels (b) and (c) show that the MRA method applied to the separate
left and right branches are perfectly defined with $V=0$ and $\INN=1$ 
[marked by ($2$L), ($3$L), $\ldots$, ($4$R), respectively].
That is, the structure clearly exists locally, but the global MRA method in panel 
(a) cannot resolve level $4$L.

In \subfigref{fig:staggeredplotlocal}{a--c}, we plot the results of the new LMRA 
method from \secref{sec:CNMVI} for the parent clusters of three randomly selected 
nodes $116$, $661$, and $951$, respectively, as identified within the full $N=1024$ 
node system.
On the left axes, we plot CNMI $\Inab$ in \eqnref{eq:CNMI} and CVI $\vab$  
in \eqnref{eq:CVI}, respectively, averaged over all community pairs in the 
respective replicas.
On the right axes, we plot the mutual information contribution $I_{ab}$
in \eqnref{eq:IabAB} and the Shannon entropy contribution $H_a$ in \eqnref{eq:Ha} 
averaged over all pairs of target communities in the replicas or all target 
communities, respectively.
The offset right axes plot the average number of nodes $n$ over all targeted 
communities.

Despite being buried within the full $N=1024$ node system, the parent cluster
of node $951$ corresponding to level $4$L is clearly present in the LMRA 
analysis in \subfigref{fig:staggeredplotlocal}{b,c}.
This illustrates how \emph{our LMRA algorithm can resolve well-defined local 
structure even when the global signature is obscured}.
In principle, we could further apply the LMRA algorithm to all clusters 
in the partitions and unambiguously identify the entire set of well-defined 
level $4$L communities.

\subsection{Small terrorist network} \label{sec:terrornetwork}

Even small networks can possess strongly-defined local clusters
within a more indistinct global partition.
We apply the LMRA method to a small terrorist network related to
the terrorism attacks of September $11$, $2001$, as constructed from 
publicly available data \cite{ref:krebsconn}.
Given that the highest quality intelligence would be classified, our purpose 
here is to demonstrate the practical application of the LMRA on real data 
as opposed to setting forth a rigorous study of the terrorist network.
Toward this end, we select Mohamed Atta for study as the leader of the operation.
We further consider Hani Hanjour, who was another pilot, and Zacarias Moussaoui, 
who was the only terrorist of the $20$ prevented from participating in the attack.

\subFigref{fig:terrorists}{a} depicts the network at $\gamma = 0.1$ 
in \eqnref{eq:ourmodel} corresponding to the minimum VI at feature ($i$)
in \figref{fig:terroristsplotglobal} with $V\simeq 0$ (see below). 
Here, distinct node shapes indicate separate \emph{communities}.
The community partitions with $V=0$ at the lowest $\gamma$'s are disjoint 
clusters where all nodes are completely collapsed into their respective 
connected groups, resulting in a trivial partition.
The left axes plot $\INN$ and $V$ (see \secref{sec:NMVI}) for top and bottom 
sub-panels, respectively, averaged over all replica pairs.
On the right axes, we plot $I$ and $H$ for the top and bottom sub-panels, 
respectively, and the offset axes in both sub-panels plot the average 
number of communities $q$.

\subFigref{fig:terrorists}{b} shows the expanding network core 
centered on Mohamed Atta at several strongly-defined resolutions as determined
from \subfigref{fig:terroristsplotlocal}{b} where $\vab = 0$.
In this panel, distinct node shapes and colors indicate \emph{added nodes} 
[as opposed to new communities in panel (a)], roughly spreading outward 
as $\gamma$ is lowered.  Specifically, the fixed resolutions correspond 
to $\gamma = 10$ (smallest, innermost cyan circles),
$\gamma = 3$ (yellow square), $\gamma = 0.6$ (green diamonds), 
$\gamma = 0.3$ (red triangles), $\gamma = 0.125$ (dark blue circles), 
and $\gamma = 0.05$ (largest, pink squares) with a few other small 
fluctuations not depicted.

On the left axes in \subfigref{fig:terroristsplotlocal}{a--c}, 
we plot CNMI $\Inab$ in \eqnref{eq:CNMI} and CVI $\vab$  
in \eqnref{eq:CVI}, respectively, averaged over all pairs of parent 
communities in the respective replicas.
Similarly, the right axes plot the mutual information contribution 
$I_{ab}$ in \eqnref{eq:IabAB} and the Shannon entropy contribution $H_a$ 
in \eqnref{eq:Ha} averaged over all pairs of parent communities or all 
parent communities, respectively.
The right offset axes display the average number of nodes $n$ over the 
parent communities.

Each panel shows distinct, but different, regions of $\gamma$ where the 
parent clusters are strongly defined, but the cluster correlations in the 
full network 
in \figref{fig:terroristsplotglobal} are more poorly defined at most resolutions.
Hani Hanjour has a LMRA signature distinct from Mohamed Atta for $\gamma\gtrsim 1$,
but they match at lower $\gamma$ because they are mutual members of the same 
communities.

\subsection{LFR Benchmark} \label{sec:LFRnetwork}

We also tested the LMRA method on a common CD benchmark by Lancichinetti
Fortunato and Radicchi \cite{ref:lancbenchmark},
which was designed to emulate a series of strongly defined networks 
with realistic distributions of community sizes and edge assignments.
Specifically, it defines a power-law distribution of community sizes
specified by an exponent $\beta$, 
minimum size $n_\mathrm{min}$, and maximum size $n_\mathrm{max}$.
It adds random unweighted, undirected edges to the network, defining both the communities 
and any intercommunity noise (extraneous edges outside of the well-defined communities), 
according to a power-law distribution of node degrees given
by an exponent $\alpha$, average power-law degree $\kpowavg$ 
(or minimum degree $k_\mathrm{min}$), and maximum degree $k_\mathrm{max}$.

In the current tests, we solve for each system using the algorithm in \secref{sec:LMRAalgorithm} 
using $20$ replicas to ensure that we have a good sample of possible partitions
and $t=4$ trials per CD solution attempt.
We used $N=10~\!000$, $\kpowavg =35$, 
$n_\mathrm{min}=10$, $n_\mathrm{max}=50$, $\alpha = -2$, $\beta = -1$, 
and $\mu = 0.1$ or $0.5$ as indicated in \figref{fig:LFRplotglobalmulow}.
The mixing parameter $\mu$ controls the level of intercommunity noise.
The results of the global MRA method of \secref{app:MRA} are shown in \figref{fig:LFRplotglobalmulow},
and the corresponding local LMRA data for three randomly selected nodes are in \figref{fig:LFRplotlocalmulow}.
As with previously demonstrated examples \cite{ref:rzmultires,ref:lancLFRcompare}, the global MRA 
algorithm correctly identifies the constructed global partition with exceptional accuracy 
in the presence low and high noise in panels (a) and (b).
The extreme noise case in panel $c$ was included for comparison purposes, since mixing parameters higher 
than $\mu\simeq 0.7$ present exceptional challenges for all tested CD algorithms in \cite{ref:lancLFRcompare}.
The planted partition may even exceed limits of well-defined communities (see \cite{ref:decelleKMZPT,ref:huCDPTsgd,ref:darstCDSBMdef} 
for some general discussion).

In \subfigref{fig:LFRplotlocalmulow}{a} where $\mu = 0.1$, the LRMA method identifies the
proper community of node $7603$, as indicated by the arrows at feature ($i$).
Panel (b) of \figref{fig:LFRplotlocalmulow} for node $1213$ shows a similar for $\mu=0.5$,
where the correct cluster is again identified perfectly.
In both cases, the cluster structures appear to have reasonably well-defined parent 
communities with $v$ being roughly near zero for $\gamma\gtrsim 4$ and $1$, respectively, 
but the CNMI measure clearly shows $u<1$ indicating poorly defined communities.
Thus, the proper interpretation of parameters $v$ and $u$ requires considering the relative 
values over the studied range of $\gamma$. 
Further, one should consider all of the measures together when identifying relevant 
community structure(s).

Feature ($i$) in panel (c) for node $2502$ indicates two $\gamma$ values which display perfect 
correlations in CNMI and CVI, respectively.
At $\gamma = 0.12$, the parent cluster of the target node forms a trivial split into a size $n=2$ 
community due to the excessive noise.
The result at $\gamma =0.095$ shows a partial-identification of the intended cluster with $u=0.88$ 
when compared with the intended community by construction.
This correlation represents a transient, partially-false-positive identification since $u=1$ among 
the algorithm replicas.  
Additional helps mitigate such identifications.
The large error bars to the left of feature ($i$) indicate wide disagreement between the different
candidate clusters among the replicas.

For other problems with very low noise (not depicted), both CVI and CNMI may indicate a strong 
partition with $u\simeq 1$ or $v\simeq 0$ across a wide range of $\gamma$'s without clearly marking 
the transitions by changes in $v$ or $u$.
This may imply that the communities in which the node resides are not blurred by significant noise 
so that many of the structural transitions as $\gamma$ is varied are sharply defined.
The supplemental measures of $H_a$ and $I_{ab}$ for communities $a$ and $b$ or the number of communities 
$q$ can help distinguish between these different structures, but in these cases, the base LMRA method may 
not be conclusive.

\begin{figure*}
\centering
\subfigure[\ $\mu=0.1$]{\includegraphics[width=\widesubfigsize]{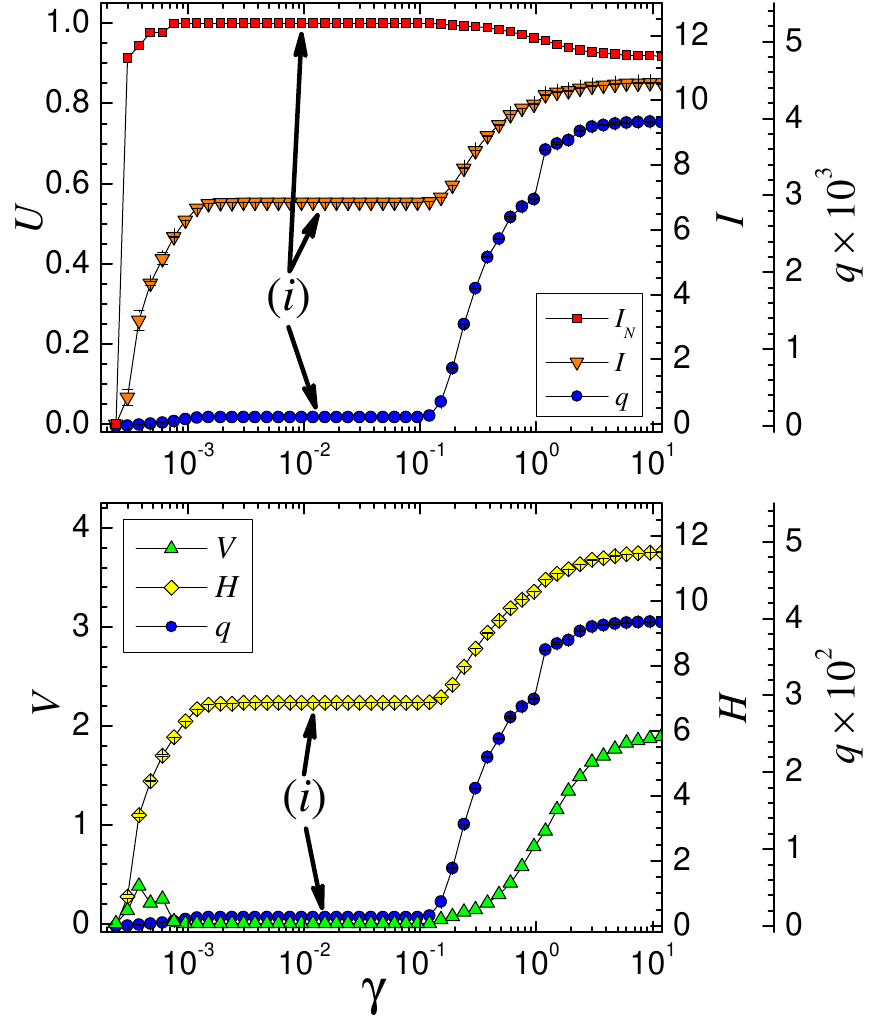}}
\subfigure[\ $\mu=0.5$]{\includegraphics[width=\widesubfigsize]{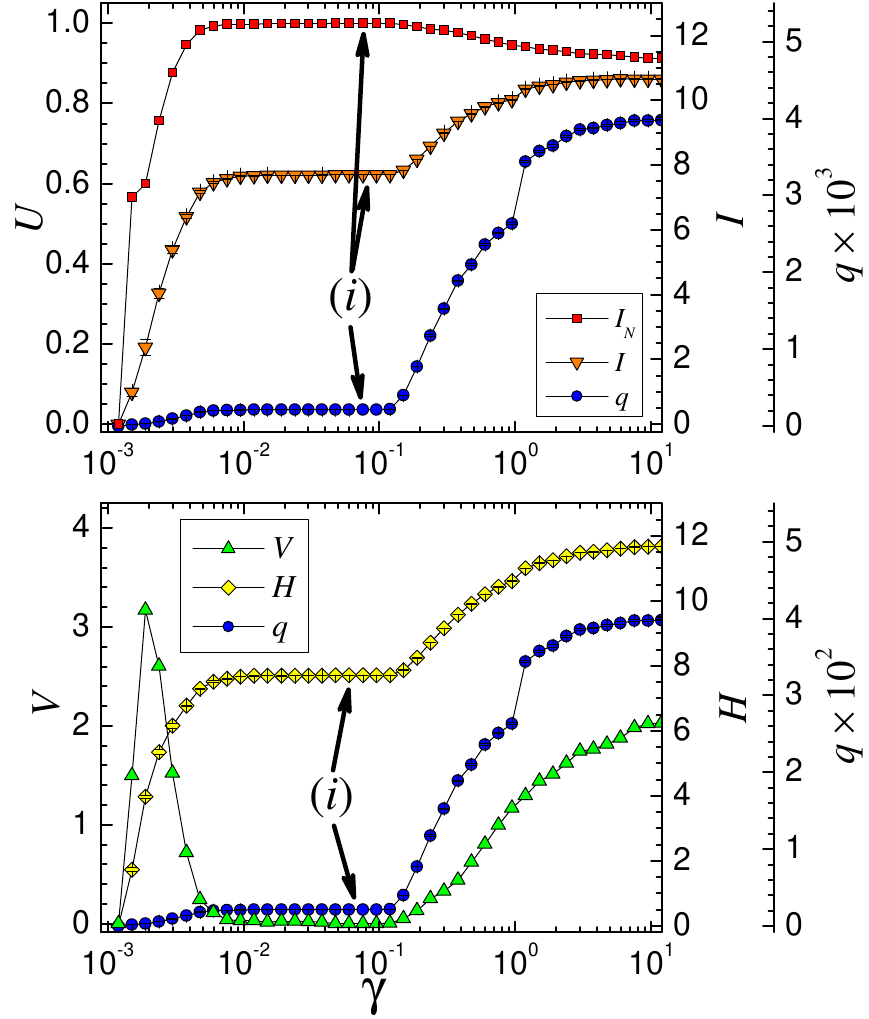}}
\subfigure[\ $\mu=0.8$]{\includegraphics[width=\widesubfigsize]{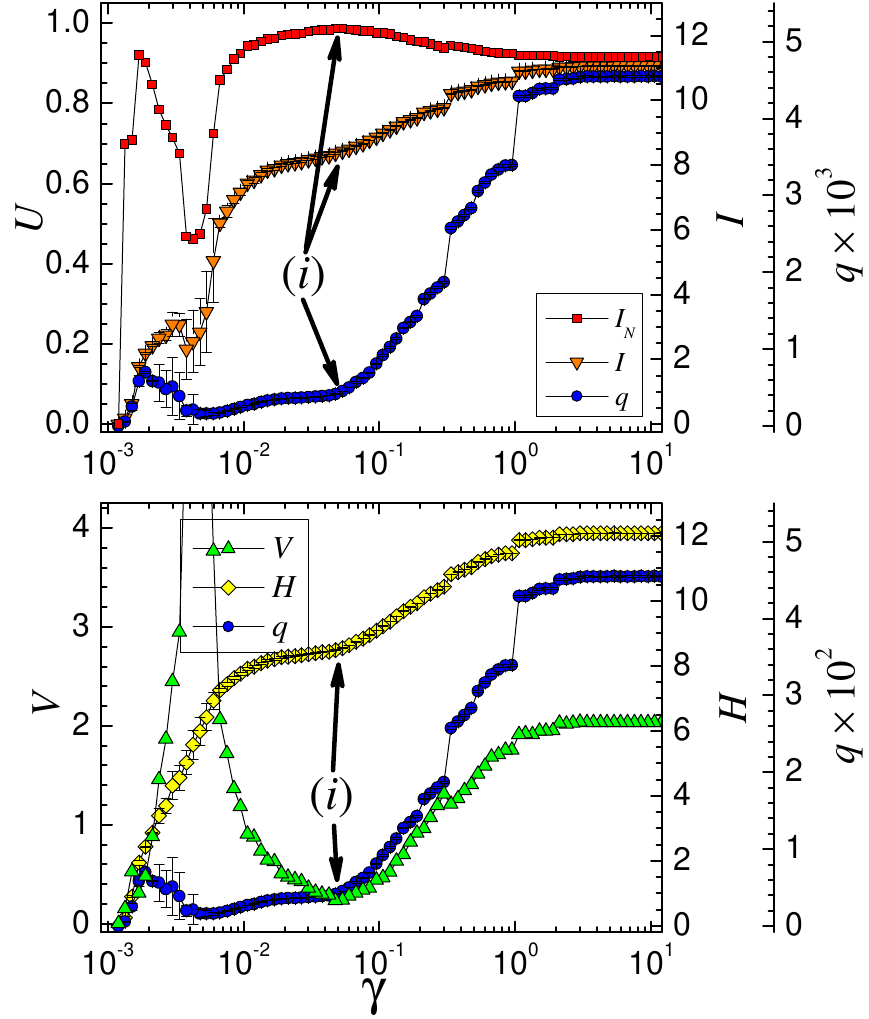}}
\caption{(Color online) We apply our multiresolution algorithm from 
\secref{app:MRA} \cite{ref:rzmultires,ref:lancLFRcompare} to the LFR benchmark \cite{ref:lancbenchmark} 
with mixing parameters $\mu = 0.1$,  $0.5$, and $0.8$. 
These values correspond to a low (panel a), moderately high (b), and extremely high (c) levels of network noise, respectively.  
See the text for other benchmark parameters.
In panels (a) and (b), the constructed partitions are correctly identified at feature ($i$) by low VI or high NMI.
In panel (c), a preferred partition resolution is implied by similar extrema at feature ($i$), but
the exact identity of the intended partition is probably beyond the capability of our algorithm to extract. 
Further, based on overall results in \cite{ref:lancLFRcompare}, the intended partition likely lies beyond the 
detection capability of current CD algorithms.
Other works \cite{ref:decelleKMZPT,ref:huCDPTsgd,ref:darstCDSBMdef} discuss maximum detectability limits 
and transitions, which may apply here.
In \figref{fig:LFRplotlocalmulow}, we show corresponding results from the local multiresolution 
algorithm in \secref{sec:LMRA} for randomly selected nodes where we track the respective 
parent clusters over a range of resolutions [\ie{}, values of $\gamma$ in \eqnref{eq:ourmodel}]
and calculate the cluster correlations using the CVI and CNMI in \secref{sec:CNMVI}.}
\label{fig:LFRplotglobalmulow}
\end{figure*}

\begin{figure*}
\centering
\subfigure[\ $\mu=0.1$]{\includegraphics[width=\widesubfigsize]{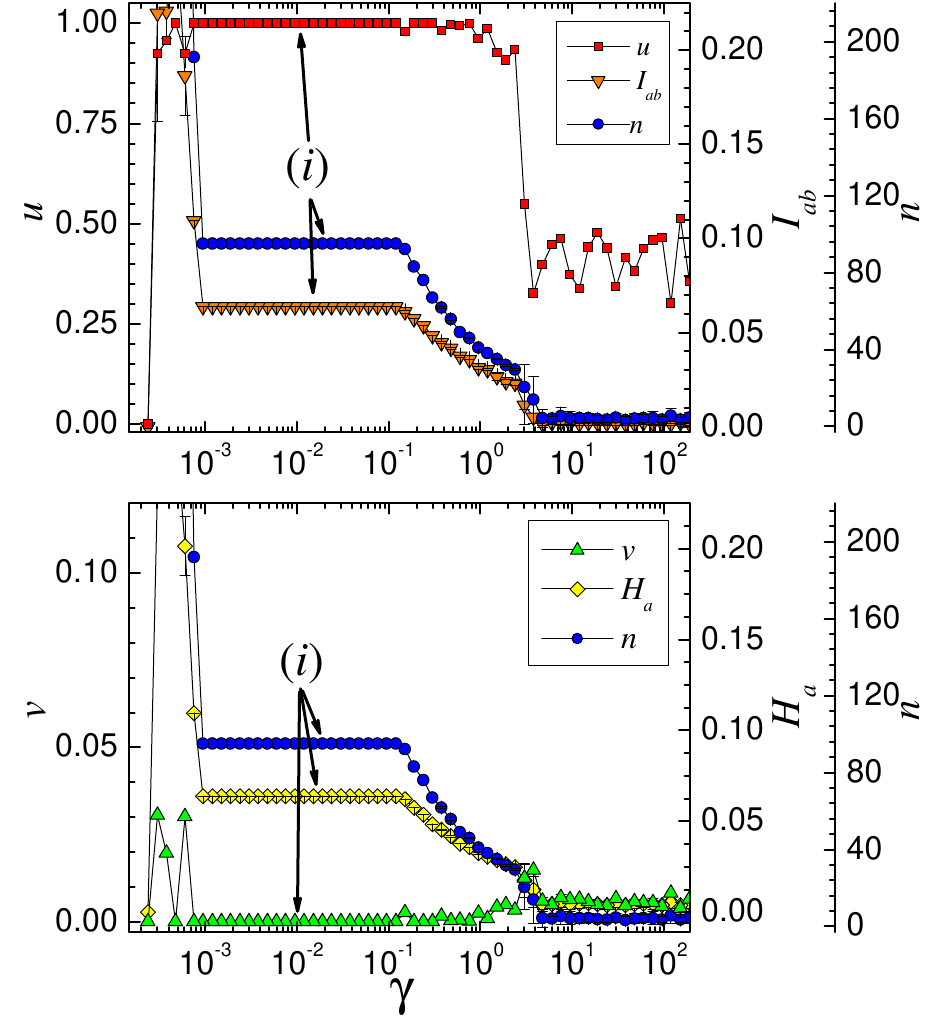}}
\subfigure[\ $\mu=0.5$]{\includegraphics[width=\widesubfigsize]{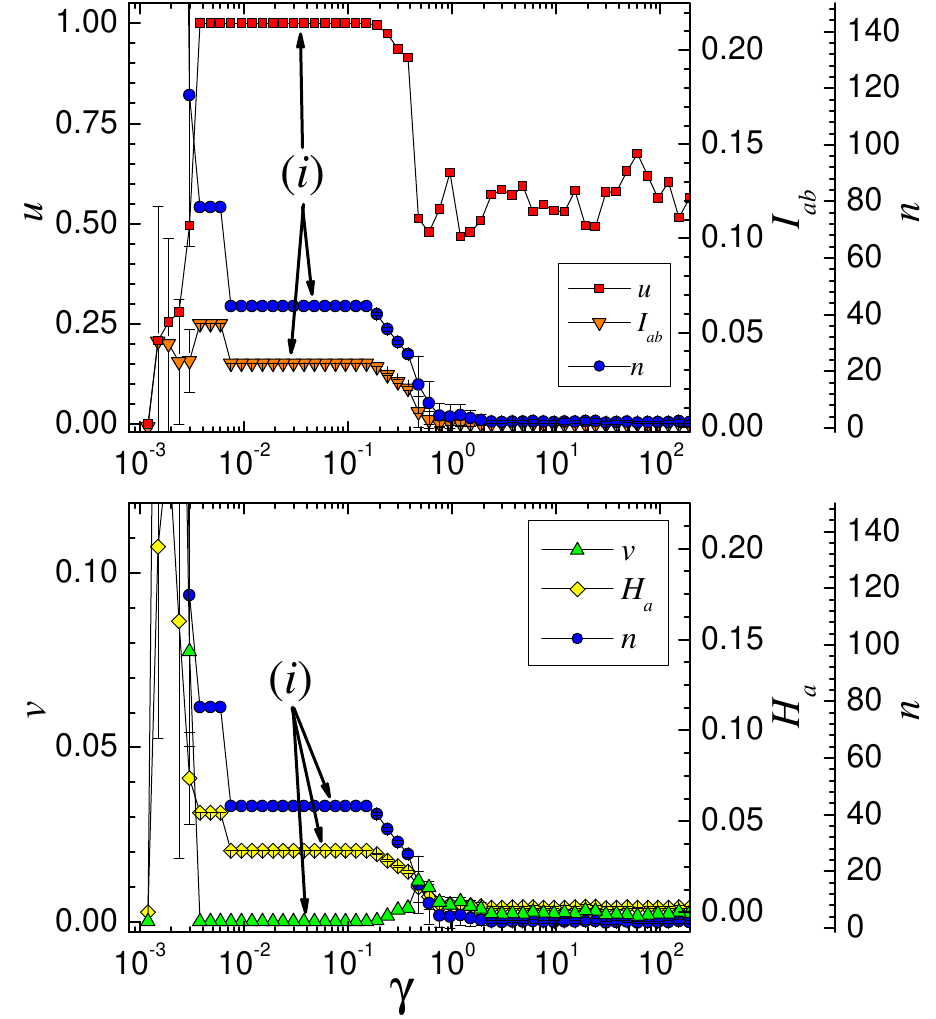}}
\subfigure[\ $\mu=0.8$]{\includegraphics[width=\widesubfigsize]{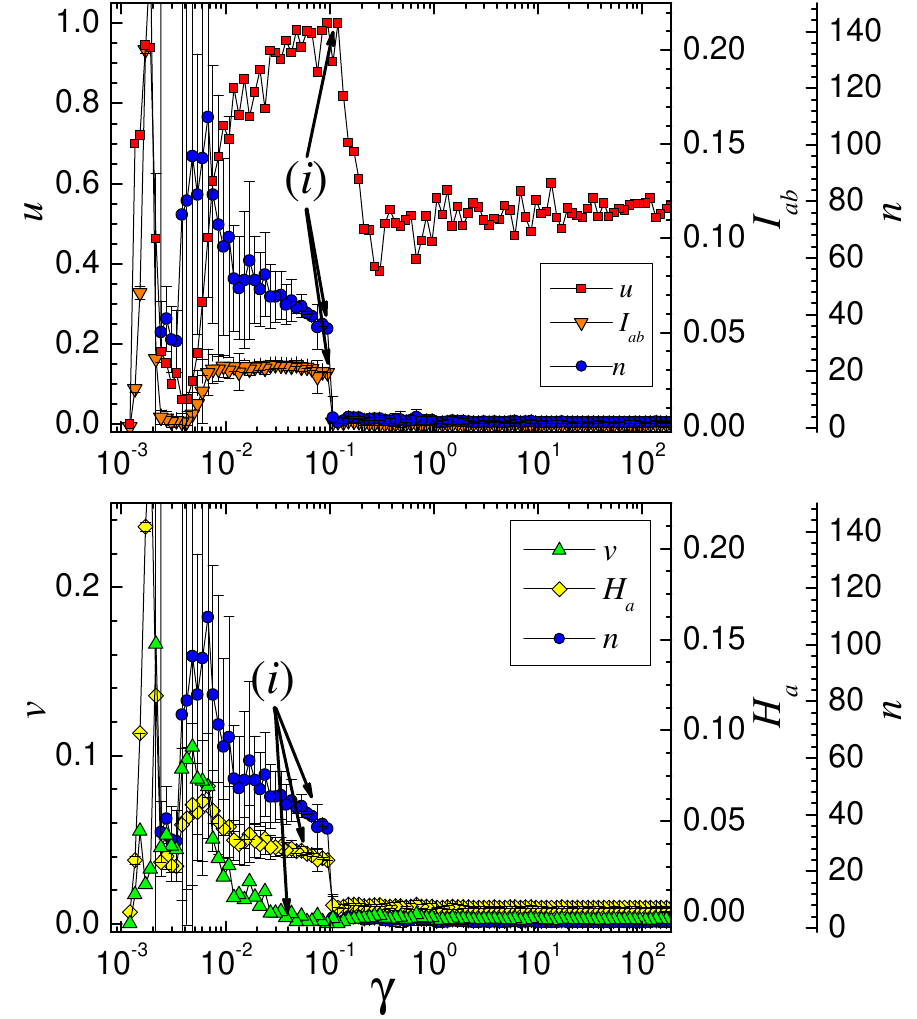}}
\caption{(Color online) In each panel, we apply our local multiresolution 
algorithm of \secref{sec:LMRA} to an implementation of the LFR benchmark \cite{ref:lancbenchmark}.
Mixing parameters are $\mu = 0.1$,  $0.5$, and $0.8$ which correspond to a low (panel a), moderately high (b), 
and extremely high (c) levels of network noise, respectively.  See the text for other benchmark parameters.
We analyze the parent communities for randomly selected nodes over a range of resolutions 
[\ie{}, values of $\gamma$ in \eqnref{eq:ourmodel}].  Other nodes display comparable multiresolution signatures.
We then calculate the cluster correlations using the community comparison measures in \secref{sec:CNMVI}.
In panels (a) and (b), the defined communities are correctly identified as feature ($i$)
with $v=0$ and $u=1$ for their parent clusters, corresponding to the correct solution 
for the global system in \figref{fig:LFRplotglobalmulow}.
The correct benchmark communities are clearly detected as stable regions in each of the various cluster measures.
In panel (b), note that $v$ is near zero for this plot beyond $\gamma\gtrsim 4$, but the CNMI 
measure clearly shows relatively low $u$ values, indicating that we should focus on the correct 
partition at feature ($i$).
Feature ($i$) in panel (c) has two perfect correlations in CNMI and CVI at $\gamma = 0.12$ and $0.095$, respectively.
The former is a trivial split of the target node into a size $n=2$ community due to the excessive noise,
and the latter is a partial-identification of the intended cluster ($u=0.88$ with the intended construction),
which represents a transient, partially-false-positive identification since $u=1$ among the algorithm replicas.
However, the exact intended partition likely lies beyond the detection capability of current CD algorithms
(see \cite{ref:lancLFRcompare} and also \cite{ref:decelleKMZPT,ref:huCDPTsgd,ref:darstCDSBMdef} which discuss 
maximum detectability limits and transitions), and there is some concern as to whether the intended partition 
is even well-defined.}
\label{fig:LFRplotlocalmulow}
\end{figure*}


\section{Conclusion} \label{sec:conclusion}

Multiresolution network analysis extends the basic notions of community 
detection to select the best resolution(s) for a given network over 
a range of network scales.
Certain networks may present situations where local clusters experience 
a lost-in-a-crowd effect.
Despite being strongly defined, the local structure may be lost among 
a collection of more poorly defined communities at a given resolution.
This may occur due to the sheer size of a network or because most clusters 
do not coalesce in their strongest state(s) at the same scale(s).

We presented an extension of an existing global multiresolution method 
\cite{ref:rzmultires} to detect and quantitatively assess local multiresolution 
structure.
We proposed cluster-level analogies to variation of information and normalized 
mutual information which evaluate the strength of local communities in the 
context of a pair of network partitions. 
After applying these measures to evaluate correlations among individual parent 
communities in multiple independent solutions (replicas), we demonstrated 
that the proposed local multiresolution algorithm is able to identify the
best resolutions and extract the local 
structure despite a blurred global multiresolution signature.
In addition, we demonstrated that the algorithm correctly identifies individual 
clusters in the single-layered structure of the LFR benchmark.
Our approach only requires output communities from a multiresolution capable 
CD algorithm, so it is independent of the search algorithm or CD 
model, making it suitable for use with any CD method 
that can identify partitions across different network scales.

\section*{ACKNOWLEDGMENTS}

This work was supported by NSF grant DMR-1106293 (ZN).
We wish to thank S. Chakrabarty, R. K. Darst, P. Johnson, and D. Hu for 
discussions and ongoing work.
ZN also thanks the Aspen Center for Physics and NSF Grant
\#1066293 for hospitality during the final stages of this work.

\appendix

\section{Local and global terminology}  \label{app:localglobal}

The meaning of the terms local and global depends on the context.
For our purposes, global \emph{cost functions} are those that \emph{require} 
network wide (global) parameters (e.g., number of edges $L$,
number of communities $q$, overall graph density $p$, etc.) in the 
quantitative evaluation of community structure \cite{ref:gn,ref:smcd}.
Global \emph{multiresolution methods} are those for which the best partition
is simultaneously determined for the entire system, effectively averaging 
the partition robustness over all communities.
This is true regardless of whether the cost function is itself local or global 
in nature.

Local \emph{cost functions} \cite{ref:rzmultires,ref:rzlocal,ref:traaglocalscope} 
or algorithms \cite{ref:LPA} utilize parameters only in the neighborhood 
of a community or node (e.g., size of community $a$, edges of node $i$, etc.) 
to evaluate the best community structure.
These can be subdivided into weak and strong local cost functions 
\cite{ref:rzlocal} where weakly-local cost functions may depend on the details
of the community structure.
Briefly, strongly local cost functions determine community membership for a given 
node based only on the node's own relations with candidate communities.
Local \emph{multiresolution methods}, such as the current work, seek to identify
the best communities based on their own robustness at a given resolution.
That is, each community determines whether it is strongly 
defined regardless of the community structure present in the remainder of the network.
The evaluation of the best resolution is not effectively averaged over all the 
communities in the graph. 
Further, each community may be strongly resolved at different network scales (often 
described in terms of certain model weight parameters).

\section{Semi-metric property of CVI} \label{app:semimetricproof}

A semi-metric possesses intuitive ``distance-like'' properties
for comparing cluster similarity.
The proof that CVI is a semi-metric is trivial.
A measure $S(a,b)$ on a set $X$ with two variables $a$ and $b$ 
in $X$ is a semi-metric if and only if it satisfies the following
conditions:
\begin{itemize}
	\item Non-negativity -- $S(a,b)\geq 0$ for all $a$ and $b$.
	\item Zero only for equality -- $S(a,b)=0$ if and only if $a=b$.
	\item Symmetry -- $S(a,b)=S(b,a)$ for all $a$ and $b$.
\end{itemize}
$S(a,b)$ is a metric if it additionally satisfies the triangle inequality 
$S(a,c) \leq S(a,b) + S(b,c)$ for three variables $a$, $b$, and $c$ in $X$.

\begin{theorem}
CVI in \eqnref{eq:CVI} is a semi-metric between two clusters $a$ and 
$b$ in partitions $A$ and $B$ of size $|A|=|B|=N$ in the space of 
possible partitions of the $N$ nodes:
\emph{($1$)} It is non-negative and equal to zero if and only if $a=b$.
\emph{($2$)} It is symmetric with respect to clusters $(a,A)$ and 
$(b,B)$, $\vab = \vba$.
\end{theorem}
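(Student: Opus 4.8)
The plan is to dispatch the two claims in order of difficulty: the symmetry assertion in part (2) is essentially free, so I would state it first and then concentrate on the non-negativity and the ``zero iff $a=b$'' condition in part (1), which is where the substance lies. For symmetry, I would simply observe that swapping $(a,A)\leftrightarrow(b,B)$ fixes $\vab$ in \eqnref{eq:CVI}: the pair $H_a(A)+H_b(B)$ of entropy contributions from \eqnref{eq:Ha} is symmetric by commutativity of addition, and the mutual-information contribution $I_{ab}(A,B)$ of \eqnref{eq:IabAB} is unchanged because the overlap count obeys $n_{ab}=n_{ba}$ while the denominator satisfies $n_an_b=n_bn_a$. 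Hence $\vab=\vba$ with no computation.

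For non-negativity, the key step I would isolate is the cluster analogue of the information inequality $I\le\min(H_a,H_b)$. Rewriting the logarithm in \eqnref{eq:IabAB} as $\log(n_{ab}N/n_an_b)=\log(n_{ab}/n_b)+\log(N/n_a)$ and collecting terms gives the telescoping identity
\[
H_a(A)-I_{ab}(A,B)=\frac{n_a-n_{ab}}{N}\log\frac{N}{n_a}+\frac{n_{ab}}{N}\log\frac{n_b}{n_{ab}}.
\]
Every factor here is manifestly non-negative because $n_{ab}\le n_a\le N$ and $n_{ab}\le n_b$, so $I_{ab}(A,B)\le H_a(A)$; the same regrouping with $a$ and $b$ interchanged yields $I_{ab}(A,B)\le H_b(B)$. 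Adding these two inequalities and substituting into \eqnref{eq:CVI} gives $\vab=[H_a-I_{ab}]+[H_b-I_{ab}]\ge0$.

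For the equality characterization I would read the vanishing condition directly off the same decomposition. Since $\vab$ is a sum of the two non-negative brackets, $\vab=0$ forces each bracket to vanish, and since each bracket is itself a sum of two non-negative terms, all four terms must be zero. Assuming the clusters are non-empty, the case $n_{ab}=0$ is excluded quickly (it would force $n_a=N$ and hence $n_b=0$), so $n_{ab}>0$; the vanishing of the $\log(n_b/n_{ab})$ and $\log(n_a/n_{ab})$ terms then forces $n_{ab}=n_b$ and $n_{ab}=n_a$, i.e. $n_a=n_b=n_{ab}$, which is exactly $a=b$. The converse is the trivial check that $I_{aa}(A,A)=H_a(A)$, whence $\vab=2H_a-2H_a=0$.

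The main obstacle is not any individual step but keeping the equality analysis honest: setting the full expression equal to zero and attempting to invert the non-injective map $x\mapsto x\log(N/x)$ stalls, since that map is not one-to-one on $(0,N]$. The real work is therefore the choice of the telescoping decomposition above, after which both non-negativity and the sharp equality condition fall out by inspection. I expect no trouble with boundary cases beyond explicitly excluding empty clusters, and I would note that the base-$2$ convention of \secref{sec:NMVI} is immaterial here because changing the logarithm base preserves all the signs used.
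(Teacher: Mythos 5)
Your proposal is correct and takes essentially the same route as the paper: your two brackets $H_a(A)-I_{ab}(A,B)$ and $H_b(B)-I_{ab}(A,B)$ sum to precisely the paper's four-term non-negative decomposition of $\vab$, and your equality analysis (ruling out $n_{ab}=0$ by contradiction with non-empty clusters, then forcing $n_a=n_b=n_{ab}$) and the symmetry argument via $n_{ab}=n_{ba}$ mirror the paper's proof step for step. The only cosmetic difference is that you package the decomposition as the two information inequalities $I_{ab}\le H_a$ and $I_{ab}\le H_b$ rather than writing all four terms at once.
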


\begin{proof} 
\ \newline
\noindent ($1$) It is non-negative and strictly equal to zero if and only if $a=b$.
From \eqnref{eq:CVI}
\begin{eqnarray}
  \vab  &    =   & -\frac{n_a}{N}\log\left(\frac{n_a}{N}\right) 
                     -\frac{n_b}{N}\log\left(\frac{n_b}{N}\right) \nonumber\\
        &        & - 2\frac{n_{ab}}{N} \log\left(\frac{n_{ab} N}{n_a n_b}\right) \nonumber\\
        &    =   &   \frac{n_a - n_{ab}}{N}\log\left(\frac{N}{n_a}\right) 
                       + \frac{n_b - n_{ab}}{N}\log\left(\frac{N}{n_b}\right) \nonumber\\
        &        & + \frac{n_{ab}}{N} \log\left(\frac{n_a}{n_{ab}}\right) 
                   + \frac{n_{ab}}{N} \log\left(\frac{n_b}{n_{ab}}\right) \nonumber\\
  \vab  &   \ge  & 0
\end{eqnarray}
since $n_a>0$, $n_b>0$, $n_{ab}\ge 0$, $n_a\ge n_{ab}$, $n_b\ge n_{ab}$, 
$N\ge n_a$, and $N\ge n_b$.

Furthermore, $a=b$ implies that $n_a=n_b=n_{ab}$, which trivially yields
\begin{eqnarray}
  \vaa  &    =   & -\frac{n_a}{N}\log\left(\frac{n_a}{N}\right) 
                     -\frac{n_a}{N}\log\left(\frac{n_a}{N}\right) \nonumber\\
        &        & - 2\frac{n_a}{N} \log\left(\frac{n_a N}{n_a n_a}\right) \nonumber\\
        &    =   & -2\frac{n_a}{N}\log\left(\frac{n_a}{N}\right) 
                   + 2\frac{n_a}{N} \log\left(\frac{n_a}{N}\right) \nonumber\\
  \vaa  &   =    & 0.
\end{eqnarray}
Now, if $v(a,b)=0$, this implies
\begin{eqnarray}
  \vab  &    =   & -\frac{n_a}{N}\log\left(\frac{n_a}{N}\right) 
                     -\frac{n_b}{N}\log\left(\frac{n_b}{N}\right) \nonumber\\
        &        & - 2\frac{n_{ab}}{N} \log\left(\frac{n_{ab} N}{n_a n_b}\right) \nonumber\\
   0    &    =   &   \frac{n_a - n_{ab}}{N}\log\left(\frac{N}{n_a}\right) 
                       + \frac{n_b - n_{ab}}{N}\log\left(\frac{N}{n_b}\right) \nonumber\\
        &        & + \frac{n_{ab}}{N} \log\left(\frac{n_a}{n_{ab}}\right) 
                   + \frac{n_{ab}}{N} \log\left(\frac{n_b}{n_{ab}}\right) \label{eq:vabzero}
\end{eqnarray}
Since $n_{ab}\ge 0$, $n_a\ge n_{ab}$, $n_b\ge n_{ab}$, $N\ge n_a$, and $N\ge n_b$, 
all terms are non-negative;  so they cannot cancel each other.  
Each term must be individually zero.

First, if $n_{ab}=0$ the last two terms in \eqnref{eq:vabzero} are zero. 
Then since $n_a>0$ and $n_b>0$, $N=n_a=n_b$, but this result requires 
that $n_{ab}=N$ which contradicts the assumption that $n_{ab}=0$.  
Thus, $n_{ab}>0$.

Now, the third term in \eqnref{eq:vabzero}
vanishes only if $n_a=n_{ab}$, and fourth term is zero only if $n_b=n_{ab}$.
Thus, $n_a=n_b=n_{ab}$ which means that $a=b$.

Thus, $v(a,b)=0$ if and only if $a=b$.

\bigskip

\noindent ($2$) It is symmetric with clusters $(a,A)$ and $(b,B)$, $\vab = \vba$.
\newline
\noindent Since $n_{ab}$ is necessarily equal to $n_{ba}$, $I_{ab}(A,B)$ is 
symmetric in clusters $(a,A)$ and $(b,B)$. 
The symmetry of $\vab$ is then immediately obvious.

Thus, CVI is a semi-metric.
\end{proof}
\ \\
We have not proved the triangle inequality for CVI, making it a metric, 
but the triangle inequality appears to be violated rarely, if at all.

\section{Alternative cluster measures} \label{app:altCVI}

A tempting alternate measure for CVI might be defined based on the 
individual terms of 
\begin{eqnarray} 
  V(A,B) & = & H(A|B) + H(B|A) \nonumber\\
         & = & \sum_{a,b} \left[ \frac{n_{ab}}{N}\log\frac{n_b}{n_{ab}}
                            +\frac{n_{ab}}{N}\log\frac{n_a}{n_{ab}} \right].
\end{eqnarray} 
where $A$ and $B$ are two partitions of a graph.
From this equivalent definition of VI, the natural CVI definition for two
clusters $a$ in $A$ and $b$ in $B$ would be  
\begin{equation}
  v_{ab}'(A,B) = \frac{n_{ab}}{N}\log\frac{n_a}{n_{ab}}
              + \frac{n_{ab}}{N}\log\frac{n_b}{n_{ab}}.
  \label{eq:altCVIHAB}
\end{equation}
Unlike CVI in \eqnref{eq:CVI}, \eqnref{eq:altCVIHAB} has the nice property 
that the individual cluster contributions sum to $V(A,B)$, 
$V(A,B) = \sum_{a}^{q_A}\sum_{b}^{q_B} v_{ab}'(A,B)$.

Unfortunately, this particular measure does not work for cluster 
comparisons.
While $v_{aa}'(A,A)=0$ as desired, it is also the case that $v_{ab}'(A,B)=0$ 
if $n_{ab}=0$.
That is, it is zero if \emph{no overlap} exists between $a$ and $b$
which violates the notion of a ``distance'' as well as one of the requirements 
for being a (semi)metric. 
VI is a metric on partitions $A$ and $B$ because it \emph{sums} over all $a$ 
and $b$ in $A$ and $B$, respectively.

We could also consider an alternate \emph{ad hoc} definition by 
redefining the CVI entropy terms in \eqnref{eq:vitoviab} according 
to $v_{ab}''(A,B) = H_a(A)/q_B + H_b(B)/q_A - 2 I_{ab}(A,B)$.
This variant would again yield the desirable property $V(A,B) = 
\sum_{a}^{q_A}\sum_{b}^{q_B} v_{ab}''(A,B)$, but the measure loses the 
semi-metric requirements $v_{ab}''(A,B)\ge 0$ and $v_{aa}''(A,A)=0$.


\begin{thebibliography}{87}%
\makeatletter
\providecommand \@ifxundefined [1]{%
 \@ifx{#1\undefined}
}%
\providecommand \@ifnum [1]{%
 \ifnum #1\expandafter \@firstoftwo
 \else \expandafter \@secondoftwo
 \fi
}%
\providecommand \@ifx [1]{%
 \ifx #1\expandafter \@firstoftwo
 \else \expandafter \@secondoftwo
 \fi
}%
\providecommand \natexlab [1]{#1}%
\providecommand \enquote  [1]{``#1''}%
\providecommand \bibnamefont  [1]{#1}%
\providecommand \bibfnamefont [1]{#1}%
\providecommand \citenamefont [1]{#1}%
\providecommand \href@noop [0]{\@secondoftwo}%
\providecommand \href [0]{\begingroup \@sanitize@url \@href}%
\providecommand \@href[1]{\@@startlink{#1}\@@href}%
\providecommand \@@href[1]{\endgroup#1\@@endlink}%
\providecommand \@sanitize@url [0]{\catcode `\\12\catcode `\$12\catcode
  `\&12\catcode `\#12\catcode `\^12\catcode `\_12\catcode `\%12\relax}%
\providecommand \@@startlink[1]{}%
\providecommand \@@endlink[0]{}%
\providecommand \url  [0]{\begingroup\@sanitize@url \@url }%
\providecommand \@url [1]{\endgroup\@href {#1}{\urlprefix }}%
\providecommand \urlprefix  [0]{URL }%
\providecommand \Eprint [0]{\href }%
\providecommand \doibase [0]{http://dx.doi.org/}%
\providecommand \selectlanguage [0]{\@gobble}%
\providecommand \bibinfo  [0]{\@secondoftwo}%
\providecommand \bibfield  [0]{\@secondoftwo}%
\providecommand \translation [1]{[#1]}%
\providecommand \BibitemOpen [0]{}%
\providecommand \bibitemStop [0]{}%
\providecommand \bibitemNoStop [0]{.\EOS\space}%
\providecommand \EOS [0]{\spacefactor3000\relax}%
\providecommand \BibitemShut  [1]{\csname bibitem#1\endcsname}%
\let\auto@bib@innerbib\@empty
\bibitem [{\citenamefont {Motter}\ and\ \citenamefont
  {Albert}(2012)}]{ref:motternetworks}%
  \BibitemOpen
  \bibfield  {author} {\bibinfo {author} {\bibfnamefont {A.~E.}\ \bibnamefont
  {Motter}}\ and\ \bibinfo {author} {\bibfnamefont {R.}~\bibnamefont
  {Albert}},\ }\href {\doibase 10.1063/PT.3.1518} {\bibfield  {journal}
  {\bibinfo  {journal} {Physics Today}\ }\textbf {\bibinfo {volume} {65}},\
  \bibinfo {pages} {43} (\bibinfo {year} {2012})}\BibitemShut {NoStop}%
\bibitem [{\citenamefont {Fortunato}(2010)}]{ref:fortunatophysrep}%
  \BibitemOpen
  \bibfield  {author} {\bibinfo {author} {\bibfnamefont {S.}~\bibnamefont
  {Fortunato}},\ }\href {\doibase 10.1016/j.physrep.2010.1.???} {\bibfield
  {journal} {\bibinfo  {journal} {Phys. Rep.}\ }\textbf {\bibinfo {volume}
  {486}},\ \bibinfo {pages} {75} (\bibinfo {year} {2010})}\BibitemShut
  {NoStop}%
\bibitem [{\citenamefont {Lancichinetti}\ \emph {et~al.}(2010)\citenamefont
  {Lancichinetti}, \citenamefont {Kivel{\"a}}, \citenamefont {Saram{\"a}ki},\
  and\ \citenamefont {Fortunato}}]{ref:lanccharacter}%
  \BibitemOpen
  \bibfield  {author} {\bibinfo {author} {\bibfnamefont {A.}~\bibnamefont
  {Lancichinetti}}, \bibinfo {author} {\bibfnamefont {M.}~\bibnamefont
  {Kivel{\"a}}}, \bibinfo {author} {\bibfnamefont {J.}~\bibnamefont
  {Saram{\"a}ki}}, \ and\ \bibinfo {author} {\bibfnamefont {S.}~\bibnamefont
  {Fortunato}},\ }\href {\doibase 10.1371/journal.pone.0011976} {\bibfield
  {journal} {\bibinfo  {journal} {PLoS ONE}\ }\textbf {\bibinfo {volume} {5}},\
  \bibinfo {pages} {e11976} (\bibinfo {year} {2010})}\BibitemShut {NoStop}%
\bibitem [{\citenamefont {Ronhovde}\ \emph {et~al.}(2011)\citenamefont
  {Ronhovde}, \citenamefont {Chakrabarty}, \citenamefont {Hu}, \citenamefont
  {Sahu}, \citenamefont {Sahu}, \citenamefont {Kelton},\ and\ \citenamefont
  {Nussinov}}]{ref:RCHNstructure}%
  \BibitemOpen
  \bibfield  {author} {\bibinfo {author} {\bibfnamefont {P.}~\bibnamefont
  {Ronhovde}}, \bibinfo {author} {\bibfnamefont {S.}~\bibnamefont
  {Chakrabarty}}, \bibinfo {author} {\bibfnamefont {D.}~\bibnamefont {Hu}},
  \bibinfo {author} {\bibfnamefont {M.}~\bibnamefont {Sahu}}, \bibinfo {author}
  {\bibfnamefont {K.~K.}\ \bibnamefont {Sahu}}, \bibinfo {author}
  {\bibfnamefont {K.~F.}\ \bibnamefont {Kelton}}, \ and\ \bibinfo {author}
  {\bibfnamefont {Z.}~\bibnamefont {Nussinov}},\ }\href {\doibase
  10.1140/epje/i2011-11105-9} {\bibfield  {journal} {\bibinfo  {journal} {Euro.
  Phys. J. E}\ }\textbf {\bibinfo {volume} {34}},\ \bibinfo {pages} {105}
  (\bibinfo {year} {2011})}\BibitemShut {NoStop}%
\bibitem [{\citenamefont {Palla}\ \emph {et~al.}(2005)\citenamefont {Palla},
  \citenamefont {Der{\'e}nyi}, \citenamefont {Farkas},\ and\ \citenamefont
  {Vicsek}}]{ref:palla}%
  \BibitemOpen
  \bibfield  {author} {\bibinfo {author} {\bibfnamefont {G.}~\bibnamefont
  {Palla}}, \bibinfo {author} {\bibfnamefont {I.}~\bibnamefont {Der{\'e}nyi}},
  \bibinfo {author} {\bibfnamefont {I.}~\bibnamefont {Farkas}}, \ and\ \bibinfo
  {author} {\bibfnamefont {T.}~\bibnamefont {Vicsek}},\ }\href {\doibase
  10.1038/nature03607} {\bibfield  {journal} {\bibinfo  {journal} {Nature
  (London)}\ }\textbf {\bibinfo {volume} {435}},\ \bibinfo {pages} {814}
  (\bibinfo {year} {2005})}\BibitemShut {NoStop}%
\bibitem [{\citenamefont {Arenas}\ \emph {et~al.}(2008)\citenamefont {Arenas},
  \citenamefont {Fern{\'a}ndez},\ and\ \citenamefont
  {G{\'o}mez}}]{ref:arenasmultires}%
  \BibitemOpen
  \bibfield  {author} {\bibinfo {author} {\bibfnamefont {A.}~\bibnamefont
  {Arenas}}, \bibinfo {author} {\bibfnamefont {A.}~\bibnamefont
  {Fern{\'a}ndez}}, \ and\ \bibinfo {author} {\bibfnamefont {S.}~\bibnamefont
  {G{\'o}mez}},\ }\href {\doibase 10.1088/1367-2630/10/5/053039} {\bibfield
  {journal} {\bibinfo  {journal} {New J. Phys.}\ }\textbf {\bibinfo {volume}
  {10}},\ \bibinfo {pages} {053039} (\bibinfo {year} {2008})}\BibitemShut
  {NoStop}%
\bibitem [{\citenamefont {Kumpula}\ \emph
  {et~al.}(2007{\natexlab{a}})\citenamefont {Kumpula}, \citenamefont
  {Saram{\"a}ki}, \citenamefont {Kaski},\ and\ \citenamefont
  {Kert{\'e}sz}}]{ref:kumpulamultires}%
  \BibitemOpen
  \bibfield  {author} {\bibinfo {author} {\bibfnamefont {J.~M.}\ \bibnamefont
  {Kumpula}}, \bibinfo {author} {\bibfnamefont {J.}~\bibnamefont
  {Saram{\"a}ki}}, \bibinfo {author} {\bibfnamefont {K.}~\bibnamefont {Kaski}},
  \ and\ \bibinfo {author} {\bibfnamefont {J.}~\bibnamefont {Kert{\'e}sz}},\
  }\href {\doibase 10.1142/S0219477507003854} {\bibfield  {journal} {\bibinfo
  {journal} {Fluct. Noise Lett.}\ }\textbf {\bibinfo {volume} {7}},\ \bibinfo
  {pages} {L209} (\bibinfo {year} {2007}{\natexlab{a}})}\BibitemShut {NoStop}%
\bibitem [{\citenamefont {Fenn}\ \emph {et~al.}(2009)\citenamefont {Fenn},
  \citenamefont {Porter}, \citenamefont {McDonald}, \citenamefont {Williams},
  \citenamefont {Johnson},\ and\ \citenamefont {Jones}}]{ref:fenndynamic}%
  \BibitemOpen
  \bibfield  {author} {\bibinfo {author} {\bibfnamefont {D.~J.}\ \bibnamefont
  {Fenn}}, \bibinfo {author} {\bibfnamefont {M.~A.}\ \bibnamefont {Porter}},
  \bibinfo {author} {\bibfnamefont {M.}~\bibnamefont {McDonald}}, \bibinfo
  {author} {\bibfnamefont {S.}~\bibnamefont {Williams}}, \bibinfo {author}
  {\bibfnamefont {N.~F.}\ \bibnamefont {Johnson}}, \ and\ \bibinfo {author}
  {\bibfnamefont {N.~S.}\ \bibnamefont {Jones}},\ }\href {\doibase
  10.1063/1.3184538} {\bibfield  {journal} {\bibinfo  {journal} {Chaos}\
  }\textbf {\bibinfo {volume} {19}},\ \bibinfo {pages} {033119} (\bibinfo
  {year} {2009})}\BibitemShut {NoStop}%
\bibitem [{\citenamefont {Rosvall}\ and\ \citenamefont
  {Bergstrom}(2011)}]{ref:rosvallmultires}%
  \BibitemOpen
  \bibfield  {author} {\bibinfo {author} {\bibfnamefont {M.}~\bibnamefont
  {Rosvall}}\ and\ \bibinfo {author} {\bibfnamefont {C.~T.}\ \bibnamefont
  {Bergstrom}},\ }\href {\doibase 10.1371/journal.pone.0018209} {\bibfield
  {journal} {\bibinfo  {journal} {PLoS ONE}\ }\textbf {\bibinfo {volume} {6}},\
  \bibinfo {pages} {e18209} (\bibinfo {year} {2011})}\BibitemShut {NoStop}%
\bibitem [{\citenamefont {Cheng}\ and\ \citenamefont
  {Shen}(2010)}]{ref:chengshen}%
  \BibitemOpen
  \bibfield  {author} {\bibinfo {author} {\bibfnamefont {X.-Q.}\ \bibnamefont
  {Cheng}}\ and\ \bibinfo {author} {\bibfnamefont {H.-W.}\ \bibnamefont
  {Shen}},\ }\href {\doibase 10.1088/1742-5468/2010/04/P04024} {\bibfield
  {journal} {\bibinfo  {journal} {J. Stat. Mech.}\ }\textbf {\bibinfo {volume}
  {04}},\ \bibinfo {pages} {P04024} (\bibinfo {year} {2010})}\BibitemShut
  {NoStop}%
\bibitem [{\citenamefont {Ronhovde}\ and\ \citenamefont
  {Nussinov}(2009)}]{ref:rzmultires}%
  \BibitemOpen
  \bibfield  {author} {\bibinfo {author} {\bibfnamefont {P.}~\bibnamefont
  {Ronhovde}}\ and\ \bibinfo {author} {\bibfnamefont {Z.}~\bibnamefont
  {Nussinov}},\ }\href {\doibase 10.1103/PhysRevE.80.016109} {\bibfield
  {journal} {\bibinfo  {journal} {Phys. Rev. E}\ }\textbf {\bibinfo {volume}
  {80}},\ \bibinfo {pages} {016109} (\bibinfo {year} {2009})}\BibitemShut
  {NoStop}%
\bibitem [{\citenamefont {Danon}\ \emph {et~al.}(2005)\citenamefont {Danon},
  \citenamefont {D{\'i}az-Guilera}, \citenamefont {Duch},\ and\ \citenamefont
  {Arenas}}]{ref:danon}%
  \BibitemOpen
  \bibfield  {author} {\bibinfo {author} {\bibfnamefont {L.}~\bibnamefont
  {Danon}}, \bibinfo {author} {\bibfnamefont {A.}~\bibnamefont
  {D{\'i}az-Guilera}}, \bibinfo {author} {\bibfnamefont {J.}~\bibnamefont
  {Duch}}, \ and\ \bibinfo {author} {\bibfnamefont {A.}~\bibnamefont
  {Arenas}},\ }\href {\doibase 10.1088/1742-5468/2005/09/P09008} {\bibfield
  {journal} {\bibinfo  {journal} {J. Stat. Mech.}\ }\textbf {\bibinfo {volume}
  {09}},\ \bibinfo {pages} {P09008} (\bibinfo {year} {2005})}\BibitemShut
  {NoStop}%
\bibitem [{\citenamefont {Meil{\u a}}(2007)}]{ref:vi}%
  \BibitemOpen
  \bibfield  {author} {\bibinfo {author} {\bibfnamefont {M.}~\bibnamefont
  {Meil{\u a}}},\ }\href {\doibase 10.1016/j.jmva.2006.11.013} {\bibfield
  {journal} {\bibinfo  {journal} {J. Multivariate Anal.}\ }\textbf {\bibinfo
  {volume} {98}},\ \bibinfo {pages} {873} (\bibinfo {year} {2007})}\BibitemShut
  {NoStop}%
\bibitem [{\citenamefont {Newman}\ and\ \citenamefont {Girvan}(2004)}]{ref:gn}%
  \BibitemOpen
  \bibfield  {author} {\bibinfo {author} {\bibfnamefont {M.~E.~J.}\
  \bibnamefont {Newman}}\ and\ \bibinfo {author} {\bibfnamefont
  {M.}~\bibnamefont {Girvan}},\ }\href {\doibase 10.1103/PhysRevE.69.026113}
  {\bibfield  {journal} {\bibinfo  {journal} {Phys. Rev. E}\ }\textbf {\bibinfo
  {volume} {69}},\ \bibinfo {pages} {026113} (\bibinfo {year}
  {2004})}\BibitemShut {NoStop}%
\bibitem [{\citenamefont {Raghavan}\ \emph {et~al.}(2007)\citenamefont
  {Raghavan}, \citenamefont {Albert},\ and\ \citenamefont {Kumara}}]{ref:LPA}%
  \BibitemOpen
  \bibfield  {author} {\bibinfo {author} {\bibfnamefont {U.~N.}\ \bibnamefont
  {Raghavan}}, \bibinfo {author} {\bibfnamefont {R.}~\bibnamefont {Albert}}, \
  and\ \bibinfo {author} {\bibfnamefont {S.}~\bibnamefont {Kumara}},\ }\href
  {\doibase 10.1103/PhysRevE.76.036106} {\bibfield  {journal} {\bibinfo
  {journal} {Phys. Rev. E}\ }\textbf {\bibinfo {volume} {76}},\ \bibinfo
  {pages} {036106} (\bibinfo {year} {2007})}\BibitemShut {NoStop}%
\bibitem [{\citenamefont {Barber}\ and\ \citenamefont
  {Clark}(2009)}]{ref:barberLPA}%
  \BibitemOpen
  \bibfield  {author} {\bibinfo {author} {\bibfnamefont {M.~J.}\ \bibnamefont
  {Barber}}\ and\ \bibinfo {author} {\bibfnamefont {J.~W.}\ \bibnamefont
  {Clark}},\ }\href {\doibase 10.1103/PhysRevE.80.026129} {\bibfield  {journal}
  {\bibinfo  {journal} {Phys. Rev. E}\ }\textbf {\bibinfo {volume} {80}},\
  \bibinfo {pages} {026129} (\bibinfo {year} {2009})}\BibitemShut {NoStop}%
\bibitem [{\citenamefont {Reichardt}\ and\ \citenamefont
  {Bornholdt}(2006)}]{ref:smcd}%
  \BibitemOpen
  \bibfield  {author} {\bibinfo {author} {\bibfnamefont {J.}~\bibnamefont
  {Reichardt}}\ and\ \bibinfo {author} {\bibfnamefont {S.}~\bibnamefont
  {Bornholdt}},\ }\href {\doibase 10.1103/PhysRevE.74.016110} {\bibfield
  {journal} {\bibinfo  {journal} {Phys. Rev. E}\ }\textbf {\bibinfo {volume}
  {74}},\ \bibinfo {pages} {016110} (\bibinfo {year} {2006})}\BibitemShut
  {NoStop}%
\bibitem [{\citenamefont {Reichardt}\ and\ \citenamefont
  {Bornholdt}(2004)}]{ref:reichardt}%
  \BibitemOpen
  \bibfield  {author} {\bibinfo {author} {\bibfnamefont {J.}~\bibnamefont
  {Reichardt}}\ and\ \bibinfo {author} {\bibfnamefont {S.}~\bibnamefont
  {Bornholdt}},\ }\href {\doibase 10.1103/PhysRevLett.93.218701} {\bibfield
  {journal} {\bibinfo  {journal} {Phys. Rev. Lett.}\ }\textbf {\bibinfo
  {volume} {93}},\ \bibinfo {pages} {218701} (\bibinfo {year}
  {2004})}\BibitemShut {NoStop}%
\bibitem [{\citenamefont {Fortunato}\ and\ \citenamefont
  {Barth{\'e}lemy}(2007)}]{ref:fortunato}%
  \BibitemOpen
  \bibfield  {author} {\bibinfo {author} {\bibfnamefont {S.}~\bibnamefont
  {Fortunato}}\ and\ \bibinfo {author} {\bibfnamefont {M.}~\bibnamefont
  {Barth{\'e}lemy}},\ }\href {\doibase 10.1073/pnas.0605965104} {\bibfield
  {journal} {\bibinfo  {journal} {Proc. Natl. Aca. Sci. U.S.A.}\ }\textbf
  {\bibinfo {volume} {104}},\ \bibinfo {pages} {36} (\bibinfo {year}
  {2007})}\BibitemShut {NoStop}%
\bibitem [{\citenamefont {Kumpula}\ \emph
  {et~al.}(2007{\natexlab{b}})\citenamefont {Kumpula}, \citenamefont
  {Saram{\"a}ki}, \citenamefont {Kaski},\ and\ \citenamefont
  {Kert{\'e}sz}}]{ref:kumpulaResLim}%
  \BibitemOpen
  \bibfield  {author} {\bibinfo {author} {\bibfnamefont {J.~M.}\ \bibnamefont
  {Kumpula}}, \bibinfo {author} {\bibfnamefont {J.}~\bibnamefont
  {Saram{\"a}ki}}, \bibinfo {author} {\bibfnamefont {K.}~\bibnamefont {Kaski}},
  \ and\ \bibinfo {author} {\bibfnamefont {J.}~\bibnamefont {Kert{\'e}sz}},\
  }\href {\doibase 10.1140/epjb/e2007-00088-4} {\bibfield  {journal} {\bibinfo
  {journal} {Euro. Phys. J. B}\ }\textbf {\bibinfo {volume} {56}},\ \bibinfo
  {pages} {41} (\bibinfo {year} {2007}{\natexlab{b}})}\BibitemShut {NoStop}%
\bibitem [{\citenamefont {Zhang}\ \emph
  {et~al.}(2009{\natexlab{a}})\citenamefont {Zhang}, \citenamefont {Wang},
  \citenamefont {Wang}, \citenamefont {Wang}, \citenamefont {Qiu},
  \citenamefont {Wang},\ and\ \citenamefont {Chen}}]{ref:zhangmodweak}%
  \BibitemOpen
  \bibfield  {author} {\bibinfo {author} {\bibfnamefont {X.~S.}\ \bibnamefont
  {Zhang}}, \bibinfo {author} {\bibfnamefont {R.~S.}\ \bibnamefont {Wang}},
  \bibinfo {author} {\bibfnamefont {Y.}~\bibnamefont {Wang}}, \bibinfo {author}
  {\bibfnamefont {J.}~\bibnamefont {Wang}}, \bibinfo {author} {\bibfnamefont
  {Y.}~\bibnamefont {Qiu}}, \bibinfo {author} {\bibfnamefont {L.}~\bibnamefont
  {Wang}}, \ and\ \bibinfo {author} {\bibfnamefont {L.}~\bibnamefont {Chen}},\
  }\href {\doibase 10.1209/0295-5075/87/38002} {\bibfield  {journal} {\bibinfo
  {journal} {Europhys. Lett.}\ }\textbf {\bibinfo {volume} {87}},\ \bibinfo
  {pages} {38002} (\bibinfo {year} {2009}{\natexlab{a}})}\BibitemShut {NoStop}%
\bibitem [{\citenamefont {Lancichinetti}\ and\ \citenamefont
  {Fortunato}(2011)}]{ref:lancfortunatomod}%
  \BibitemOpen
  \bibfield  {author} {\bibinfo {author} {\bibfnamefont {A.}~\bibnamefont
  {Lancichinetti}}\ and\ \bibinfo {author} {\bibfnamefont {S.}~\bibnamefont
  {Fortunato}},\ }\href {\doibase 10.1103/PhysRevE.84.066122} {\bibfield
  {journal} {\bibinfo  {journal} {Phys. Rev. E}\ }\textbf {\bibinfo {volume}
  {84}},\ \bibinfo {pages} {066122} (\bibinfo {year} {2011})}\BibitemShut
  {NoStop}%
\bibitem [{\citenamefont {Xiang}\ and\ \citenamefont
  {Hu}(2012)}]{ref:xiangmultireslimit}%
  \BibitemOpen
  \bibfield  {author} {\bibinfo {author} {\bibfnamefont {J.}~\bibnamefont
  {Xiang}}\ and\ \bibinfo {author} {\bibfnamefont {K.}~\bibnamefont {Hu}},\
  }\href {\doibase 10.1016/j.physa.2012.05.006} {\bibfield  {journal} {\bibinfo
   {journal} {Physica A}\ }\textbf {\bibinfo {volume} {391}},\ \bibinfo {pages}
  {4995} (\bibinfo {year} {2012})}\BibitemShut {NoStop}%
\bibitem [{\citenamefont {Holland}\ \emph {et~al.}(1983)\citenamefont
  {Holland}, \citenamefont {Laskey},\ and\ \citenamefont
  {Leinhardt}}]{ref:holland1983stochastic}%
  \BibitemOpen
  \bibfield  {author} {\bibinfo {author} {\bibfnamefont {P.~W.}\ \bibnamefont
  {Holland}}, \bibinfo {author} {\bibfnamefont {K.~B.}\ \bibnamefont {Laskey}},
  \ and\ \bibinfo {author} {\bibfnamefont {S.}~\bibnamefont {Leinhardt}},\
  }\href@noop {} {\bibfield  {journal} {\bibinfo  {journal} {Social networks}\
  }\textbf {\bibinfo {volume} {5}},\ \bibinfo {pages} {109} (\bibinfo {year}
  {1983})}\BibitemShut {NoStop}%
\bibitem [{\citenamefont {Snijders}\ and\ \citenamefont
  {Nowicki}(1997)}]{ref:snijders1997estimation}%
  \BibitemOpen
  \bibfield  {author} {\bibinfo {author} {\bibfnamefont {T.~A.}\ \bibnamefont
  {Snijders}}\ and\ \bibinfo {author} {\bibfnamefont {K.}~\bibnamefont
  {Nowicki}},\ }\href@noop {} {\bibfield  {journal} {\bibinfo  {journal} {J.
  classification}\ }\textbf {\bibinfo {volume} {14}},\ \bibinfo {pages} {75}
  (\bibinfo {year} {1997})}\BibitemShut {NoStop}%
\bibitem [{\citenamefont {Newman}\ and\ \citenamefont
  {Leicht}(2007)}]{ref:newman2007mixture}%
  \BibitemOpen
  \bibfield  {author} {\bibinfo {author} {\bibfnamefont {M.~E.}\ \bibnamefont
  {Newman}}\ and\ \bibinfo {author} {\bibfnamefont {E.~A.}\ \bibnamefont
  {Leicht}},\ }\href@noop {} {\bibfield  {journal} {\bibinfo  {journal} {PNAS}\
  }\textbf {\bibinfo {volume} {104}},\ \bibinfo {pages} {9564} (\bibinfo {year}
  {2007})}\BibitemShut {NoStop}%
\bibitem [{\citenamefont {Latouche}\ \emph {et~al.}(2012)\citenamefont
  {Latouche}, \citenamefont {Birmele},\ and\ \citenamefont
  {Ambroise}}]{ref:latouche2012variational}%
  \BibitemOpen
  \bibfield  {author} {\bibinfo {author} {\bibfnamefont {P.}~\bibnamefont
  {Latouche}}, \bibinfo {author} {\bibfnamefont {E.}~\bibnamefont {Birmele}}, \
  and\ \bibinfo {author} {\bibfnamefont {C.}~\bibnamefont {Ambroise}},\
  }\href@noop {} {\bibfield  {journal} {\bibinfo  {journal} {Statistical
  Modelling}\ }\textbf {\bibinfo {volume} {12}},\ \bibinfo {pages} {93}
  (\bibinfo {year} {2012})}\BibitemShut {NoStop}%
\bibitem [{\citenamefont {Decelle}\ \emph
  {et~al.}(2011{\natexlab{a}})\citenamefont {Decelle}, \citenamefont
  {Krzakala}, \citenamefont {Moore},\ and\ \citenamefont
  {Zdeborov{\'a}}}]{ref:decelleKMZPT}%
  \BibitemOpen
  \bibfield  {author} {\bibinfo {author} {\bibfnamefont {A.}~\bibnamefont
  {Decelle}}, \bibinfo {author} {\bibfnamefont {F.}~\bibnamefont {Krzakala}},
  \bibinfo {author} {\bibfnamefont {C.}~\bibnamefont {Moore}}, \ and\ \bibinfo
  {author} {\bibfnamefont {L.}~\bibnamefont {Zdeborov{\'a}}},\ }\href {\doibase
  10.1103/PhysRevLett.107.065701} {\bibfield  {journal} {\bibinfo  {journal}
  {Phys. Rev. Lett.}\ }\textbf {\bibinfo {volume} {107}},\ \bibinfo {pages}
  {065701} (\bibinfo {year} {2011}{\natexlab{a}})}\BibitemShut {NoStop}%
\bibitem [{\citenamefont {Decelle}\ \emph
  {et~al.}(2011{\natexlab{b}})\citenamefont {Decelle}, \citenamefont
  {Krzakala}, \citenamefont {Moore},\ and\ \citenamefont
  {Zdeborov{\'a}}}]{ref:decelle2011asymptotic}%
  \BibitemOpen
  \bibfield  {author} {\bibinfo {author} {\bibfnamefont {A.}~\bibnamefont
  {Decelle}}, \bibinfo {author} {\bibfnamefont {F.}~\bibnamefont {Krzakala}},
  \bibinfo {author} {\bibfnamefont {C.}~\bibnamefont {Moore}}, \ and\ \bibinfo
  {author} {\bibfnamefont {L.}~\bibnamefont {Zdeborov{\'a}}},\ }\href {\doibase
  10.1103/PhysRevE.84.066106} {\bibfield  {journal} {\bibinfo  {journal} {Phys.
  Rev. E}\ }\textbf {\bibinfo {volume} {84}},\ \bibinfo {pages} {066106}
  (\bibinfo {year} {2011}{\natexlab{b}})}\BibitemShut {NoStop}%
\bibitem [{\citenamefont {Hu}\ \emph {et~al.}(2012{\natexlab{a}})\citenamefont
  {Hu}, \citenamefont {Ronhovde},\ and\ \citenamefont
  {Nussinov}}]{ref:hu2012phase}%
  \BibitemOpen
  \bibfield  {author} {\bibinfo {author} {\bibfnamefont {D.}~\bibnamefont
  {Hu}}, \bibinfo {author} {\bibfnamefont {P.}~\bibnamefont {Ronhovde}}, \ and\
  \bibinfo {author} {\bibfnamefont {Z.}~\bibnamefont {Nussinov}},\ }\href@noop
  {} {\bibfield  {journal} {\bibinfo  {journal} {Philosophical Magazine}\
  }\textbf {\bibinfo {volume} {92}},\ \bibinfo {pages} {406} (\bibinfo {year}
  {2012}{\natexlab{a}})}\BibitemShut {NoStop}%
\bibitem [{\citenamefont {Hu}\ \emph {et~al.}(2012{\natexlab{b}})\citenamefont
  {Hu}, \citenamefont {Ronhovde},\ and\ \citenamefont
  {Nussinov}}]{ref:huCDPTlong}%
  \BibitemOpen
  \bibfield  {author} {\bibinfo {author} {\bibfnamefont {D.}~\bibnamefont
  {Hu}}, \bibinfo {author} {\bibfnamefont {P.}~\bibnamefont {Ronhovde}}, \ and\
  \bibinfo {author} {\bibfnamefont {Z.}~\bibnamefont {Nussinov}},\ }\href
  {\doibase 10.1103/PhysRevE.86.066106} {\bibfield  {journal} {\bibinfo
  {journal} {Phys. Rev. E}\ }\textbf {\bibinfo {volume} {86}},\ \bibinfo
  {pages} {066106} (\bibinfo {year} {2012}{\natexlab{b}})}\BibitemShut
  {NoStop}%
\bibitem [{\citenamefont {Darst}\ \emph {et~al.}(2014)\citenamefont {Darst},
  \citenamefont {Reichman}, \citenamefont {Ronhovde},\ and\ \citenamefont
  {Nussinov}}]{ref:darstCDSBMdef}%
  \BibitemOpen
  \bibfield  {author} {\bibinfo {author} {\bibfnamefont {R.~K.}\ \bibnamefont
  {Darst}}, \bibinfo {author} {\bibfnamefont {D.~R.}\ \bibnamefont {Reichman}},
  \bibinfo {author} {\bibfnamefont {P.}~\bibnamefont {Ronhovde}}, \ and\
  \bibinfo {author} {\bibfnamefont {Z.}~\bibnamefont {Nussinov}},\ }\href
  {\doibase 10.1093/comnet/cnu042} {\bibfield  {journal} {\bibinfo  {journal}
  {J. Complex Networks}\ }\textbf {\bibinfo {volume} {2}} (\bibinfo {year}
  {2014}),\ 10.1093/comnet/cnu042}\BibitemShut {NoStop}%
\bibitem [{\citenamefont {Karrer}\ and\ \citenamefont
  {Newman}(2011)}]{ref:karrer2011stochastic}%
  \BibitemOpen
  \bibfield  {author} {\bibinfo {author} {\bibfnamefont {B.}~\bibnamefont
  {Karrer}}\ and\ \bibinfo {author} {\bibfnamefont {M.~E.~J.}\ \bibnamefont
  {Newman}},\ }\href {\doibase 10.1103/PhysRevE.83.016107} {\bibfield
  {journal} {\bibinfo  {journal} {Phys. Rev. E}\ }\textbf {\bibinfo {volume}
  {83}},\ \bibinfo {pages} {016107} (\bibinfo {year} {2011})}\BibitemShut
  {NoStop}%
\bibitem [{\citenamefont {Zhu}\ \emph {et~al.}(2014)\citenamefont {Zhu},
  \citenamefont {Yan},\ and\ \citenamefont {Moore}}]{ref:zhu2014oriented}%
  \BibitemOpen
  \bibfield  {author} {\bibinfo {author} {\bibfnamefont {Y.}~\bibnamefont
  {Zhu}}, \bibinfo {author} {\bibfnamefont {X.}~\bibnamefont {Yan}}, \ and\
  \bibinfo {author} {\bibfnamefont {C.}~\bibnamefont {Moore}},\ }\href@noop {}
  {\bibfield  {journal} {\bibinfo  {journal} {J. Complex Networks}\ }\textbf
  {\bibinfo {volume} {2}},\ \bibinfo {pages} {1} (\bibinfo {year}
  {2014})}\BibitemShut {NoStop}%
\bibitem [{\citenamefont {Yan}\ \emph {et~al.}(2014)\citenamefont {Yan},
  \citenamefont {Shalizi}, \citenamefont {Jensen}, \citenamefont {Krzakala},
  \citenamefont {Moore}, \citenamefont {Zdeborov{\' a}}, \citenamefont
  {Zhang},\ and\ \citenamefont {Zhu}}]{ref:yan2014model}%
  \BibitemOpen
  \bibfield  {author} {\bibinfo {author} {\bibfnamefont {X.}~\bibnamefont
  {Yan}}, \bibinfo {author} {\bibfnamefont {C.}~\bibnamefont {Shalizi}},
  \bibinfo {author} {\bibfnamefont {J.~E.}\ \bibnamefont {Jensen}}, \bibinfo
  {author} {\bibfnamefont {F.}~\bibnamefont {Krzakala}}, \bibinfo {author}
  {\bibfnamefont {C.}~\bibnamefont {Moore}}, \bibinfo {author} {\bibfnamefont
  {L.}~\bibnamefont {Zdeborov{\' a}}}, \bibinfo {author} {\bibfnamefont
  {P.}~\bibnamefont {Zhang}}, \ and\ \bibinfo {author} {\bibfnamefont
  {Y.}~\bibnamefont {Zhu}},\ }\href@noop {} {\bibfield  {journal} {\bibinfo
  {journal} {J. Stat. Mech.: Theory and Exp.}\ }\textbf {\bibinfo {volume}
  {2014}},\ \bibinfo {pages} {P05007} (\bibinfo {year} {2014})}\BibitemShut
  {NoStop}%
\bibitem [{\citenamefont {Blatt}\ \emph {et~al.}(1996)\citenamefont {Blatt},
  \citenamefont {Wiseman},\ and\ \citenamefont {Domany}}]{ref:blatt}%
  \BibitemOpen
  \bibfield  {author} {\bibinfo {author} {\bibfnamefont {M.}~\bibnamefont
  {Blatt}}, \bibinfo {author} {\bibfnamefont {S.}~\bibnamefont {Wiseman}}, \
  and\ \bibinfo {author} {\bibfnamefont {E.}~\bibnamefont {Domany}},\ }\href
  {\doibase 10.1103/PhysRevLett.76.3251} {\bibfield  {journal} {\bibinfo
  {journal} {Phys. Rev. Lett.}\ }\textbf {\bibinfo {volume} {76}},\ \bibinfo
  {pages} {3251} (\bibinfo {year} {1996})}\BibitemShut {NoStop}%
\bibitem [{\citenamefont {Ispolatov}\ \emph {et~al.}(2006)\citenamefont
  {Ispolatov}, \citenamefont {Mazo},\ and\ \citenamefont
  {Yuryev}}]{ref:ispolatov}%
  \BibitemOpen
  \bibfield  {author} {\bibinfo {author} {\bibfnamefont {I.}~\bibnamefont
  {Ispolatov}}, \bibinfo {author} {\bibfnamefont {I.}~\bibnamefont {Mazo}}, \
  and\ \bibinfo {author} {\bibfnamefont {A.}~\bibnamefont {Yuryev}},\ }\href
  {\doibase 10.1088/1742-5468/2006/09/P09014} {\bibfield  {journal} {\bibinfo
  {journal} {J. Stat. Mech.}\ }\textbf {\bibinfo {volume} {09}},\ \bibinfo
  {pages} {P09014} (\bibinfo {year} {2006})}\BibitemShut {NoStop}%
\bibitem [{\citenamefont {Hastings}(2006)}]{ref:hastings}%
  \BibitemOpen
  \bibfield  {author} {\bibinfo {author} {\bibfnamefont {M.~B.}\ \bibnamefont
  {Hastings}},\ }\href {\doibase 10.1103/PhysRevE.74.035102} {\bibfield
  {journal} {\bibinfo  {journal} {Phys. Rev. E}\ }\textbf {\bibinfo {volume}
  {74}},\ \bibinfo {pages} {035102} (\bibinfo {year} {2006})}\BibitemShut
  {NoStop}%
\bibitem [{\citenamefont {Traag}\ and\ \citenamefont
  {Bruggeman}(2009)}]{ref:traagPRE}%
  \BibitemOpen
  \bibfield  {author} {\bibinfo {author} {\bibfnamefont {V.~A.}\ \bibnamefont
  {Traag}}\ and\ \bibinfo {author} {\bibfnamefont {J.}~\bibnamefont
  {Bruggeman}},\ }\href {\doibase 10.1103/PhysRevE.80.036115} {\bibfield
  {journal} {\bibinfo  {journal} {Phys. Rev. E}\ }\textbf {\bibinfo {volume}
  {80}},\ \bibinfo {pages} {036115} (\bibinfo {year} {2009})}\BibitemShut
  {NoStop}%
\bibitem [{\citenamefont {Traag}\ \emph {et~al.}(2011)\citenamefont {Traag},
  \citenamefont {Van~Dooren},\ and\ \citenamefont
  {Nesterov}}]{ref:traaglocalscope}%
  \BibitemOpen
  \bibfield  {author} {\bibinfo {author} {\bibfnamefont {V.~A.}\ \bibnamefont
  {Traag}}, \bibinfo {author} {\bibfnamefont {P.}~\bibnamefont {Van~Dooren}}, \
  and\ \bibinfo {author} {\bibfnamefont {Y.}~\bibnamefont {Nesterov}},\ }\href
  {\doibase 10.1103/PhysRevE.84.016114} {\bibfield  {journal} {\bibinfo
  {journal} {Phys. Rev. E}\ }\textbf {\bibinfo {volume} {84}},\ \bibinfo
  {pages} {016114} (\bibinfo {year} {2011})}\BibitemShut {NoStop}%
\bibitem [{\citenamefont {Ronhovde}\ and\ \citenamefont
  {Nussinov}(2010)}]{ref:rzlocal}%
  \BibitemOpen
  \bibfield  {author} {\bibinfo {author} {\bibfnamefont {P.}~\bibnamefont
  {Ronhovde}}\ and\ \bibinfo {author} {\bibfnamefont {Z.}~\bibnamefont
  {Nussinov}},\ }\href {\doibase 10.1103/PhysRevE.81.046114} {\bibfield
  {journal} {\bibinfo  {journal} {Phys. Rev. E}\ }\textbf {\bibinfo {volume}
  {81}},\ \bibinfo {pages} {046114} (\bibinfo {year} {2010})}\BibitemShut
  {NoStop}%
\bibitem [{\citenamefont {Bagrow}\ and\ \citenamefont
  {Bollt}(2005)}]{ref:bagrowboltlocal}%
  \BibitemOpen
  \bibfield  {author} {\bibinfo {author} {\bibfnamefont {J.~P.}\ \bibnamefont
  {Bagrow}}\ and\ \bibinfo {author} {\bibfnamefont {E.~M.}\ \bibnamefont
  {Bollt}},\ }\href {\doibase 10.1103/PhysRevE.72.046108} {\bibfield  {journal}
  {\bibinfo  {journal} {Phys. Rev. E}\ }\textbf {\bibinfo {volume} {72}},\
  \bibinfo {pages} {046108} (\bibinfo {year} {2005})}\BibitemShut {NoStop}%
\bibitem [{\citenamefont {Lancichinetti}\ \emph {et~al.}(2009)\citenamefont
  {Lancichinetti}, \citenamefont {Fortunato},\ and\ \citenamefont
  {Kert{\'e}sz}}]{ref:lanc}%
  \BibitemOpen
  \bibfield  {author} {\bibinfo {author} {\bibfnamefont {A.}~\bibnamefont
  {Lancichinetti}}, \bibinfo {author} {\bibfnamefont {S.}~\bibnamefont
  {Fortunato}}, \ and\ \bibinfo {author} {\bibfnamefont {J.}~\bibnamefont
  {Kert{\'e}sz}},\ }\href {\doibase doi:10.1088/1367-2630/11/3/033015}
  {\bibfield  {journal} {\bibinfo  {journal} {New J. Phys.}\ }\textbf {\bibinfo
  {volume} {11}},\ \bibinfo {pages} {033015} (\bibinfo {year}
  {2009})}\BibitemShut {NoStop}%
\bibitem [{\citenamefont {Havemann}\ \emph {et~al.}(2011)\citenamefont
  {Havemann}, \citenamefont {Heinz}, \citenamefont {Struck},\ and\
  \citenamefont {Gl{\"a}ser}}]{ref:havemannlocalmultires}%
  \BibitemOpen
  \bibfield  {author} {\bibinfo {author} {\bibfnamefont {F.}~\bibnamefont
  {Havemann}}, \bibinfo {author} {\bibfnamefont {M.}~\bibnamefont {Heinz}},
  \bibinfo {author} {\bibfnamefont {A.}~\bibnamefont {Struck}}, \ and\ \bibinfo
  {author} {\bibfnamefont {J.}~\bibnamefont {Gl{\"a}ser}},\ }\href {\doibase
  10.1088/1742-5468/2011/01/P01023} {\bibfield  {journal} {\bibinfo  {journal}
  {J. Stat. Mech.}\ }\textbf {\bibinfo {volume} {01}},\ \bibinfo {pages}
  {P01023} (\bibinfo {year} {2011})}\BibitemShut {NoStop}%
\bibitem [{\citenamefont {Zhao}\ \emph {et~al.}(2011)\citenamefont {Zhao},
  \citenamefont {Levina},\ and\ \citenamefont {Zhu}}]{ref:zhao2011community}%
  \BibitemOpen
  \bibfield  {author} {\bibinfo {author} {\bibfnamefont {Y.}~\bibnamefont
  {Zhao}}, \bibinfo {author} {\bibfnamefont {E.}~\bibnamefont {Levina}}, \ and\
  \bibinfo {author} {\bibfnamefont {J.}~\bibnamefont {Zhu}},\ }\href@noop {}
  {\bibfield  {journal} {\bibinfo  {journal} {PNAS}\ }\textbf {\bibinfo
  {volume} {108}},\ \bibinfo {pages} {7321} (\bibinfo {year}
  {2011})}\BibitemShut {NoStop}%
\bibitem [{\citenamefont {Clauset}(2005)}]{ref:clausetlocal}%
  \BibitemOpen
  \bibfield  {author} {\bibinfo {author} {\bibfnamefont {A.}~\bibnamefont
  {Clauset}},\ }\href {\doibase 10.1103/PhysRevE.72.026132} {\bibfield
  {journal} {\bibinfo  {journal} {Phys. Rev. E}\ }\textbf {\bibinfo {volume}
  {72}},\ \bibinfo {pages} {026132} (\bibinfo {year} {2005})}\BibitemShut
  {NoStop}%
\bibitem [{\citenamefont {Muff}\ \emph {et~al.}(2005)\citenamefont {Muff},
  \citenamefont {Rao},\ and\ \citenamefont {Caflisch}}]{ref:muff}%
  \BibitemOpen
  \bibfield  {author} {\bibinfo {author} {\bibfnamefont {S.}~\bibnamefont
  {Muff}}, \bibinfo {author} {\bibfnamefont {F.}~\bibnamefont {Rao}}, \ and\
  \bibinfo {author} {\bibfnamefont {A.}~\bibnamefont {Caflisch}},\ }\href
  {\doibase 10.1103/PhysRevE.72.056107} {\bibfield  {journal} {\bibinfo
  {journal} {Phys. Rev. E}\ }\textbf {\bibinfo {volume} {72}},\ \bibinfo
  {pages} {056107} (\bibinfo {year} {2005})}\BibitemShut {NoStop}%
\bibitem [{\citenamefont {Hu}\ \emph {et~al.}(2012{\natexlab{c}})\citenamefont
  {Hu}, \citenamefont {Ronhovde},\ and\ \citenamefont
  {Nussinov}}]{ref:huCDPTsgd}%
  \BibitemOpen
  \bibfield  {author} {\bibinfo {author} {\bibfnamefont {D.}~\bibnamefont
  {Hu}}, \bibinfo {author} {\bibfnamefont {P.}~\bibnamefont {Ronhovde}}, \ and\
  \bibinfo {author} {\bibfnamefont {Z.}~\bibnamefont {Nussinov}},\ }\href
  {\doibase 10.1080/14786435.2011.616547} {\bibfield  {journal} {\bibinfo
  {journal} {Phil. Mag.}\ }\textbf {\bibinfo {volume} {92}},\ \bibinfo {pages}
  {406} (\bibinfo {year} {2012}{\natexlab{c}})}\BibitemShut {NoStop}%
\bibitem [{\citenamefont {Good}\ \emph {et~al.}(2010)\citenamefont {Good},
  \citenamefont {de~Montjoye},\ and\ \citenamefont {Clauset}}]{ref:goodMC}%
  \BibitemOpen
  \bibfield  {author} {\bibinfo {author} {\bibfnamefont {B.~H.}\ \bibnamefont
  {Good}}, \bibinfo {author} {\bibfnamefont {Y.-A.}\ \bibnamefont
  {de~Montjoye}}, \ and\ \bibinfo {author} {\bibfnamefont {A.}~\bibnamefont
  {Clauset}},\ }\href {\doibase 10.1103/PhysRevE.81.046106} {\bibfield
  {journal} {\bibinfo  {journal} {Phys. Rev. E}\ }\textbf {\bibinfo {volume}
  {81}},\ \bibinfo {pages} {046106} (\bibinfo {year} {2010})}\BibitemShut
  {NoStop}%
\bibitem [{\citenamefont {Nadakuditi}\ and\ \citenamefont
  {Newman}(2012)}]{ref:nadakuditiSBM}%
  \BibitemOpen
  \bibfield  {author} {\bibinfo {author} {\bibfnamefont {R.~R.}\ \bibnamefont
  {Nadakuditi}}\ and\ \bibinfo {author} {\bibfnamefont {M.~E.~J.}\ \bibnamefont
  {Newman}},\ }\href {\doibase 10.1103/PhysRevLett.108.188701} {\bibfield
  {journal} {\bibinfo  {journal} {Phys. Rev. Lett.}\ }\textbf {\bibinfo
  {volume} {108}},\ \bibinfo {pages} {188701} (\bibinfo {year}
  {2012})}\BibitemShut {NoStop}%
\bibitem [{\citenamefont {Ronhovde}\ \emph {et~al.}(2012)\citenamefont
  {Ronhovde}, \citenamefont {Hu},\ and\ \citenamefont
  {Nussinov}}]{ref:rhzglobaldisorder}%
  \BibitemOpen
  \bibfield  {author} {\bibinfo {author} {\bibfnamefont {P.}~\bibnamefont
  {Ronhovde}}, \bibinfo {author} {\bibfnamefont {D.}~\bibnamefont {Hu}}, \ and\
  \bibinfo {author} {\bibfnamefont {Z.}~\bibnamefont {Nussinov}},\ }\href
  {\doibase 10.1209/0295-5075/99/38006} {\bibfield  {journal} {\bibinfo
  {journal} {EPL}\ }\textbf {\bibinfo {volume} {99}},\ \bibinfo {pages} {38006}
  (\bibinfo {year} {2012})}\BibitemShut {NoStop}%
\bibitem [{\citenamefont {Danon}\ \emph {et~al.}(2006)\citenamefont {Danon},
  \citenamefont {D{\'i}az-Guilera},\ and\ \citenamefont
  {Arenas}}]{ref:danonhetero}%
  \BibitemOpen
  \bibfield  {author} {\bibinfo {author} {\bibfnamefont {L.}~\bibnamefont
  {Danon}}, \bibinfo {author} {\bibfnamefont {A.}~\bibnamefont
  {D{\'i}az-Guilera}}, \ and\ \bibinfo {author} {\bibfnamefont
  {A.}~\bibnamefont {Arenas}},\ }\href {\doibase
  10.1088/1742-5468/2006/11/P11010} {\bibfield  {journal} {\bibinfo  {journal}
  {J. Stat. Mech.}\ }\textbf {\bibinfo {volume} {11}},\ \bibinfo {pages}
  {P11010} (\bibinfo {year} {2006})}\BibitemShut {NoStop}%
\bibitem [{\citenamefont {Everitt}\ \emph {et~al.}(2001)\citenamefont
  {Everitt}, \citenamefont {Landau},\ and\ \citenamefont
  {Leese}}]{ref:everittHC}%
  \BibitemOpen
  \bibfield  {author} {\bibinfo {author} {\bibfnamefont {B.}~\bibnamefont
  {Everitt}}, \bibinfo {author} {\bibfnamefont {S.}~\bibnamefont {Landau}}, \
  and\ \bibinfo {author} {\bibfnamefont {M.}~\bibnamefont {Leese}},\
  }\href@noop {} {\emph {\bibinfo {title} {Cluster analysis}}}\ (\bibinfo
  {year} {2001})\BibitemShut {NoStop}%
\bibitem [{\citenamefont {Clauset}\ \emph {et~al.}(2008)\citenamefont
  {Clauset}, \citenamefont {Moore},\ and\ \citenamefont
  {Newman}}]{ref:clausetmissinglinks}%
  \BibitemOpen
  \bibfield  {author} {\bibinfo {author} {\bibfnamefont {A.}~\bibnamefont
  {Clauset}}, \bibinfo {author} {\bibfnamefont {C.}~\bibnamefont {Moore}}, \
  and\ \bibinfo {author} {\bibfnamefont {M.~E.~J.}\ \bibnamefont {Newman}},\
  }\href {\doibase 10.1038/nature06830} {\bibfield  {journal} {\bibinfo
  {journal} {Nature}\ }\textbf {\bibinfo {volume} {453}},\ \bibinfo {pages}
  {98} (\bibinfo {year} {2008})}\BibitemShut {NoStop}%
\bibitem [{\citenamefont {Sales-Pardo}\ \emph {et~al.}(2007)\citenamefont
  {Sales-Pardo}, \citenamefont {Guimer{\`a}}, \citenamefont {Moreira},\ and\
  \citenamefont {Amaral}}]{ref:salespardo}%
  \BibitemOpen
  \bibfield  {author} {\bibinfo {author} {\bibfnamefont {M.}~\bibnamefont
  {Sales-Pardo}}, \bibinfo {author} {\bibfnamefont {R.}~\bibnamefont
  {Guimer{\`a}}}, \bibinfo {author} {\bibfnamefont {A.~A.}\ \bibnamefont
  {Moreira}}, \ and\ \bibinfo {author} {\bibfnamefont {L.~A.~N.}\ \bibnamefont
  {Amaral}},\ }\href {\doibase 10.1073/pnas.0703740104} {\bibfield  {journal}
  {\bibinfo  {journal} {PNAS}\ }\textbf {\bibinfo {volume} {104}},\ \bibinfo
  {pages} {15224} (\bibinfo {year} {2007})}\BibitemShut {NoStop}%
\bibitem [{\citenamefont {Blondel}\ \emph {et~al.}(2008)\citenamefont
  {Blondel}, \citenamefont {Guillaume}, \citenamefont {Lambiotte},\ and\
  \citenamefont {Lefebvre}}]{ref:blondel}%
  \BibitemOpen
  \bibfield  {author} {\bibinfo {author} {\bibfnamefont {V.~D.}\ \bibnamefont
  {Blondel}}, \bibinfo {author} {\bibfnamefont {J.-L.}\ \bibnamefont
  {Guillaume}}, \bibinfo {author} {\bibfnamefont {R.}~\bibnamefont
  {Lambiotte}}, \ and\ \bibinfo {author} {\bibfnamefont {E.}~\bibnamefont
  {Lefebvre}},\ }\href {\doibase 10.1088/1742-5468/2008/10/P10008} {\bibfield
  {journal} {\bibinfo  {journal} {J. Stat. Mech.}\ }\textbf {\bibinfo {volume}
  {10}},\ \bibinfo {pages} {P10008} (\bibinfo {year} {2008})}\BibitemShut
  {NoStop}%
\bibitem [{\citenamefont {Shen}\ \emph {et~al.}(2009)\citenamefont {Shen},
  \citenamefont {Cheng}, \citenamefont {Cai},\ and\ \citenamefont
  {Hu}}]{ref:shenmodhier}%
  \BibitemOpen
  \bibfield  {author} {\bibinfo {author} {\bibfnamefont {H.}~\bibnamefont
  {Shen}}, \bibinfo {author} {\bibfnamefont {X.}~\bibnamefont {Cheng}},
  \bibinfo {author} {\bibfnamefont {K.}~\bibnamefont {Cai}}, \ and\ \bibinfo
  {author} {\bibfnamefont {M.-B.}\ \bibnamefont {Hu}},\ }\href {\doibase
  10.1016/j.physa.2008.12.021} {\bibfield  {journal} {\bibinfo  {journal}
  {Physica A}\ }\textbf {\bibinfo {volume} {388}},\ \bibinfo {pages} {1706}
  (\bibinfo {year} {2009})}\BibitemShut {NoStop}%
\bibitem [{\citenamefont {Ahn}\ \emph {et~al.}(2010)\citenamefont {Ahn},
  \citenamefont {Bagrow},\ and\ \citenamefont
  {Lehmann}}]{ref:ahnlinkmultiscale}%
  \BibitemOpen
  \bibfield  {author} {\bibinfo {author} {\bibfnamefont {Y.-Y.}\ \bibnamefont
  {Ahn}}, \bibinfo {author} {\bibfnamefont {J.~P.}\ \bibnamefont {Bagrow}}, \
  and\ \bibinfo {author} {\bibfnamefont {S.}~\bibnamefont {Lehmann}},\ }\href
  {\doibase 10.1038/nature09182} {\bibfield  {journal} {\bibinfo  {journal}
  {Nature (London)}\ }\textbf {\bibinfo {volume} {466}},\ \bibinfo {pages}
  {761–764} (\bibinfo {year} {2010})}\BibitemShut {NoStop}%
\bibitem [{\citenamefont {Ravasz}\ and\ \citenamefont
  {Barab{\'a}si}(2003)}]{ref:ravaszbarabasi}%
  \BibitemOpen
  \bibfield  {author} {\bibinfo {author} {\bibfnamefont {E.}~\bibnamefont
  {Ravasz}}\ and\ \bibinfo {author} {\bibfnamefont {A.-L.}\ \bibnamefont
  {Barab{\'a}si}},\ }\href@noop {} {\bibfield  {journal} {\bibinfo  {journal}
  {Phys. Rev. E}\ }\textbf {\bibinfo {volume} {67}},\ \bibinfo {pages} {026112}
  (\bibinfo {year} {2003})}\BibitemShut {NoStop}%
\bibitem [{\citenamefont {Mucha}\ \emph {et~al.}(2010)\citenamefont {Mucha},
  \citenamefont {Richardson}, \citenamefont {Macon}, \citenamefont {Porter},\
  and\ \citenamefont {Onnella}}]{ref:muchaSCI}%
  \BibitemOpen
  \bibfield  {author} {\bibinfo {author} {\bibfnamefont {P.~J.}\ \bibnamefont
  {Mucha}}, \bibinfo {author} {\bibfnamefont {T.}~\bibnamefont {Richardson}},
  \bibinfo {author} {\bibfnamefont {K.}~\bibnamefont {Macon}}, \bibinfo
  {author} {\bibfnamefont {M.~A.}\ \bibnamefont {Porter}}, \ and\ \bibinfo
  {author} {\bibfnamefont {J.-P.}\ \bibnamefont {Onnella}},\ }\href {\doibase
  10.1126/science.1184819} {\bibfield  {journal} {\bibinfo  {journal}
  {Science}\ }\textbf {\bibinfo {volume} {328}},\ \bibinfo {pages} {876}
  (\bibinfo {year} {2010})}\BibitemShut {NoStop}%
\bibitem [{\citenamefont {Zhang}\ \emph
  {et~al.}(2009{\natexlab{b}})\citenamefont {Zhang}, \citenamefont {Zhang},
  \citenamefont {ke~Xu}, \citenamefont {Tse},\ and\ \citenamefont
  {Small}}]{ref:zhangsmall}%
  \BibitemOpen
  \bibfield  {author} {\bibinfo {author} {\bibfnamefont {J.}~\bibnamefont
  {Zhang}}, \bibinfo {author} {\bibfnamefont {K.}~\bibnamefont {Zhang}},
  \bibinfo {author} {\bibfnamefont {X.}~\bibnamefont {ke~Xu}}, \bibinfo
  {author} {\bibfnamefont {C.~K.}\ \bibnamefont {Tse}}, \ and\ \bibinfo
  {author} {\bibfnamefont {M.}~\bibnamefont {Small}},\ }\href {\doibase
  10.1088/1367-2630/11/11/113003} {\bibfield  {journal} {\bibinfo  {journal}
  {New J. Phys.}\ }\textbf {\bibinfo {volume} {11}},\ \bibinfo {pages} {113003}
  (\bibinfo {year} {2009}{\natexlab{b}})}\BibitemShut {NoStop}%
\bibitem [{\citenamefont {Lancichinetti}\ \emph {et~al.}(2011)\citenamefont
  {Lancichinetti}, \citenamefont {Radicchi}, \citenamefont {Ramasco},\ and\
  \citenamefont {Fortunato}}]{ref:lancstatCD}%
  \BibitemOpen
  \bibfield  {author} {\bibinfo {author} {\bibfnamefont {A.}~\bibnamefont
  {Lancichinetti}}, \bibinfo {author} {\bibfnamefont {F.}~\bibnamefont
  {Radicchi}}, \bibinfo {author} {\bibfnamefont {J.~J.}\ \bibnamefont
  {Ramasco}}, \ and\ \bibinfo {author} {\bibfnamefont {S.}~\bibnamefont
  {Fortunato}},\ }\href {\doibase 10.1371/journal.pone.0018961} {\bibfield
  {journal} {\bibinfo  {journal} {PLoS ONE}\ }\textbf {\bibinfo {volume} {6}},\
  \bibinfo {pages} {e18961} (\bibinfo {year} {2011})}\BibitemShut {NoStop}%
\bibitem [{\citenamefont {Shen}\ \emph
  {et~al.}(2010{\natexlab{a}})\citenamefont {Shen}, \citenamefont {Cheng},\
  and\ \citenamefont {Fang}}]{ref:shenccmmultiscale}%
  \BibitemOpen
  \bibfield  {author} {\bibinfo {author} {\bibfnamefont {H.-W.}\ \bibnamefont
  {Shen}}, \bibinfo {author} {\bibfnamefont {X.-Q.}\ \bibnamefont {Cheng}}, \
  and\ \bibinfo {author} {\bibfnamefont {B.-X.}\ \bibnamefont {Fang}},\ }\href
  {\doibase 10.1103/PhysRevE.82.016114} {\bibfield  {journal} {\bibinfo
  {journal} {Phys. Rev. E}\ }\textbf {\bibinfo {volume} {82}},\ \bibinfo
  {pages} {016114} (\bibinfo {year} {2010}{\natexlab{a}})}\BibitemShut
  {NoStop}%
\bibitem [{\citenamefont {Shen}\ \emph
  {et~al.}(2010{\natexlab{b}})\citenamefont {Shen}, \citenamefont {Cheng},\
  and\ \citenamefont {Fang}}]{ref:shenlaplacianPRE}%
  \BibitemOpen
  \bibfield  {author} {\bibinfo {author} {\bibfnamefont {H.-W.}\ \bibnamefont
  {Shen}}, \bibinfo {author} {\bibfnamefont {X.-Q.}\ \bibnamefont {Cheng}}, \
  and\ \bibinfo {author} {\bibfnamefont {B.-X.}\ \bibnamefont {Fang}},\ }\href
  {\doibase 10.1103/PhysRevE.82.016114} {\bibfield  {journal} {\bibinfo
  {journal} {Phys. Rev. E}\ }\textbf {\bibinfo {volume} {82}},\ \bibinfo
  {pages} {016114} (\bibinfo {year} {2010}{\natexlab{b}})}\BibitemShut
  {NoStop}%
\bibitem [{\citenamefont {Zhang}\ and\ \citenamefont
  {Zhao}(2012)}]{ref:zhangunbalanced}%
  \BibitemOpen
  \bibfield  {author} {\bibinfo {author} {\bibfnamefont {S.}~\bibnamefont
  {Zhang}}\ and\ \bibinfo {author} {\bibfnamefont {H.}~\bibnamefont {Zhao}},\
  }\href {\doibase 10.1103/PhysRevE.85.066114} {\bibfield  {journal} {\bibinfo
  {journal} {Phys. Rev. E}\ }\textbf {\bibinfo {volume} {85}},\ \bibinfo
  {pages} {066114} (\bibinfo {year} {2012})}\BibitemShut {NoStop}%
\bibitem [{\citenamefont {Shen}\ and\ \citenamefont
  {Cheng}(2010)}]{ref:shenchengspectral}%
  \BibitemOpen
  \bibfield  {author} {\bibinfo {author} {\bibfnamefont {H.-W.}\ \bibnamefont
  {Shen}}\ and\ \bibinfo {author} {\bibfnamefont {X.-Q.}\ \bibnamefont
  {Cheng}},\ }\href {\doibase 10.1088/1742-5468/2010/10/P10020} {\bibfield
  {journal} {\bibinfo  {journal} {J. Stat. Mech.}\ }\textbf {\bibinfo {volume}
  {10}},\ \bibinfo {pages} {P10020} (\bibinfo {year} {2010})}\BibitemShut
  {NoStop}%
\bibitem [{\citenamefont {Hofman}\ and\ \citenamefont
  {Wiggins}(2008)}]{ref:Hofman2008}%
  \BibitemOpen
  \bibfield  {author} {\bibinfo {author} {\bibfnamefont {J.~M.}\ \bibnamefont
  {Hofman}}\ and\ \bibinfo {author} {\bibfnamefont {C.~H.}\ \bibnamefont
  {Wiggins}},\ }\href {\doibase 10.1103/PhysRevLett.100.258701} {\bibfield
  {journal} {\bibinfo  {journal} {Phys. Rev. Lett.}\ }\textbf {\bibinfo
  {volume} {100}},\ \bibinfo {pages} {258701} (\bibinfo {year}
  {2008})}\BibitemShut {NoStop}%
\bibitem [{\citenamefont {Gudkov}\ \emph {et~al.}(2008)\citenamefont {Gudkov},
  \citenamefont {Montealegre}, \citenamefont {Nussinov},\ and\ \citenamefont
  {Nussinov}}]{ref:gudkov}%
  \BibitemOpen
  \bibfield  {author} {\bibinfo {author} {\bibfnamefont {V.}~\bibnamefont
  {Gudkov}}, \bibinfo {author} {\bibfnamefont {V.}~\bibnamefont {Montealegre}},
  \bibinfo {author} {\bibfnamefont {S.}~\bibnamefont {Nussinov}}, \ and\
  \bibinfo {author} {\bibfnamefont {Z.}~\bibnamefont {Nussinov}},\ }\href
  {\doibase 10.1103/PhysRevE.78.016113} {\bibfield  {journal} {\bibinfo
  {journal} {Phys. Rev. E}\ }\textbf {\bibinfo {volume} {78}},\ \bibinfo
  {pages} {016113} (\bibinfo {year} {2008})}\BibitemShut {NoStop}%
\bibitem [{\citenamefont {Boccaletti}\ \emph {et~al.}(2007)\citenamefont
  {Boccaletti}, \citenamefont {Ivanchenko}, \citenamefont {Latora},
  \citenamefont {Pluchino},\ and\ \citenamefont {Rapisarda}}]{ref:boccaletti}%
  \BibitemOpen
  \bibfield  {author} {\bibinfo {author} {\bibfnamefont {S.}~\bibnamefont
  {Boccaletti}}, \bibinfo {author} {\bibfnamefont {M.}~\bibnamefont
  {Ivanchenko}}, \bibinfo {author} {\bibfnamefont {V.}~\bibnamefont {Latora}},
  \bibinfo {author} {\bibfnamefont {A.}~\bibnamefont {Pluchino}}, \ and\
  \bibinfo {author} {\bibfnamefont {A.}~\bibnamefont {Rapisarda}},\ }\href
  {\doibase 10.1103/PhysRevE.75.045102} {\bibfield  {journal} {\bibinfo
  {journal} {Phys. Rev. E}\ }\textbf {\bibinfo {volume} {75}},\ \bibinfo
  {pages} {045102(R)} (\bibinfo {year} {2007})}\BibitemShut {NoStop}%
\bibitem [{\citenamefont {Kanungo}\ \emph {et~al.}(2002)\citenamefont
  {Kanungo}, \citenamefont {Mount}, \citenamefont {Netanyahu}, \citenamefont
  {Piatko}, \citenamefont {Silverman},\ and\ \citenamefont
  {Wu}}]{ref:kanungo2002}%
  \BibitemOpen
  \bibfield  {author} {\bibinfo {author} {\bibfnamefont {T.}~\bibnamefont
  {Kanungo}}, \bibinfo {author} {\bibfnamefont {D.}~\bibnamefont {Mount}},
  \bibinfo {author} {\bibfnamefont {N.}~\bibnamefont {Netanyahu}}, \bibinfo
  {author} {\bibfnamefont {C.}~\bibnamefont {Piatko}}, \bibinfo {author}
  {\bibfnamefont {R.}~\bibnamefont {Silverman}}, \ and\ \bibinfo {author}
  {\bibfnamefont {A.}~\bibnamefont {Wu}},\ }\href {\doibase
  10.1109/TPAMI.2002.1017616} {\bibfield  {journal} {\bibinfo  {journal}
  {Pattern Analysis and Machine Intelligence, IEEE Transactions on}\ }\textbf
  {\bibinfo {volume} {24}},\ \bibinfo {pages} {881} (\bibinfo {year}
  {2002})}\BibitemShut {NoStop}%
\bibitem [{\citenamefont {Ball}\ \emph {et~al.}(2011)\citenamefont {Ball},
  \citenamefont {Karrer},\ and\ \citenamefont {Newman}}]{ref:ballMLlinks}%
  \BibitemOpen
  \bibfield  {author} {\bibinfo {author} {\bibfnamefont {B.}~\bibnamefont
  {Ball}}, \bibinfo {author} {\bibfnamefont {B.}~\bibnamefont {Karrer}}, \ and\
  \bibinfo {author} {\bibfnamefont {M.~E.~J.}\ \bibnamefont {Newman}},\ }\href
  {\doibase 10.1103/PhysRevE.84.036103} {\bibfield  {journal} {\bibinfo
  {journal} {Phys. Rev. E}\ }\textbf {\bibinfo {volume} {84}},\ \bibinfo
  {pages} {036103} (\bibinfo {year} {2011})}\BibitemShut {NoStop}%
\bibitem [{\citenamefont {Radicchi}\ \emph {et~al.}(2004)\citenamefont
  {Radicchi}, \citenamefont {Castellano}, \citenamefont {Cecconi},
  \citenamefont {Loreto},\ and\ \citenamefont {Parisi}}]{ref:radicchi}%
  \BibitemOpen
  \bibfield  {author} {\bibinfo {author} {\bibfnamefont {F.}~\bibnamefont
  {Radicchi}}, \bibinfo {author} {\bibfnamefont {C.}~\bibnamefont
  {Castellano}}, \bibinfo {author} {\bibfnamefont {F.}~\bibnamefont {Cecconi}},
  \bibinfo {author} {\bibfnamefont {V.}~\bibnamefont {Loreto}}, \ and\ \bibinfo
  {author} {\bibfnamefont {D.}~\bibnamefont {Parisi}},\ }\href {\doibase
  10.1073/pnas.0400054101} {\bibfield  {journal} {\bibinfo  {journal} {Proc.
  Natl. Acad. Sci. U.S.A.}\ }\textbf {\bibinfo {volume} {101}},\ \bibinfo
  {pages} {2658} (\bibinfo {year} {2004})}\BibitemShut {NoStop}%
\bibitem [{\citenamefont {Cafieri}\ \emph {et~al.}(2012)\citenamefont
  {Cafieri}, \citenamefont {Caporossi}, \citenamefont {Hansen}, \citenamefont
  {Perron},\ and\ \citenamefont {Costa}}]{ref:cafieristrong}%
  \BibitemOpen
  \bibfield  {author} {\bibinfo {author} {\bibfnamefont {S.}~\bibnamefont
  {Cafieri}}, \bibinfo {author} {\bibfnamefont {G.}~\bibnamefont {Caporossi}},
  \bibinfo {author} {\bibfnamefont {P.}~\bibnamefont {Hansen}}, \bibinfo
  {author} {\bibfnamefont {S.}~\bibnamefont {Perron}}, \ and\ \bibinfo {author}
  {\bibfnamefont {A.}~\bibnamefont {Costa}},\ }\href {\doibase
  10.1103/PhysRevE.85.046113} {\bibfield  {journal} {\bibinfo  {journal} {Phys.
  Rev. E}\ }\textbf {\bibinfo {volume} {85}},\ \bibinfo {pages} {046113}
  (\bibinfo {year} {2012})}\BibitemShut {NoStop}%
\bibitem [{\citenamefont {Zachary}(1977)}]{ref:zachary}%
  \BibitemOpen
  \bibfield  {author} {\bibinfo {author} {\bibfnamefont {W.~W.}\ \bibnamefont
  {Zachary}},\ }\href@noop {} {\bibfield  {journal} {\bibinfo  {journal} {J.
  Anthropol. Res.}\ }\textbf {\bibinfo {volume} {33}},\ \bibinfo {pages} {452}
  (\bibinfo {year} {1977})}\BibitemShut {NoStop}%
\bibitem [{\citenamefont {Lancichinetti}\ and\ \citenamefont
  {Fortunato}(2009)}]{ref:lancLFRcompare}%
  \BibitemOpen
  \bibfield  {author} {\bibinfo {author} {\bibfnamefont {A.}~\bibnamefont
  {Lancichinetti}}\ and\ \bibinfo {author} {\bibfnamefont {S.}~\bibnamefont
  {Fortunato}},\ }\href {\doibase 10.1103/PhysRevE.80.056117} {\bibfield
  {journal} {\bibinfo  {journal} {Phys. Rev. E}\ }\textbf {\bibinfo {volume}
  {80}},\ \bibinfo {pages} {056117} (\bibinfo {year} {2009})}\BibitemShut
  {NoStop}%
\bibitem [{\citenamefont {Lancichinetti}\ \emph {et~al.}(2008)\citenamefont
  {Lancichinetti}, \citenamefont {Fortunato},\ and\ \citenamefont
  {Radicchi}}]{ref:lancbenchmark}%
  \BibitemOpen
  \bibfield  {author} {\bibinfo {author} {\bibfnamefont {A.}~\bibnamefont
  {Lancichinetti}}, \bibinfo {author} {\bibfnamefont {S.}~\bibnamefont
  {Fortunato}}, \ and\ \bibinfo {author} {\bibfnamefont {F.}~\bibnamefont
  {Radicchi}},\ }\href {\doibase 10.1103/PhysRevE.78.046110} {\bibfield
  {journal} {\bibinfo  {journal} {Phys. Rev. E}\ }\textbf {\bibinfo {volume}
  {78}},\ \bibinfo {pages} {046110} (\bibinfo {year} {2008})}\BibitemShut
  {NoStop}%
\bibitem [{\citenamefont {Fred}\ and\ \citenamefont
  {Jain}(2003)}]{ref:fredjain}%
  \BibitemOpen
  \bibfield  {author} {\bibinfo {author} {\bibfnamefont {A.~L.~N.}\
  \bibnamefont {Fred}}\ and\ \bibinfo {author} {\bibfnamefont {A.~K.}\
  \bibnamefont {Jain}},\ }in\ \href {\doibase 10.1109/CVPR.2003.1211462} {\emph
  {\bibinfo {booktitle} {Proceedings of the IEEE Computer Society Conference on
  Computer Vision Pattern Recognition}}},\ Vol.~\bibinfo {volume} {2}\
  (\bibinfo  {publisher} {IEEE Computer Society},\ \bibinfo {year} {2003})\
  pp.\ \bibinfo {pages} {128--133}\BibitemShut {NoStop}%
\bibitem [{\citenamefont {Topchy}\ \emph {et~al.}(2004)\citenamefont {Topchy},
  \citenamefont {Law}, \citenamefont {Jain},\ and\ \citenamefont
  {Fred}}]{ref:topchyconsensus}%
  \BibitemOpen
  \bibfield  {author} {\bibinfo {author} {\bibfnamefont {A.~P.}\ \bibnamefont
  {Topchy}}, \bibinfo {author} {\bibfnamefont {M.~H.~C.}\ \bibnamefont {Law}},
  \bibinfo {author} {\bibfnamefont {A.~K.}\ \bibnamefont {Jain}}, \ and\
  \bibinfo {author} {\bibfnamefont {A.~L.}\ \bibnamefont {Fred}},\ }in\ \href
  {\doibase 10.1109/ICDM.2004.10100} {\emph {\bibinfo {booktitle} {Data Mining,
  2004. ICDM '04. Fourth IEEE International Conference}}}\ (\bibinfo
  {publisher} {IEEE Computer Society},\ \bibinfo {year} {2004})\ pp.\ \bibinfo
  {pages} {225--232}\BibitemShut {NoStop}%
\bibitem [{\citenamefont {Topchy}\ \emph {et~al.}(2003)\citenamefont {Topchy},
  \citenamefont {Jain},\ and\ \citenamefont {Punch}}]{ref:topchyweak}%
  \BibitemOpen
  \bibfield  {author} {\bibinfo {author} {\bibfnamefont {A.~P.}\ \bibnamefont
  {Topchy}}, \bibinfo {author} {\bibfnamefont {A.~K.}\ \bibnamefont {Jain}}, \
  and\ \bibinfo {author} {\bibfnamefont {W.}~\bibnamefont {Punch}},\ }in\ \href
  {\doibase 10.1109/ICDM.2003.1250937} {\emph {\bibinfo {booktitle} {Data
  Mining, 2003. ICDM 2003. Third IEEE International Conference}}}\ (\bibinfo
  {publisher} {IEEE Computer Society},\ \bibinfo {year} {2003})\ pp.\ \bibinfo
  {pages} {331--338}\BibitemShut {NoStop}%
\bibitem [{\citenamefont {Hu}\ \emph {et~al.}(2012{\natexlab{d}})\citenamefont
  {Hu}, \citenamefont {Ronhovde},\ and\ \citenamefont
  {Nussinov}}]{ref:HRNimages}%
  \BibitemOpen
  \bibfield  {author} {\bibinfo {author} {\bibfnamefont {D.}~\bibnamefont
  {Hu}}, \bibinfo {author} {\bibfnamefont {P.}~\bibnamefont {Ronhovde}}, \ and\
  \bibinfo {author} {\bibfnamefont {Z.}~\bibnamefont {Nussinov}},\ }\href
  {\doibase 10.1103/PhysRevE.85.016101} {\bibfield  {journal} {\bibinfo
  {journal} {Phys. Rev. E}\ }\textbf {\bibinfo {volume} {85}},\ \bibinfo
  {pages} {016101} (\bibinfo {year} {2012}{\natexlab{d}})}\BibitemShut
  {NoStop}%
\bibitem [{\citenamefont {Villain}\ \emph {et~al.}(1980)\citenamefont
  {Villain}, \citenamefont {Bidaux}, \citenamefont {Carton},\ and\
  \citenamefont {Conte}}]{ref:villain1980JPF}%
  \BibitemOpen
  \bibfield  {author} {\bibinfo {author} {\bibfnamefont {J.}~\bibnamefont
  {Villain}}, \bibinfo {author} {\bibfnamefont {R.}~\bibnamefont {Bidaux}},
  \bibinfo {author} {\bibfnamefont {J.-P.}\ \bibnamefont {Carton}}, \ and\
  \bibinfo {author} {\bibfnamefont {R.}~\bibnamefont {Conte}},\ }\href
  {\doibase 10.1051/jphys:0198000410110126300} {\bibfield  {journal} {\bibinfo
  {journal} {J. Phys. France}\ }\textbf {\bibinfo {volume} {41}},\ \bibinfo
  {pages} {1263} (\bibinfo {year} {1980})}\BibitemShut {NoStop}%
\bibitem [{\citenamefont {Shender}(1982)}]{ref:shender1982SP}%
  \BibitemOpen
  \bibfield  {author} {\bibinfo {author} {\bibfnamefont {E.~F.}\ \bibnamefont
  {Shender}},\ }\href@noop {} {\bibfield  {journal} {\bibinfo  {journal} {Sov.
  Phys.}\ }\textbf {\bibinfo {volume} {56}},\ \bibinfo {pages} {178} (\bibinfo
  {year} {1982})}\BibitemShut {NoStop}%
\bibitem [{\citenamefont {Henley}(1989)}]{ref:henley1989PRL}%
  \BibitemOpen
  \bibfield  {author} {\bibinfo {author} {\bibfnamefont {C.~L.}\ \bibnamefont
  {Henley}},\ }\href {\doibase 10.1103/PhysRevLett.62.2056} {\bibfield
  {journal} {\bibinfo  {journal} {Phys. Rev. Lett.}\ }\textbf {\bibinfo
  {volume} {62}},\ \bibinfo {pages} {2056} (\bibinfo {year}
  {1989})}\BibitemShut {NoStop}%
\bibitem [{\citenamefont {Nussinov}\ \emph {et~al.}(2004)\citenamefont
  {Nussinov}, \citenamefont {Biskup}, \citenamefont {Chayes},\ and\
  \citenamefont {van~den Brink}}]{ref:nussinov2004EPL}%
  \BibitemOpen
  \bibfield  {author} {\bibinfo {author} {\bibfnamefont {Z.}~\bibnamefont
  {Nussinov}}, \bibinfo {author} {\bibfnamefont {M.}~\bibnamefont {Biskup}},
  \bibinfo {author} {\bibfnamefont {L.}~\bibnamefont {Chayes}}, \ and\ \bibinfo
  {author} {\bibfnamefont {J.}~\bibnamefont {van~den Brink}},\ }\href@noop {}
  {\bibfield  {journal} {\bibinfo  {journal} {Europhys. Lett.}\ }\textbf
  {\bibinfo {volume} {67}},\ \bibinfo {pages} {990} (\bibinfo {year}
  {2004})}\BibitemShut {NoStop}%
\bibitem [{\citenamefont {Trusina}\ \emph {et~al.}(2004)\citenamefont
  {Trusina}, \citenamefont {Maslov}, \citenamefont {Minnhagen},\ and\
  \citenamefont {Sneppen}}]{ref:trusinahiermeasure}%
  \BibitemOpen
  \bibfield  {author} {\bibinfo {author} {\bibfnamefont {A.}~\bibnamefont
  {Trusina}}, \bibinfo {author} {\bibfnamefont {S.}~\bibnamefont {Maslov}},
  \bibinfo {author} {\bibfnamefont {P.}~\bibnamefont {Minnhagen}}, \ and\
  \bibinfo {author} {\bibfnamefont {K.}~\bibnamefont {Sneppen}},\ }\href
  {\doibase 10.1103/PhysRevLett.92.178702} {\bibfield  {journal} {\bibinfo
  {journal} {Phys. Rev. Lett.}\ }\textbf {\bibinfo {volume} {92}},\ \bibinfo
  {pages} {178702} (\bibinfo {year} {2004})}\BibitemShut {NoStop}%
\bibitem [{\citenamefont {Mones}\ \emph {et~al.}(2012)\citenamefont {Mones},
  \citenamefont {Vicsek},\ and\ \citenamefont {Vicsek}}]{ref:moneshiermeasure}%
  \BibitemOpen
  \bibfield  {author} {\bibinfo {author} {\bibfnamefont {E.}~\bibnamefont
  {Mones}}, \bibinfo {author} {\bibfnamefont {L.}~\bibnamefont {Vicsek}}, \
  and\ \bibinfo {author} {\bibfnamefont {T.}~\bibnamefont {Vicsek}},\ }\href
  {\doibase 10.1371/journal.pone.0033799} {\bibfield  {journal} {\bibinfo
  {journal} {PLoS ONE}\ }\textbf {\bibinfo {volume} {7}},\ \bibinfo {pages}
  {e33799} (\bibinfo {year} {2012})}\BibitemShut {NoStop}%
\bibitem [{\citenamefont {Krebs}(2002)}]{ref:krebsconn}%
  \BibitemOpen
  \bibfield  {author} {\bibinfo {author} {\bibfnamefont {V.~E.}\ \bibnamefont
  {Krebs}},\ }\href@noop {} {\bibfield  {journal} {\bibinfo  {journal}
  {Connections}\ }\textbf {\bibinfo {volume} {24}},\ \bibinfo {pages} {43}
  (\bibinfo {year} {2002})}\BibitemShut {NoStop}%
\end{thebibliography}
%

\end{document}